\newtheorem{definition}{Definition}[section]
\newtheorem{theorem}{Theorem}[section]
\newtheorem{lemma}{Lemma}[section]
\newtheorem{corollary}{Corollary}[section]
\newtheorem{proposition}{Proposition}[section]
\newtheorem{remark}{Remark}[section]
\newcommand{\T}{{\rm Tr}}
\newcommand{\Rank}{{\rm rank~ }}
\newcommand{\R}{\mathbb R}
\newcommand{\bpp}{\begin{proposition}}
\newcommand{\epp}{\end{proposition}}
\newcommand{\bt}{\begin{theorem}}
\newcommand{\et}{\end{theorem}}
\newcommand{\bl}{\begin{lemma}}
\newcommand{\el}{\end{lemma}}
\newcommand{\bd}{\begin{definition}}
\newcommand{\ed}{\end{definition}}
\newcommand{\bc}{\begin{corollary}}
\newcommand{\ec}{\end{corollary}}
\newcommand{\bp}{\begin{proof}}
\newcommand{\ep}{\end{proof}}
\newcommand{\bx}{\begin{example}}
\newcommand{\ex}{\end{example}}
\newcommand{\bi}{\begin{exercise}}
\newcommand{\ei}{\end{exercise}}
\newcommand{\bo}{\begin{prop}}
\newcommand{\eo}{\end{prop}}
\newcommand{\br}{\begin{remark}}
\newcommand{\er}{\end{remark}}
\newcommand{\be}{\begin{equation}}
\newcommand{\ee}{\end{equation}}
\newcommand{\ba}{\begin{align}}
\newcommand{\ea}{\end{align}}
\newcommand{\bn}{\begin{enumerate}}
\newcommand{\en}{\end{enumerate}}
\newcommand{\bg}{\begin{align*}}
\newcommand{\bcs}{\begin{cases}}
\newcommand{\ecs}{\end{cases}}
\newcommand{\bean}{\begin{eqnarray*}}
\newcommand{\eean}{\end{eqnarray*}}
\numberwithin{equation}{section}
\begin{document}
\title{  Gagliardo-Nirenberg-Sobolev  inequalities and  ground states of Fermions in relativistic Hartree-Fock model}
\date{}
\author{
{ Yuan-da Wu\quad Xiaoyu Zeng
\quad Yimin Zhang}\thanks{
E-mail: wyuanda2021@126.com(Y. Wu); xyzeng@whut.edu.cn (X.Y. Zeng); zhangym802@126.com (Y.M. Zhang).
} \\
{\small\it  Center for Mathematical Sciences, Wuhan University of Technology,
 }\\
{\small\it  Wuhan 430070, P.R. China}\\\\
}

\maketitle

%\smallbreak

%\begin{center}{\it   }\end{center}

\vskip0.36in

%\begin{center}
%\begin{minipage}{120mm}
\begin{abstract}%{\bf Abstract}

This paper presents a rigorous mathematical analysis of the relativistic Hartree-Fock model for finite Fermi systems. We first establish an optimal Gagliardo-Nirenberg-Sobolev (GNS) inequality with Hartree-type nonlinearities for orthonormal systems and characterize the qualitative properties of its optimizers. Furthermore, we derive a finite-rank Lieb-Thirring inequality involving convolution terms
and  show that it is  the duality of the  GNS-inequality-a result that, to our knowledge, has not previously appeared in the literature.  For the relativistic Hartree-Fock model,  we prove that ground states exist if and only if the coupling parameter $K<\mathcal{K}_\infty^{(N)}$, where $\mathcal{K}_\infty^{(N)}$  is the optimal constant in the GNS-inequality.  Finally, under suitable assumptions  on the external potentials, we calculate  the precisely asymptotic behavior of ground states as $K\nearrow\mathcal{K}_\infty^{(N)}$.

\end{abstract}
\noindent {\bf MSC}: 35J20; 35J60; 35Q55

\noindent {\bf Keywords}: Gagliardo-Nirenberg-Sobolev  inequality;  Lieb-Thirring inequality; finite Fermi system; ground state; asymptotic behavior

%\end{minipage}
%\end{center}

\vskip0.6in

%\newpage
\section{Introduction}
The relativistic Fermi systems subject to the gravitational interactions, such as neutron stars and white dwarfs, could be described by the following relativistic Hartree-Fock energy functional
\begin{equation}\label{HF0}
H_e=\sum_{i=1}^N\big[(p_i^2+m_i^2)^{\frac{1}{2}}-m_i\big]+\kappa\sum_{i>j}|x_i-x_j|^{-1},
\end{equation}
where the first term describes the relativistic kinetic energy and the last term describes gravitational ($\kappa<0$) or coulomb potential ($\kappa>0$).
In \cite{lt1}, Lieb and Thirring studied  the stability of Fermi systems under gravitational interactions.  Lieb and Yau in \cite{ly1} further considered the stability of (\ref{HF0}) with $m_i=m\not =0$ for all $i$ and $\kappa<0$. they revealed the relation between the critical constant $\kappa$ of stability  and  the Chandrasekhar limit. Later, in \cite{ly}, they extended this framework by introducing nuclear-electron correlation effects, thereby providing rigorous mathematical foundations for various stability criteria of ground state. Recently, Lenzmann and Lewin \cite{ll} studied the existence of minimizer in Hartree-Fock-Bogoliubov model, which provides a reliable description of unstable nuclei, introduces a pairing density matrix to describe the phenomenon of ``Cooper pairing''. Based on this work, Nguyen in \cite{n} analyzed the asymptotic behavior and showed that when $N$ large enough, up to scaling, the solution of the limiting mean field equation corresponds to the unique optimizer of the Hardy-Littlewood-Sobolev inequality. Very recently,  Chen, Guo, Nam and Ou Yang \cite{cgnoy} generalized the existence results of \cite{ll} to the critical mass case.  For more related studies, one can  refer to \cite{bfj, flss, l3, gl, cgwz,cg, cgl,ls} and references therein.

Under suitable simplifying assumptions, the Hartree-Fock functional can be  expressed as \cite{fgl, n}:
\begin{equation}\label{HF-1}
\mathcal{E}_K(\gamma):=\T\big((\sqrt{-\Delta+m^2}+V(x))\gamma\big)-K\int_{\mathbb{R}^3}(\rho_\gamma*|x|^{-1})\rho_\gamma dx,
\end{equation}
where $0\leq \gamma=\gamma^*$ is a compact operator in $L^2(\R^3;\mathbb{C})$ with $\Rank\gamma\leq N$ for some $N\in\mathbb{N}$  and  $\T(\gamma)<\infty$. $ \rho_\gamma\in L^1(\R^3)$ denotes the density associated with $\gamma$.
$V(x)\ge 0$ is a trapping potential and  the pseudo-differential operator $\sqrt{-\Delta+m^2}$ describes the kinetic energy of a fermion with mass $m>0$.   In general, the ground states of the relativistic Hartree-Fock energy functional can be obtained by solving the following constrained minimization problem:
 \begin{equation}\label{min-1}
\bar{E}_K(N)=\inf\Big\{\mathcal{E}_K(\gamma)\big|\ 0\le\gamma=\gamma^*\leq1,\  \T(\gamma)=N      \Big\}.
\end{equation}

By the spectral theory  a positive compact operator $\gamma$ in  $L^2(\mathbb{R}^3;\mathbb{C})$  can be diagonalized by
\begin{equation*}
\gamma=\sum_{i=1}^{\infty}n_i|u_i\rangle\langle u_i|,
\end{equation*}
where $n_i$ are non-negative constants and $\{u_i\}_{i=1}^\infty$ is the orthonormal basis of $L^2(\mathbb{R}^3;\mathbb{C})$.  As a consequence, 
\begin{equation*}
\text{Tr}(\sqrt{-\Delta+m^2}\gamma)=\sum_{i=1}^{\infty}n_i \|(-\Delta+m^2)^{\frac{1}{4}}u_i\|_2^2, \ \ \rho_\gamma=\sum_{i=1}^\infty n_i|u_i|^2 \text{ and }\T(\gamma)=\int_{\mathbb{R}^3}\rho_\gamma dx=\sum_{i=1}^{\infty}n_i.
\end{equation*}
We say that a compact self-adjoint operator $\gamma\geq0$ belongs to the $q$-trace class $\mathcal{S}^q$ ($1\le q\le \infty$), provided that 
\begin{equation*}
    \|\gamma\|_{\mathcal{S}^q}:=\begin{cases}
        (\sum_{i=1}^{\infty}n_i^q)^{\frac{1}{q}}<\infty ,\quad &\text{ if } 1\le q<\infty,\\
        \max_{i}n_i<\infty,\quad &\text{ if } q=\infty.
    \end{cases}
\end{equation*}
In particular, when $q=\infty$, we would denote $\|\cdot\|_{\mathcal{S}^\infty}$ as $\|\cdot\|$ for brevity in the whole paper.

To rigorously investigate minimization problems of the form \eqref{min-1} and related topics, it is essential to balance the kinetic energy  $\T(\gamma)$ and the interaction energy $\int_{\mathbb{R}^3}(\rho_\gamma*|x|^{-1})\rho_\gamma dx$.  For this purpose, we generalize the problem by   replacing the gravitational interaction $|x|^{-1}$ with a Riesz potential $|x|^{-\alpha}$, where $\alpha\in(0,2)$. We then study the  following  optimal Gagliardo-Nirenberg-Sobolev (GNS) inequality for  $\gamma\in\mathcal{S}^q$:
\begin{equation}\label{1.1}
\mathcal{K}_{\alpha,q}^{(N)}:=\inf_{
\begin{subarray}{c}
 \Rank\gamma\in [1,N],
\gamma\ge 0
\end{subarray}} \frac{\|\gamma\|_{\mathcal{S}^q}^{\frac{2-\alpha}{\alpha}
}\T (\sqrt{-\Delta}\gamma)}{\big(\int_{\mathbb{R}^3}\rho_\gamma(|x|^{-\alpha}*\rho_\gamma) dx\big)^{1/\alpha}}.
\end{equation}
We first remark that by applying the classical  Hardy-Littlewood-Sobolev inequality and the  Daubechies inequality \cite{liel}, which states that there exists $C>0$ such that
\begin{equation*}
\text{Tr}((-\Delta)^{\frac{1}{2}}\gamma) \ge C\int_{\mathbb{R}^3}\rho_\gamma^{\frac{4}{3}}dx,
\end{equation*}
one can deduce that $\mathcal{K}_{\alpha,q}^{(N)}>0$ is well-defined for each $N$. It is worth pointing out that some similar GNS-inequalities in a local setting have been established in two celebrated papers \cite{gln,fgl}.  Gontier,  Lewin and  Nazar in \cite{gln} studied the following  minimization problem in $L^2(\R^d)$  for dimensions $d\geq1$:
\begin{equation*}
J(N)=\inf\Big\{\T(-\Delta\gamma)-\frac{1}{p}\int_{\mathbb{R}^d}\rho_\gamma^p dx,\  0\leq\gamma=\gamma^*\leq1,\ \T(\gamma)=N \Big \},
\end{equation*}
where $1<p<\min\{2, 1+2/d\}$ is a mass-subcritical exponent. By applying a modified concentration compactness principle together with some refined estimates, they proved that there exists an  increasing sequence of integers
$N_1=1<N_2=2<N_3<\ldots<N_j<\ldots$,
for which the {\em binding inequality} (or strict subadditivity condition) holds for $J(N)$ with $N=N_j$. Consequently,  $J(N)$ admits a minimizer of the form $\gamma=\sum_{i=1}^{N}|u_i\rangle\langle u_i|$. In particular,  they proved that $\{u_i\}_{i=1}^N$ solves the following fermionic Nonlinear Schr\"odinger equations under {\em orthonormal conditions}
\begin{equation}\label{re1}
\Big[-\Delta- \Big(\sum_{n=1}^N |u_n|^2\Big)^{p-1}\Big]u_i=\mu_i u_i,\quad (u_i,u_j)_{L^2}=\delta_{ij},\quad  i,j=1,2,...,N,
\end{equation}
and $\{u_i\}_{i=1}^N$ is also an optimizer for the following GNS-inequality with orthonormal conditions:
\begin{equation*}
\mathcal{C}_{p,d}^{(N)}\Big(\int_{\mathbb{R}^d}\Big(\sum_{n=1}^N|u_n(x)|^2 \Big)^pdx \Big)^{\frac{2}{d(p-1)}}\le N^{\frac{2}{d(p-1)-1}} \sum_{n=1}^N\int_{\mathbb{R}^d}|\nabla u_n|^2dx, \ \forall u_i\in H^1(\mathbb{R}^d),\ (u_i,u_j)_{L^2}=\delta_{ij}.
\end{equation*}
Subsequently, Frank, Gontier and Lewin in \cite{fgl} further explored some similar problem, where the mass-critical exponent $p=1+ 2/d$ is particularly involved.  They  proved that  the best constant  $\mathcal{G}_{p,d}^{(N)}$  for the  following GNS-inequality 
\begin{equation}\label{GNS-Local}
\mathcal{G}_{p,d}^{(N)}\|\rho_\gamma\|_{L^p(\R^d)}^{\frac{2p}{d(p-1)}}\le \|\gamma\|_{\mathcal{S}^q}^\frac{p(2-d)+d}{d(p-1)}\T (-\Delta\gamma),\  \ \forall\ 0< \gamma=\gamma^* \text{ and  }\Rank \gamma\leq N
\end{equation}
is attained by some $\gamma=\sum_{i=1}^Rn_i|u_i\rangle\langle u_i|$ with $R\leq N$, where 
\begin{equation}\label{eq1.08}
\begin{split}
1\leq p\leq 1+\frac{2}{d}, \ \text{ and }\ q:=\begin{cases}
\frac{2p+d-pd}{2+d-pd}, & \text{ if }1\leq p< 1+\frac{2}{d},\\
+\infty, &\text{ if } p= 1+\frac{2}{d}.
\end{cases}
\end{split}
\end{equation}
Moreover, $\{u_i\}_{i=1}^N$ solves an orthonormal system similar to \eqref{re1}. Specially, it has been proved in \cite{fgl} that the inequality  \eqref{GNS-Local} is indeed  dual to the well-known finite rank
 {\em Lieb-Thirring  inequality} and  the quantitative estimates were obtained:
\begin{equation*}
\mathcal{G}_{p,d}^{(N)}(\mathcal{T}_{\kappa, d}^{(N)})^{\frac{2}{d}}=\Big(\frac{\kappa}{\kappa+\frac{d}{2}}\Big)^{\frac{2\kappa}{d}}\Big(\frac{d}{2\kappa+d}\Big) \text{ with } \kappa:=\frac{p}{p-1}-\frac{d}{2},
\end{equation*}
where $\mathcal{T}_{\kappa, d}^{(N)}>0$  is the best constant for the following finite rank Lieb-Thirring inequality
\begin{equation*}
\sum_{n=1}^N|\lambda_n(-\Delta +V)|^{\kappa}\le \mathcal{T}_{\kappa, d}^{(N)}\int_{\mathbb{R}^d}V(x)_-^{\kappa+\frac{d}{2}}dx, \ \text{for all }V\in L^{\kappa+\frac{d}{2}}(\mathbb{R}^d).
\end{equation*}
Here, $a_-=\max\{0,-a\}$ and $\lambda_n(-\Delta +V)< 0$ denotes the $n$-th negative eigenvalue of $-\Delta+V$ in $L^2(\mathbb{R}^d)$. 

The Lieb-Thirring inequality is a  famous inequality in literature of  mathematical physics, which has been extensively studied.  Lieb and Thirring \cite{lt0, lt} proved that if $\kappa>\frac12$ in $d=1$, or $\kappa>0$ in $d\geq2$, then there holds that  $$\mathcal{T}_{\kappa,d}:=\lim_{N\to\infty}\mathcal{T}_{\kappa,d}^{(N)}<\infty.$$
This estimate were further obtained in the critical cases for $\kappa=\frac12$ in $d=1$, and $\kappa=0$ in $d\geq3$  by \cite{c,lie0,r,we}, respectively. Moreover, to determine  the precise value of $\mathcal{T}_{\kappa,d}$ is a central issue in Density Functional Theory. One can refer to \cite{fgl,dll,lw}  for the recent progress on this aspect. 
 Very recently, Ilyin, Laptev and Zelik \cite{ilz} generalized this inequality to the bounded  domain and gave the sharp constant   when  the domain is sphere or torus.   

Motivated by the heuristic work of \cite{gln,fgl}, in this manuscript, we investigate the relativistic Hartree-type GNS-equality \eqref{1.1}. Specifically, we show that the $\mathcal{K}_{\alpha,q}^{(N)}$ can   be attained for all $q\in[1,+\infty]$ and $\alpha$ is even allowed to belong to  {\em mass-superciritical} regime.  More importantly, Through some variational arguments, we derive a  Lieb-Thirring type inequality involving convolution potentials, which has not been observed in the literature, to the best of our knowledge.  Our first main result addresses the achievement of  equality \eqref{1.1}  as follows:
\begin{theorem}\label{th1}
Let $1 \le N< \infty$, $0< \alpha< 2$ and $1\le q\le\frac{2-\alpha}{(1-\alpha)_+}$, where  $$\frac{2-\alpha}{(1-\alpha)_+}:=\begin{cases}
    \frac{2-\alpha}{1-\alpha},& \rm {for  }\ \ 0<\alpha<1, \\
    \infty, & \rm{for  }\ \ 1\le \alpha <2.
\end{cases}$$
Then, the best constant $\mathcal{K}_{\alpha,q}^{(N)}>0$  in   \eqref{1.1}
can be attained by some positive operator $\gamma$ with $1\leq {\rm rank~}\gamma\le N$.  Moreover, we have 
\begin{itemize}
    \item [\rm(i)] When $1\leq q<\frac{2-\alpha}{(1-\alpha)_+}$, then up to scaling,  every minimizing sequence for (\ref{1.1}) is  compact.  
    \item [\rm(ii)] Let $H_{\alpha,\gamma}:=\sqrt{-\Delta}-\frac{2}{\alpha}\rho_\gamma*|x|^{-\alpha}$, then $\gamma$ has the following explicit expression
\begin{equation*}
\gamma=\sum_{i=1}^Rk_i|u_{n_i} \rangle\langle u_{n_i}|,\ \ \text{for}\ k_i>0,\quad i=1,2,...,R\leq N,
\end{equation*}
where  $(\mu_{n_i},u_{n_i})$ with    $\mu_{n_i}\leq 0$ and $(u_{n_i},u_{n_j})_{L^2}=\delta_{ij}\ (i,j=1,\cdots R)$ are  eigen-pairs of $H_{\alpha,\gamma}$. Especially, if $R<N$, then $H_{\alpha,\gamma}$ has at most $R$ negative eigenvalues.
\item[ \rm(iii)] When $1< q<\frac{2-\alpha}{(1-\alpha)_+}$ or $\alpha=1$ and $q=\infty$, then $\mu_{n_i}<0$ for all $1\leq i\leq R$, and  $k_i$ can be expressed as 
\begin{equation*}
k_i= \begin{cases}
\frac{2-\alpha}{\alpha}(\sum_{k=1}^R|\mu_{n_k}|^{\frac{q}{q-1}})^{-1}|\mu_{n_i}|^{\frac{1}{q-1}}, &0<\alpha<2,\ 1<q<\frac{2-\alpha}{(1-\alpha)_+},\\
\frac{2-\alpha}{\alpha}(\sum_{k=1}^R|\mu_{n_k}|)^{-1}, &\alpha=1,\ q=\infty.
\end{cases}
\end{equation*}
In particular, $\mu_{n_i}(1\le i\leq R)$ are the first $R$  negative eigenvalues of $H_{\alpha,\gamma}$ provided that either $0<\alpha\le 1$, or $1<\alpha<2$ and  $\Rank \gamma<N$. 

\item[ \rm(iv)] 
If $0<\alpha\le1$, $q$ is close to $\frac{2-\alpha}{(1-\alpha)_+}$ enough, there exists a sequence $\{N_i\}_{i=1}^\infty$ with $\lim_{i\to\infty}N_i=\infty$, such that
\begin{equation*}
\mathcal{K}_{\alpha,q}^{(N_i)}<\mathcal{K}_{\alpha,q}^{(N_{i}-1)},\quad \forall\ i\ge 1.
\end{equation*}
Especially, there holds that 
\begin{equation}\label{eq1.8}
\mathcal{K}_{\alpha,q}^{(2N)}<\mathcal{K}_{\alpha,q}^{(N)},\  \ \forall \ N\in \mathbb{N}^+.
\end{equation}
\end{itemize}
\end{theorem}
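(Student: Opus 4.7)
The overall strategy combines the scale invariance of the quotient in \eqref{1.1} under $u_i(x)\mapsto\lambda^{3/2}u_i(\lambda x)$ with a concentration-compactness analysis of the densities $\rho_{\gamma_n}$, and then extracts all structural information from the Euler-Lagrange equations. I would first normalize a minimizing sequence $\gamma_n=\sum_{i=1}^{R_n}n_i^{(n)}|u_i^{(n)}\rangle\langle u_i^{(n)}|$ (with $R_n\le N$, so up to a subsequence $R_n\equiv R$) by fixing $\int_{\R^3}\rho_{\gamma_n}(|x|^{-\alpha}*\rho_{\gamma_n})\,dx=1$. The Daubechies inequality recalled in the introduction, combined with Hardy-Littlewood-Sobolev, then yields uniform $H^{1/2}$ bounds on each $u_i^{(n)}$ and, in the strictly subcritical regime $q<(2-\alpha)/(1-\alpha)_+$, a uniform bound on $\|\gamma_n\|_{\mathcal{S}^q}$.

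For parts (i) and (ii), vanishing of $\rho_{\gamma_n}$ is incompatible with the HLS-driven normalization, so after a suitable translation $\rho_{\gamma_n}\rightharpoonup\rho_*\not\equiv 0$ and one extracts a weak trace-class limit $\gamma$. Dichotomy is ruled out in the subcritical regime by the strict concavity of $x\mapsto x^{1/\alpha}$ against the superadditivity of the kinetic and $\mathcal{S}^q$ terms under orthogonal splitting. Combining weak lower semicontinuity of the kinetic and Schatten norms with strong $L^p_\loc$ convergence of $\rho_{\gamma_n}$ (hence uniform convergence of $\rho_{\gamma_n}*|x|^{-\alpha}$ on bounded sets) shows that $\gamma$ attains the infimum. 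Varying each eigenmode of $\gamma$ then produces the Euler-Lagrange system $H_{\alpha,\gamma}u_{n_i}=\mu_{n_i}u_{n_i}$ with $\mu_{n_i}\le 0$; when $R<N$, inserting an additional negative-energy eigenfunction of $H_{\alpha,\gamma}$ and rebalancing scales would strictly lower the quotient, giving the asserted bound on the negative spectrum.

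Part (iii) is a finite-dimensional optimization: with the orthonormal family $\{u_{n_i}\}$ from (ii) frozen, I would vary only the occupation numbers $k_i>0$. A Lagrange-multiplier computation using $\partial_{k_i}\|\gamma\|_{\mathcal{S}^q}^q=q k_i^{q-1}$, $\partial_{k_i}\T(\sqrt{-\Delta}\gamma)=\langle u_{n_i},\sqrt{-\Delta}u_{n_i}\rangle$ and the Euler-Lagrange identity $\mu_{n_i}=\langle u_{n_i},\sqrt{-\Delta}u_{n_i}\rangle-\tfrac{2}{\alpha}\langle u_{n_i},(\rho_\gamma*|x|^{-\alpha})u_{n_i}\rangle$ produces the stated closed-form expressions; the case $\alpha=1$, $q=\infty$ is the limiting plateau where all $k_i$ must coincide. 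That the $\mu_{n_i}$ are the first $R$ negative eigenvalues of $H_{\alpha,\gamma}$ follows from a mode-swap: replacing an occupied mode by a lower-energy eigenfunction strictly decreases the quotient, which is admissible when either $\alpha\le 1$ (so the Schatten rearrangement is monotone in the correct direction) or $R<N$ (the new mode can be appended without violating the rank constraint).

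The main obstacle is part (iv). I plan to take an optimizer $\gamma_*$ of $\mathcal{K}_{\alpha,q}^{(N)}$ and compare it with the doubled test operator $\gamma_\tau=\gamma_*+U_\tau\gamma_* U_\tau^*$, where $U_\tau$ denotes translation by a large vector $\tau\in\R^3$. For $|\tau|\to\infty$ the two blocks become nearly orthogonal, so $\|\gamma_\tau\|_{\mathcal{S}^q}^q$ and $\T(\sqrt{-\Delta}\gamma_\tau)$ are asymptotically additive while the interaction energy picks up a strictly positive cross-term of order $|\tau|^{-\alpha}$. Substituting into the scale-invariant quotient, the ratio $F(\gamma_\tau)/\mathcal{K}_{\alpha,q}^{(N)}$ carries a Schatten-loss factor whose exponent of $2$ equals $(2-\alpha+q(\alpha-1))/(\alpha q)$, which vanishes exactly at the critical exponent $q=(2-\alpha)/(1-\alpha)_+$; this is why the result is restricted to $\alpha\le 1$ (for $\alpha>1$ the sign at $q=\infty$ is wrong). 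The delicate step is then to balance $|\tau|$: it must be large enough that the two blocks are nearly orthogonal in $\mathcal{S}^q$ (only algebraic decay of the eigenfunctions of $\sqrt{-\Delta}$-type operators is available, so the overlap is polynomial in $|\tau|^{-1}$) yet not so large that the cross-term gain $|\tau|^{-\alpha}$ is swallowed by the Schatten loss. Choosing $|\tau|=|\tau(q)|$ to optimize this trade-off yields \eqref{eq1.8} for $q$ sufficiently close to the critical exponent, and iterating the doubling produces the sequence $\{N_i\}_{i=1}^{\infty}$.
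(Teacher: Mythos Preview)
Your architecture matches the paper's throughout: concentration--compactness with IMS localization for existence and compactness, Euler--Lagrange variations (including unitary conjugation $\gamma(t)=e^{itH}\gamma e^{-itH}$ to get the commutator relation) for (ii)--(iii), and a translated doubling trial state for (iv). Two points deserve correction or sharpening.

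First, your dichotomy exclusion via ``strict concavity of $x\mapsto x^{1/\alpha}$'' is wrong as stated: for $\alpha\le 1$ that map is convex, and neither the kinetic nor the $\mathcal{S}^q$ term is superadditive under splitting (both are merely additive). The paper handles this by a case split. For $0<\alpha<1$ it applies the elementary inequality $a^\alpha c^\theta+b^\alpha d^\theta\le(a+b)^\alpha(c+d)^\theta$ with $\theta=(2-\alpha)/q$, which is valid precisely when $\alpha+\theta\ge 1$, i.e.\ when $q\le(2-\alpha)/(1-\alpha)$; this is where the upper bound on $q$ enters. For $1\le\alpha<2$ it instead uses superadditivity of $x\mapsto x^\alpha$ on the kinetic term together with the trivial bound $\|\eta_n\gamma_n\eta_n\|_{\mathcal{S}^q}\le 1$ inherited from the normalization. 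The strict compactness in (i) then comes from a further H\"older step showing the exponent $(2-\alpha+q\alpha)/q>1$ forces $\|\eta_n\gamma_n\eta_n\|_{\mathcal{S}^q}\to 0$.

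Second, for (iv) the paper Gram--Schmidt orthonormalizes the $2N$ translated functions so that $\T(\gamma_{\bar R}^q)=2\T(\gamma^q)$ holds exactly, pushing all overlap errors into the kinetic and Hartree terms; with your raw sum $\gamma_*+U_\tau\gamma_*U_\tau^*$ you would also have to perturb the eigenvalues to control $\|\gamma_\tau\|_{\mathcal{S}^q}$. Either way the quantitative input is the pointwise bound $c(1+|x|)^{-4}\le u_1(x)$ and $|u_j(x)|\le C(1+|x|)^{-4}$, which the paper proves separately in the appendix from regularity and comparison with the resolvent kernel of $\sqrt{-\Delta}+|\mu|$; you should plan on establishing this. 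Your cross-term estimate $\sim|\tau|^{-\alpha}$ is in fact sharper than the paper's $\bar R^{-(5+\alpha)}$ lower bound and would make the final balance against the Schatten-loss factor $2^{(2-\alpha)/q+\alpha-1}$ easier.
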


The proof of the above theorem is primarily inspired by the arguments presented in \cite{fgl}. However, in our setting, the inclusion of the nonlocal operator $\sqrt{-\Delta}$ and Hartree nonlinearity introduces substantial difficulties in deriving the modified concentration-compactness principle. Another challenge in addressing inequality (\ref{1.1}) arises from the regularity theory for the fractional Laplacian. %Unlike the classical Laplacian operator, standard regular elliptic theory, such as local maximum estimates or exponential decay at infinity, cannot be directly applied to derive desired  regularity for the solution in our case. We overcome it by  exploiting  the Sobolev embedding theorem and some technical operator estimates. 
In general,  $(-\Delta)^s$ with $s=\frac{1}{2}$ is  called  ultra-relativistic Schr\"odinger operator, which  is a threshold from the aspect of regularity theory. That is, for a linear equation
\begin{equation*}
(-\Delta)^su+Vu=0, \text{ in }\mathbb{R}^d, \ d\geq1,
\end{equation*}
suppose  $u\in L^\infty(\mathbb{R}^d)$, if $s> \frac{1}{2}$ and $V(x)\in L^\infty(\mathbb{R}^d)$ one can show that $u\in C^{1,\beta}(\mathbb{R}^d)$ with some $\beta\in(0,1)$. However,   if $s\le \frac{1}{2}$, to obtain similar estimates one needs to further assume that  $V(x)\in C^{0,\gamma}(\mathbb{R}^d)$, for some $\gamma>1-2s$. We also emphasize that in $\mathbb{R}^3$, $\alpha=1$ ($\alpha=2$) corresponds to the mass-critical (Sobolev) exponent, therefore, our GNS-inequalities are  very general  and the parameters $q$ is  flexible. Specifically, our results cover the range $q\in [1,\frac{2-\alpha}{(1-\alpha)_+}]$ for all $0<\alpha<2$, whereas in \cite{fgl}, the paramter $p$ is assumed to at most  equal to the the mass critical exponent $ 1+\frac{2}{d}$, and $q$ takes the  specific values in  \eqref{eq1.08}.

\begin{remark}
\begin{itemize}
    \item [\rm (a)]Since   $\mathcal{K}_{\alpha,q}^{(N)}$ is non-increasing w.r.t. $N$, it then follows from the Hardy-Littlewood-Sobolev-inequality and Daubechies inequality that for the special case of  $q=1$, there holds
\begin{equation*}
\mathcal{K}_{\alpha,1}:=\lim_{N\to\infty}\mathcal{K}_{\alpha,1}^{(N)}=\inf_{N\ge 1} \mathcal{K}_{\alpha,1}^{(N)}>0.
\end{equation*}
Indeed, one can show that $\mathcal{K}_{\alpha,1}$ is actually  the best constant of the inequality
\begin{equation*}
\mathcal{K}_{\alpha,1}\Big(\int_{\mathbb{R}^3}({|x|^{-\alpha}}*\rho_\gamma)\rho_\gamma dx\Big)^{1/\alpha}\le \|\gamma\|_{\mathcal{S}^1}^{\frac{2-\alpha}{\alpha}}\rm Tr (\sqrt{-\Delta}\gamma)\  \text{ for all }0\leq \gamma \in \mathcal{S}^1.
\end{equation*}
%for all positive self-adjoint compact operator $\gamma$ in $L^2(\R^3;\mathbb{C})$. 
\item[\rm(b)]When $N=1$, (\ref{1.1}) degenerates to the following classical Hartree-type Gagliardo-Nirenberg inequality in $H^\frac{1}{2}(\mathbb{R}^3)$
\begin{equation*}
\mathcal{K}_{\alpha,q}^{(1)}\Big(\int_{\mathbb{R}^3}({|x|^{-\alpha}}*|u|^2)|u|^2 dx\Big)^{1/\alpha}\le \|u\|_2^{\frac{4-2\alpha}{\alpha}}\|(-\Delta)^{\frac{1}{4}}u\|_2^2, \ \forall\ u\in H^\frac{1}{2}(\mathbb{R}^3).
\end{equation*}
\end{itemize}

%When $N\to \infty$, from Theorem \ref{th1}, we know that if the inequality (\ref{1.1}) without rank constraint, the equality will not attain by a finite rank operator. At the meantime, for $0<\alpha\le 1$ and $1\le q\le \frac{2-\alpha}{(1-\alpha)_+}$, we obtain from Theorem \ref{thb} that
%\begin{equation*}
%\mathcal{K}_{\alpha,q}^{(N)}=\frac{(\T(\sqrt{-\Delta}\gamma))^{1-\alpha}}{2\|\gamma\|_{\mathcal{S}^q}^{\frac{2-\alpha}{\alpha}}},
%\end{equation*}
%where $\gamma$ is an optimizer of (\ref{1.1}). 
\end{remark}

We next establish a type of finite rank Lieb-Thirring inequality containing convolutions, which is dual to the GNS-inequality \eqref{1.1}. Our result can be stated as follows.
\begin{theorem}\label{dual} \rm{(Duality).}
        Let $1\le N<\infty$, $0<\alpha<2$ and $1<q\le \infty$,  then
we have  the following optimal  Hartree type Lieb-Thirring inequality:
\begin{equation}\label{re3}
   \sum_{n=1}^N|\lambda_n(\sqrt{-\Delta}+V(x)*|x|^{-\alpha})|^{q'}\le \mathcal{L}_{\alpha,q'}^{(N)}\Big(\int_{\mathbb{R}^3}(V_-(x)*|x|^{-\alpha})V_-(x)dx\Big)^\frac{q'}{2-\alpha}
\end{equation}
holds for all $V(x)$ satisfying $\int_{\mathbb{R}^3}(|V(x)|*|x|^{-\alpha})|V(x)|dx< \infty$, where  $q'=\frac{q}{q-1}$, $\mathcal{L}_{\alpha,q'}^{(N)}<\infty$  is the best constant, and $\lambda_n(\sqrt{-\Delta}+V(x)*|x|^{-\alpha})$ is the $n$-th negative eigenvalue of $\sqrt{-\Delta}+V(x)*|x|^{-\alpha}$ when it exists and $0$ otherwise.
Moreover,  the following identity holds
\begin{equation*}
     \mathcal{K}_{\alpha, q}^{(N)}(\mathcal{L}_{\alpha,q'}^{(N)})^{\frac{(2-\alpha)(q-1)}{\alpha q}}=\frac{\alpha}{2}\Big(\frac{2-\alpha}{2}\Big)^{\frac{2-\alpha}{\alpha}},
\end{equation*}
where $\mathcal{K}_{\alpha, q}^{(N)}$ is the best constant in (\ref{1.1}). 
\end{theorem}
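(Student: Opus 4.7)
The strategy is a Legendre-transform duality between the GNS inequality \eqref{1.1} of Theorem~\ref{th1} and the sought Lieb-Thirring inequality \eqref{re3}, in the spirit of Frank-Gontier-Lewin \cite{fgl}. A preliminary reduction is to assume $V\le 0$: since $V*|x|^{-\alpha}=V_+*|x|^{-\alpha}-V_-*|x|^{-\alpha}$ by linearity of convolution, and multiplication by the nonnegative function $V_+*|x|^{-\alpha}\ge0$ only raises the spectrum of $H:=\sqrt{-\Delta}+V*|x|^{-\alpha}$, the min-max principle gives $|\lambda_n(H)|\le|\lambda_n(\sqrt{-\Delta}-V_-*|x|^{-\alpha})|$, while the right-hand side of \eqref{re3} depends only on $V_-$. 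So I assume $V=-V_-\le0$ throughout.

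For any admissible $0\le\gamma$ of rank at most $N$, positive-definiteness of the Riesz kernel $|x|^{-\alpha}$ (valid for $0<\alpha<3$) together with Cauchy-Schwarz gives, writing $D(f):=\int\!\!\int f(x)f(y)|x-y|^{-\alpha}dxdy$,
\[
\int_{\mathbb{R}^3}(V_-*|x|^{-\alpha})\rho_\gamma\,dx\ \le\ D(V_-)^{1/2}D(\rho_\gamma)^{1/2}.
\]
Combining this with the GNS bound $D(\rho_\gamma)^{1/\alpha}\le(\mathcal{K}_{\alpha,q}^{(N)})^{-1}\|\gamma\|_{\mathcal{S}^q}^{(2-\alpha)/\alpha}\T(\sqrt{-\Delta}\gamma)$ and setting $t:=\T(\sqrt{-\Delta}\gamma)$, $C:=D(V_-)^{1/2}(\mathcal{K}_{\alpha,q}^{(N)})^{-\alpha/2}$ yields $\T(H\gamma)\ge t-C\|\gamma\|_{\mathcal{S}^q}^{(2-\alpha)/2}t^{\alpha/2}$. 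A one-variable minimization in $t>0$ (critical point at $t^{1-\alpha/2}=\tfrac{C\alpha}{2}\|\gamma\|_{\mathcal{S}^q}^{(2-\alpha)/2}$) produces the key lower bound
\[
\T(H\gamma)\ \ge\ -\tfrac{2-\alpha}{\alpha}\Bigl(\tfrac{C\alpha}{2}\Bigr)^{2/(2-\alpha)}\|\gamma\|_{\mathcal{S}^q}.
\]

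To close the loop, let $\lambda_1\le\cdots\le\lambda_R<0$ be the first $R\le N$ negative eigenvalues of $H$ with normalized eigenfunctions $u_i$, and plug in the trial density matrix $\gamma:=\sum_{i=1}^R|\lambda_i|^{q'-1}|u_i\rangle\langle u_i|$ (the convention $|\lambda_i|^0=1$ turning this into the rank-$R$ spectral projection when $q=\infty$, $q'=1$). The identity $q(q'-1)=q'$ gives simultaneously $\T(H\gamma)=-\sum_i|\lambda_i|^{q'}$ and $\|\gamma\|_{\mathcal{S}^q}=(\sum_i|\lambda_i|^{q'})^{1/q}$ (interpreted as $\|\gamma\|=1$ when $q=\infty$). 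Substituting into the preceding bound and rearranging yields \eqref{re3} with
\[
\mathcal{L}_{\alpha,q'}^{(N)}=\Bigl[\tfrac{2-\alpha}{\alpha}\Bigr]^{q'}\Bigl(\tfrac{\alpha}{2}\Bigr)^{2q'/(2-\alpha)}\bigl(\mathcal{K}_{\alpha,q}^{(N)}\bigr)^{-q'\alpha/(2-\alpha)},
\]
and an elementary exponent check, using $(q-1)/q=1/q'$, confirms $\mathcal{K}_{\alpha,q}^{(N)}(\mathcal{L}_{\alpha,q'}^{(N)})^{(2-\alpha)(q-1)/(\alpha q)}=(\alpha/2)((2-\alpha)/2)^{(2-\alpha)/\alpha}$. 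Sharpness of $\mathcal{L}_{\alpha,q'}^{(N)}$ is obtained by running the argument backward: if $\gamma_*$ attains \eqref{1.1} with eigenpairs $(\mu_{n_i},u_{n_i})$ as in Theorem~\ref{th1}(ii), then the choice $V:=-\tfrac{2}{\alpha}\rho_{\gamma_*}$ makes $H$ coincide with $H_{\alpha,\gamma_*}$, so the induced trial matrix becomes proportional to $\gamma_*$ (the weights in Theorem~\ref{th1}(iii) match $|\mu_{n_i}|^{q'-1}$) and both the Cauchy-Schwarz and GNS steps are simultaneously tight.

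The principal technical obstacle is the spectral setup for $H$: one must verify that $\sqrt{-\Delta}-V_-*|x|^{-\alpha}$ extends to a self-adjoint operator on $H^{1/2}(\mathbb{R}^3)$ with essential spectrum $[0,\infty)$, so that its negative spectrum is genuinely a (possibly finite) discrete sequence to which the trial-matrix construction applies. Under the hypothesis $D(V_-)<\infty$, this should follow from form-smallness of $V_-*|x|^{-\alpha}$ relative to $\sqrt{-\Delta}$ via Hardy-Littlewood-Sobolev, together with compactness of the resolvent difference; a density argument approximating $V_-$ by smooth compactly supported functions then handles general $V_-$. A secondary subtlety at the endpoint $q=\infty$ is that $\|\gamma\|$ is the operator norm rather than a trace-power sum, but the trial construction above yields $\|\gamma\|=1$ and the argument passes through unchanged.
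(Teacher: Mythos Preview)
Your forward direction (upper bound on $\mathcal{L}_{\alpha,q'}^{(N)}$) is correct and is the same Legendre-transform argument as the paper's, merely reparametrized: you optimize over the kinetic energy $t=\T(\sqrt{-\Delta}\gamma)$ while the paper optimizes over the interaction energy $H_\alpha=D(\rho_\gamma)$; since GNS ties these together, the two optimizations are equivalent and yield the identical constant.

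For sharpness the two proofs diverge. The paper runs the \emph{reverse} duality: for an arbitrary admissible $\gamma$ it sets $V=-\beta\rho_\gamma$, uses the Schatten H\"older inequality $\T(H\gamma)\ge-\|\gamma\|_{\mathcal{S}^q}(\sum_{n\le N}|\lambda_n(H)|^{q'})^{1/q'}$ together with the Lieb--Thirring bound itself, and then optimizes over $\beta$ to recover a GNS inequality whose constant is expressed through $\mathcal{L}_{\alpha,q'}^{(N)}$; comparing with $\mathcal{K}_{\alpha,q}^{(N)}$ forces the lower bound. You instead exhibit an explicit equality case by plugging in the GNS optimizer $\gamma_*$ and $V=-\tfrac{2}{\alpha}\rho_{\gamma_*}$. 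Your route is pleasingly concrete and does verify that Cauchy--Schwarz, GNS and the $t$-optimization are simultaneously saturated, but it leans on the \emph{existence} and explicit eigenvalue structure of $\gamma_*$ from Theorem~\ref{th1}(ii)--(iii). Those results are only proved for $1<q\le\frac{2-\alpha}{(1-\alpha)_+}$, whereas Theorem~\ref{dual} is stated for all $1<q\le\infty$; in particular for $0<\alpha<1$ and $q>\frac{2-\alpha}{1-\alpha}$ you have no optimizer to invoke. The paper's reverse-duality argument needs only that $\mathcal{K}_{\alpha,q}^{(N)}$ is the infimum in \eqref{1.1}, so it covers the full range without this caveat. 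A secondary point: your claim that ``the induced trial matrix becomes proportional to $\gamma_*$'' uses that the $\mu_{n_i}$ are precisely the first $R$ negative eigenvalues, which Theorem~\ref{th1}(iii) asserts only for $0<\alpha\le1$ or $\operatorname{rank}\gamma_*<N$; in the remaining case one still gets $\sum_{n\le N}|\lambda_n|^{q'}\ge\sum_i|\mu_{n_i}|^{q'}$ by majorization, so sharpness survives, but the stated reason is not quite accurate there.
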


\begin{remark}
The classical Lieb-Thirring inequality for $\sqrt{-\Delta}$ states that for $V(x)\in L^{q'+3}(\mathbb{R}^3)$, then
\begin{equation}\label{pr}
\sum_{n=1}^N|\lambda_n(\sqrt{-\Delta} +V)|^{q'}\le \tilde {\mathcal{L}}_{\alpha,q'}^{(N)}\int_{\mathbb{R}^3}V(x)_-^{q'+3}dx.
\end{equation}
If substituting $V(x)*|x|^{-\alpha}$ into \eqref{pr}, we arrive at
\begin{equation}\label{re4}
\sum_{n=1}^N\Big|\lambda_n(\sqrt{-\Delta}+V(x)*|x|^{-\alpha})\Big|^{q'}\le \tilde {\mathcal{L}}_{\alpha,q'}^{(N)}\int_{\mathbb{R}^3}\Big(V(x)_-*|x|^{-\alpha}\Big)^{q'+3}dx.
\end{equation}
Unfortunately, we cannot finger out whether the inequality (\ref{re3}) or (\ref{re4}) is superior. 
\end{remark}

In what  follows, we are  concerned with the existence  and quantitative properties of minimizers for the relativistic Hartree-Fock energy functional.
  %It is known from  the Aufbau principle \cite{m}   that in the ground state of an atom or ion, electrons fill the atomic orbital of the lowest available energy level before occupying higher-energy levels. Based on this principle, 
 Instead of studying (\ref{min-1}) directly, we restrict  ourself to the case of  finite particles for simplicity. We consider the following constrained  minimizing problem  \begin{equation}\label{mini-N}
E_K(N)=\inf\Big\{\mathcal{E}_K(\gamma)\big| \gamma\in \Gamma      \Big\},
\end{equation}
where the constrained manifold $\Gamma$ is defined as
\begin{equation*}
\Gamma:=\Big \{\gamma=\sum_{i=1}^r|u_i\rangle \langle u_i|,\  \forall u_i\in \mathcal{H},\ (u_i,u_j)_{L^2}=\delta_{ij},\ \forall 1\le i,j\le r, 0<{\rm rank~}\gamma=r\le N  \Big \}.
\end{equation*}
Here  to  ensure all terms in \eqref{HF-1} make sense for $V(x)\ge 0$, we introduce  the inner  space  
\begin{equation*}
\mathcal{H}:=\Big \{u\in H^{\frac{1}{2}}(\mathbb{R}^3), \int_{\mathbb{R}^3}V(x)|u|^2dx< \infty \Big \}.
\end{equation*}

 Before stating our main results, we first define ground states of a Fermi system with potentials according to the Aufbau principle.
\begin{definition} (Ground state) In a $L$-Fermi system,  $(u_1,u_2,...,u_L)\in L^2(\mathbb{R}^3;\mathbb{C}^L)$ is called a ground state of the the following system, if  $(u_i,u_j)_{L^2}=\delta_{ij}$ for all $i,j=1,2,\cdots,L$ and  
\begin{equation}\label{1.3}
H_V u_i:=\Big(\sqrt{-\Delta+m^2}+V(x)-2K\big(\sum_{k=1}^L u_k^2\big)*|x|^{-1}\Big)u_i=\mu_i u_i, \ i=1,2,\cdots, L,
\end{equation}
 where $\mu_1<\mu_2\le...\le\mu_L$ are the first-$L$ eigenvalues of the operator $H_V$.
\end{definition}

We now state our results on the minimization problem \eqref{mini-N},  and  show that there exists a threshold for the existence of minimizers, which is indeed  the ground states of the system \eqref{1.3}.

\begin{theorem}\label{th2}
Let $N\in\mathbb{N}^+$ be fixed and $\mathcal{K}_\infty^{(N)}:=\mathcal{K}_{1,\infty}^{(N)}$ be the best constant given by (\ref{1.1}). Assume that $V(x)$ is a trapping potential satisfying
\begin{equation}\label{1.4}
0\le V(x)\in C^{1}(\mathbb{R}^3),\quad \lim_{|x|\to\infty}V(x)=\infty \ \text{ and }\ \inf_{x\in\mathbb{R}^3}V(x)=0.
\end{equation}
Then
\begin{itemize}
    \item [\rm(i)]For $K\in \big(0,\mathcal{K}_\infty^{(N)}\big)$, problem (\ref{mini-N}) has at least one minimizer $\gamma_K=\sum_{i=1}^r|u_i\rangle\langle u_i|$,  where  $(u_1,u_2,...,u_r)$ is a ground state of $r$-Fermi system (\ref{1.3}).

    \item[\rm(ii)] When $K$ is close to $ 0^+$, then (\ref{mini-N}) has a unique minimizer $\gamma_K$ and $\Rank \gamma_K=1$; when $K$ is close to $(\mathcal{K}_\infty^{(N)})^-$, then  any minimizer $\gamma_K$ for (\ref{mini-N}) satisfies $\Rank \gamma_K\ge [\frac{N}{2}]+1$.  

    \item[\rm(iii)]For $K\in \big[\mathcal{K}_\infty^{(N)},\infty\big)$, there is no minimizer for the problem (\ref{mini-N}).
\end{itemize}

\end{theorem}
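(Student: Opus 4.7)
My plan is to use the sharp GNS inequality of Theorem 1.1 with $(\alpha,q)=(1,\infty)$ as the main analytic tool. Since every $\gamma\in\Gamma$ is a rank-$r$ projection with $r\leq N$, we have $\|\gamma\|_{\mathcal{S}^\infty}=1$, and combined with the operator inequality $\sqrt{-\Delta+m^2}\geq\sqrt{-\Delta}$ this yields the basic coercive bound
\begin{equation*}
\mathcal{E}_K(\gamma)\;\geq\;\Big(1-\tfrac{K}{\mathcal{K}_\infty^{(N)}}\Big)\T(\sqrt{-\Delta}\,\gamma)+\T(V\gamma),
\end{equation*}
which is nonnegative for $K\leq\mathcal{K}_\infty^{(N)}$ and coercive in $H^{1/2}$ for $K<\mathcal{K}_\infty^{(N)}$. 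For (i), this bound applied to a minimizing sequence $\gamma_n=\sum_{i=1}^{r}|u_i^n\rangle\langle u_i^n|$ (with $r=r_n$ fixed along a subsequence) controls $\|u_i^n\|_{H^{1/2}}+\int V|u_i^n|^2$ uniformly; the compact embedding $\mathcal{H}\hookrightarrow L^2(\mathbb{R}^3)$ following from the trapping hypothesis \eqref{1.4} then delivers strong $L^2$ and weak $H^{1/2}$ limits $u_i$ satisfying $(u_i,u_j)_{L^2}=\delta_{ij}$. The kinetic and potential parts are weakly lower semicontinuous, and the Hartree term is continuous under strong $L^2$ convergence plus uniform $H^{1/2}$ bounds via Hardy-Littlewood-Sobolev, so $\gamma:=\sum_i|u_i\rangle\langle u_i|$ is a minimizer. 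Standard variational arguments then produce a Hermitian Lagrange multiplier matrix $\Lambda$ on $\mathrm{span}(u_i)$; diagonalizing $\Lambda$ by a unitary rotation of the $u_i$'s leaves $\gamma$ invariant and exhibits the new basis as eigenfunctions of $H_V$, and an Aufbau exchange argument forces these to realize the first $r$ eigenvalues.

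For (iii), I scale the GNS optimizer $\gamma_0$ from Theorem 1.1 to construct cheap test states. Choosing $x_0$ with $V(x_0)=\inf V=0$ and setting $u_i^\lambda(x)=\lambda^{3/2}u_i(\lambda(x-x_0))$, both $\T(\sqrt{-\Delta}\gamma_\lambda)$ and the Hartree energy scale as $\lambda$, while the dispersion correction $\T((\sqrt{-\Delta+m^2}-\sqrt{-\Delta})\gamma_\lambda)$ vanishes as $\lambda\to\infty$ by dominated convergence applied to $\sqrt{\lambda^2|\eta|^2+m^2}-\lambda|\eta|\leq m$ in the Fourier representation, and $\T(V\gamma_\lambda)\to 0$ by continuity of $V$ at $x_0$ together with a compact-support approximation of the $u_i$. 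Hence
\begin{equation*}
\mathcal{E}_K(\gamma_\lambda)=\lambda\Big(1-\tfrac{K}{\mathcal{K}_\infty^{(N)}}\Big)\T(\sqrt{-\Delta}\gamma_0)+o(\lambda),
\end{equation*}
so $E_K(N)=-\infty$ for $K>\mathcal{K}_\infty^{(N)}$. At $K=\mathcal{K}_\infty^{(N)}$ the same family yields $E_K(N)\leq 0$, matching the coercive lower bound, so $E_K(N)=0$; attainment would then force $\T(V\gamma)=0$ and $\T((\sqrt{-\Delta+m^2}-\sqrt{-\Delta})\gamma)=0$, which is incompatible with $\gamma\in\Gamma$ nonzero and $m>0$.

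For (ii), at $K=0$ the operator $\sqrt{-\Delta+m^2}+V$ has spectrum contained in $[m,\infty)$ with simple ground state $u_1^0$, so the unique minimizer over $\Gamma$ is $|u_1^0\rangle\langle u_1^0|$ (higher-rank candidates add only strictly positive eigenvalues). The compactness argument of (i) together with lower semicontinuity of rank then forces $\Rank\gamma_K=1$ for small $K$, and local uniqueness comes from the implicit function theorem applied to the rank-one nonlinear eigenvalue equation near the isolated simple pair $(\mu_1^0,u_1^0)$. For $K$ near the critical value, I sharpen the a priori bound using the operator inequality $\sqrt{-\Delta+m^2}\geq\theta\sqrt{-\Delta}+(1-\theta)m$ combined with the rank-$r$ GNS constant $\mathcal{K}_\infty^{(r)}$; optimizing $\theta=K/\mathcal{K}_\infty^{(r)}$ yields $\mathcal{E}_K(\gamma)\geq mr\,(1-K/\mathcal{K}_\infty^{(r)})$. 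Theorem 1.1(iv) provides the strict drop $\mathcal{K}_\infty^{([N/2])}>\mathcal{K}_\infty^{(N)}$, so this lower bound stays strictly away from $0$ as $K\nearrow\mathcal{K}_\infty^{(N)}$ uniformly over $r\leq[N/2]$. Since the test family from (iii) forces $E_K(N)\to 0$ in the same limit, any minimizer for $K$ close enough to $\mathcal{K}_\infty^{(N)}$ must have rank $>[N/2]$.

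The main obstacle will be the compactness step in (i): the nonlocality of $\sqrt{-\Delta}$, sitting at the fractional regularity threshold $s=\tfrac12$ noted after Theorem 1.1, makes the passage from weak $H^{1/2}$ bounds to strong continuity of the Hartree nonlinearity within an orthonormal system delicate, and the trapping-induced Rellich embedding for $\mathcal{H}$ must be established carefully since standard $H^1$-based arguments do not transfer. A secondary technical point is making the Aufbau swap argument fully rigorous inside the rank-constrained variational problem.
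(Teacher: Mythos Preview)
Your approach follows the same skeleton as the paper's: the coercive GNS bound, the compact embedding $\mathcal{H}\hookrightarrow L^2$ and lower semicontinuity for (i); the scaled GNS optimizer as a test family driving $E_K(N)\to 0$ (and to $-\infty$ for supercritical $K$) for (iii); and the strict drop $\mathcal{K}_\infty^{([N/2])}>\mathcal{K}_\infty^{(N)}$ from Theorem~\ref{th1}(iv) combined with $E_K(N)\to 0$ for the rank lower bound in (ii). Three points of comparison are worth recording.

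First, your nonexistence argument at $K=\mathcal{K}_\infty^{(N)}$ is cleaner than the paper's. You observe that a minimizer would force $\T\big((\sqrt{-\Delta+m^2}-\sqrt{-\Delta})\gamma\big)=0$, impossible for $\gamma\neq 0$ and $m>0$. The paper instead argues that a minimizer would satisfy $\int V\rho_\gamma=0$ (hence $\rho_\gamma$ compactly supported) while simultaneously being a GNS optimizer with $\rho_\gamma>0$ everywhere; the positivity step is asserted without proof there, so your route avoids a gap.

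Second, in building the scaled test family you invoke a ``compact-support approximation of the $u_i$'' but do not mention that the cutoff destroys orthonormality. The paper handles this with an explicit Gram-matrix orthogonalization, using the polynomial decay of the $Q_i$ (Appendix) to show the correction is $O(\bar R^{-5})$. You should add this step; without it your $\gamma_\lambda$ need not lie in $\Gamma$.

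Third, for rank~$1$ at small $K$ the paper uses a direct energy comparison rather than your compactness/stability argument: the rank-$1$ trial gives $E_K(N)\le\mu_V+O(K)$, while any rank $r\ge 2$ candidate satisfies $\mathcal{E}_K(\gamma)\ge 2\mu_V+O(K)$, immediately contradictory. Your argument also works but is less elementary; for uniqueness the paper simply cites \cite{gzz}.
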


The primary challenge in establishing the nonexistence result stems from the technical selection of test functions, particularly due to the identical scaling rates between the Hartree term and the fractional operator term  as well as maintaining the orthonormality. For brevity, we only  focus  on the case of $q=\infty$. Nevertheless, our arguments can be  extend to finite but sufficiently large $q$, which, however, requires much more careful calculations. We also note that, in the context of bosonic systems, numerous studies have addressed the existence and asymptotic behavior of ground states. A comprehensive review of these works falls outside the scope of this paper. Interested readers may refer to \cite{bc, ly1,ly, gs, es, fjl, gz,gz1,yy} and references therein for further details. 

%Moreover, the construction of specialized test functions for deriving the upper bound of $E_K(N)$  induce  more intricateness than its Bosonic counterpart. This complexity arises from the dual requirements that the test functions must not only approximate the minimizing energy precisely, but also maintain orthonormality.

%\begin{remark}
  %When $\alpha=1$, equation \eqref{mini-N} is a mass-critical problem, which arises naturally in various physical theories and has profound mathematical implications. For the subcritical ($\alpha\in(0,1)$) and supercritical ($\alpha\in(1,2)$) regimes,  \eqref{mini-N} exhibits distinct behaviors.
%Specifically, in the subcritical case, we can show that \eqref{mini-N} admits minimizers for all $K\in\mathbb{R}$. However, when in the supercritical case, no minimizers exist for any $K>0$. 

%\end{remark}

%\begin{remark}

%In our problem, we consider our problem where not only must the number of particles be less than $N$, but their wave functions must also be orthonormal. Therefore, our goal is to determine the optimal distribution and the optimal number of particles (not exceeding $N$) in the space, which is why we do not impose constraints on a specific number of particles.
%\end{remark}

Based on the existence and non-existence results in Theorem \ref{th2}, we next study  the asymptotic behavior of minimizers for \eqref{mini-N} as $K\nearrow \mathcal{K}_\infty^{(N)}$.
\begin{theorem}\label{th3}
Assume that $V(x)$ satisfies (\ref{1.4}), and  denote 
\begin{equation}\label{eq-1.15}
    \Lambda:=\big\{x\in\mathbb{R}^3: V(x)=0\big\}.
\end{equation}
Let 
$\gamma_k:=\gamma_{K_k}=\sum_{i=1}^{r} |u_i^k\rangle \langle u_i^k|$  be a minimizer of (\ref{mini-N}) for each $K_k\nearrow \mathcal{K}_\infty^{(N)}$, as $k\to \infty$, where   $[\frac{N}{2}]+1\le r\le N$ is an integer. Then, 
\begin{equation}\label{eq-eps}
\varepsilon_k:=[{\rm Tr} (\sqrt{-\Delta}\gamma_{k})]^{-1} \to 0^+\ \text{ as } K_k\nearrow \mathcal{K}_\infty^{(N)}
\end{equation}
and up to subsequence, there hold
\begin{equation}\label{eq-gamma0}
\tilde \gamma_k=\sum_{i=1}^r|\varepsilon_k^{\frac{3}{2}} u_i^{k}(\varepsilon_k x+z_k)\rangle\langle \varepsilon_k^{\frac{3}{2}} u_i^{k}(\varepsilon_k x+z_k) |\overset{\star}\rightharpoonup \gamma  \quad \text{in}\ \mathcal{S}^{1}, \ \text{ as } K_k\nearrow \mathcal{K}_\infty^{(N)}
\end{equation}
and 
\begin{equation}\label{eq-zero}
\lim_{k\to\infty}\T(\sqrt{-\Delta}\tilde \gamma_k)=\T(\sqrt{-\Delta}\gamma),\quad \lim_{k\to\infty}\int_{\mathbb{R}^3}(\rho_{\tilde \gamma_k}*|x|^{-1})\rho_{\tilde \gamma_k}dx=\int_{\mathbb{R}^3}(\rho_{ \gamma}*|x|^{-1})\rho_{ \gamma}dx,
\end{equation}
where $\lim_{k\to\infty}z_k=z_0\in \Lambda$ and  $\gamma=\sum_{i=1}^R |Q_i\rangle \langle Q_i|$  ($R\le r$) is an optimizer of (\ref{1.1}). 
Moreover, if $r=R$, then $\tilde \gamma_k\overset{k}\to \gamma$ in $\mathcal{S}^1\cap\mathcal{S}^\infty$ and 
\begin{equation}\label{eq-zero2}
\varepsilon_k^\frac{3}{2} u_i^{k}(\varepsilon_k x+z_k)\overset{k}\to Q_i(x)\ \text{ strongly in }\ H^{\frac{1}{2}}(\mathbb{R}^3),  \   \text{ for all } \ i=1,2,\cdots,R.
\end{equation}

\end{theorem}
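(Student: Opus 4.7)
The plan is a blow-up and concentration-compactness analysis in three stages: first pin down the asymptotics of $E_K(N)$ near $\mathcal{K}_\infty^{(N)}$ and deduce $\vr_k\to 0$; then, after rescaling, extract a nontrivial weak limit and identify it as a GNS-optimizer; finally, upgrade to strong convergence when the rank is preserved. For the first stage, note that for any $\ga\in\Gamma$ we have $\|\ga\|=1$, so combining $\sqrt{-\Delta+m^2}\ge\sqrt{-\Delta}$ with the GNS inequality \eqref{1.1} at $\al=1$, $q=\infty$ yields
\begin{equation*}
\mathcal{E}_K(\ga)\ge\Big(1-\frac{K}{\mathcal{K}_\infty^{(N)}}\Big)\T(\sqrt{-\Delta}\ga)+\int_{\R^3}V\rho_\ga\,dx\ge 0.
\end{equation*}
For a matching upper bound I will plug into $\mathcal{E}_K$ the rescaled trial state $\ti u_i(x)=t^{3/2}Q_i(t(x-z))$ built from any GNS-optimizer $\sum_{i=1}^R|Q_i\ra\lan Q_i|$ centered at some $z\in\Lambda$; letting $t\to\infty$ with $t(\mathcal{K}_\infty^{(N)}-K)\to 0$ and using continuity of $V$ at $z$ gives $E_K(N)\le mN+o(1)$. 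If $\vr_k\not\to 0$ along a subsequence, then $\T(\sqrt{-\Delta}\ga_k)$ stays bounded; the lower bound above together with the trapping condition \eqref{1.4} provides tightness of $\rho_{\ga_k}$, so a weak-$\star$ limit $\ga_*\in\SR^1$ would minimize $\mathcal{E}_{\mathcal{K}_\infty^{(N)}}$, contradicting Theorem \ref{th2}(iii).

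In the second stage, I set $\ti u_i^k(x)=\vr_k^{3/2}u_i^k(\vr_k x+z_k)$ with $z_k$ to be chosen. Orthonormality is preserved and $\T(\sqrt{-\Delta}\ti\ga_k)=1$ by construction, and a change of variables in $\mathcal{E}_{K_k}(\ga_k)$ yields
\begin{equation*}
\vr_k\mathcal{E}_{K_k}(\ga_k)=\T(\sqrt{-\Delta+m^2\vr_k^2}\,\ti\ga_k)+\vr_k\!\int_{\R^3}\!V(\vr_k x+z_k)\rho_{\ti\ga_k}\,dx-K_k\!\int_{\R^3}\!\rho_{\ti\ga_k}(|x|^{-1}*\rho_{\ti\ga_k})\,dx,
\end{equation*}
whose left-hand side tends to $0$. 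To rule out vanishing of $\rho_{\ti\ga_k}$ (which would kill the Hartree term and contradict the kinetic normalization), I will adapt Lions' concentration lemma to an $H^{1/2}$ orthonormal family: choosing $z_k$ at a center of nonvanishing yields a weak-$\star$ limit $\ga\ne 0$ in $\SR^1$ with $\|\ga\|\le 1$. Passing to the limit in the displayed identity, using weak lower semicontinuity of the kinetic term and a Brezis-Lieb argument for the Hartree term, I obtain $\T(\sqrt{-\Delta}\ga)=\mathcal{K}_\infty^{(N)}\int\rho_\ga(|x|^{-1}*\rho_\ga)\,dx$; the GNS inequality then forces $\|\ga\|=1$ and $\ga$ to attain \eqref{1.1} with $R:=\Rank\ga\le r$. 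Meanwhile $\int V(\vr_k x+z_k)\rho_{\ti\ga_k}\le C$, combined with the nontrivial mass of $\rho_{\ti\ga_k}$ near the origin and the coercivity of $V$ in \eqref{1.4}, shows that $(z_k)$ is bounded and $z_k\to z_0\in\Lambda$. Saturation of the GNS inequality in the limit then upgrades the weak lower semicontinuity estimates to actual convergence, yielding \eqref{eq-zero}.

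In the third stage, when $r=R$ one has $\T(\ti\ga_k)=R=\T(\ga)$; weak-$\star$ convergence in $\SR^1$ together with trace-norm convergence yields $\SR^1$-norm convergence (via an operator Brezis-Lieb-type argument), and the uniform rank bound upgrades this to $\SR^{\infty}$. Spectral stability of finite-rank self-adjoint operators then provides a labelling of the rescaled eigenfunctions $\ti u_i^k$ with $\ti u_i^k\to Q_i$ in $L^2$, and the kinetic-energy convergence from \eqref{eq-zero} upgrades these limits to strong $H^{1/2}$-convergence, proving \eqref{eq-zero2}. The principal obstacle will be the second stage: designing a concentration-compactness dichotomy for orthonormal fermionic systems that simultaneously handles the nonlocal kinetic operator $\sqrt{-\Delta}$ and the critical Hartree nonlinearity $|x|^{-1}$, and rules out vanishing using only the normalizations $\T(\sqrt{-\Delta}\ti\ga_k)=1$ and $\T(\ti\ga_k)=r\le N$; this requires fractional Sobolev interpolation estimates for $\rho_{\ti\ga_k}$ that go beyond the scalar Lions lemma and properly exploit the orthonormal structure.
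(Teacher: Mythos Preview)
Your overall three-stage plan mirrors the paper's, but two concrete gaps would stop the argument from going through.

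First, the upper bound $E_K(N)\le mN+o(1)$ is too crude. To conclude $z_0\in\Lambda$ you need $\int_{\R^3}V\rho_{\gamma_k}\,dx\to 0$, and since the lower bound only yields $\int V\rho_{\gamma_k}\le E_{K_k}(N)$, this forces $E_{K_k}(N)\to 0$. Your trial computation with $\sqrt{-\Delta+m^2}\le\sqrt{-\Delta}+m$ leaves the residual $mR$; the paper instead uses the operator inequality $\sqrt{-\Delta+m^2/t^2}\le\sqrt{-\Delta}+\tfrac{m^2}{2t^2}(-\Delta)^{-1/2}$ together with the polynomial decay of the $Q_i$ (Appendix) to kill the mass contribution and obtain $E_K(N)=O\big((\mathcal{K}_\infty^{(N)}-K)^{1/2}\big)\to 0$.

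Second, and more seriously, the sentence ``saturation of the GNS inequality in the limit then upgrades the weak lower semicontinuity estimates to actual convergence'' hides a missing dichotomy-exclusion step. After the Brezis--Lieb splitting you indeed get that $\gamma$ is an optimizer \emph{and} that the remainder satisfies $\T\big(\sqrt{-\Delta}(\tilde\gamma_k-\gamma)\big)-K_k\int(\rho_{\tilde\gamma_k-\gamma}*|x|^{-1})\rho_{\tilde\gamma_k-\gamma}\to 0$; but nothing yet prevents the remainder from carrying a second profile escaping to infinity. The paper rules this out by a rank-counting argument: if the remainder does not vanish, a second weak limit $\tilde\gamma$ (after a further translation) is again an optimizer of \eqref{1.1} with $\|\tilde\gamma\|=1$, and Theorem~\ref{th1}(iv) (namely $\mathcal{K}_\infty^{(2N)}<\mathcal{K}_\infty^{(N)}$) forces $\Rank\tilde\gamma>N/2$; then $r=\T(\tilde\gamma_k)\ge\T(\gamma)+\T(\tilde\gamma)>N$, a contradiction. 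This is the key structural input you have not identified. Incidentally, the obstacle you flag at the end is not one: vanishing of $\tilde\rho_k$ is ruled out by the \emph{scalar} Lions lemma applied to $\sqrt{\tilde\rho_k}$, which lies in a bounded set of $H^{1/2}$ by the Hoffmann--Ostenhof inequality $\T(\sqrt{-\Delta}\tilde\gamma_k)\ge\|(-\Delta)^{1/4}\sqrt{\tilde\rho_k}\|_2^2$; no orthonormal refinement is needed there.
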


\begin{remark}
From (iv) in Theorem \ref{th1},  we can deduce that there exits   an increasing sequence $\{ N_i\}_{i=1}^\infty$ with $\lim_{i\to\infty}{ N_i}=\infty$ such that $\mathcal{K}_\infty^{( N_{i})}<\mathcal{K}_\infty^{( N_{i}-1)}$ for all $i=1,2,\cdots$. If we restrict $N= N_i$ in Theorem \ref{th3}, then one can see that  $r=R= N_i$. Actually, in our Theorem \ref{th4} below, we shall  show that   $r=R$  always holds for any $N\in\mathbb{N}^+$ provided  more information on $V(x)$ is given. 
\end{remark}

In what follows, we consider special potentials whose local expansions around their minimal points are known. For such potentials, we explicitly compute the energy $E_{K}(N)$ and the  blow-up rate of minimizers for \eqref{mini-N}  as $K\nearrow  \mathcal{K}_\infty^{(N)}$. More importantly,  we prove that  $r=R$ always holds for any $N\in\mathbb{N}^+$. Some of our arguments are inspired by the ideas in \cite{gzz,gwzz}.

 %we first prove that the trace of the operator tends to infinity as . We subsequently utilize it as a metric to assess individual components within the energy functional. Furthermore, if potential satisfies more explicit condition in some local domain, This is our last theorem by proposing more conditions for the potential.

\begin{theorem}\label{th4}
Let $\varepsilon_k$,  $z_0$  and $z_k$ be given by Theorem \ref{th3}. Assume that $V(x)$ satisfies (\ref{1.4}) and
\begin{equation}\label{1.5}
V(x)=h(x)\prod_{j=1}^l|x-x_j|^{p_j}\ \text{ with $0<h(x)\in C^1(\mathbb{R}^3)$  and $0<p_j<1$ for all $1\le j\le l$. }
\end{equation}
Denote $p=\max_{1\leq j\leq l}p_j$ and 
\begin{equation}\label{eq-1.17}
\mathcal{Z}=\{x_j| \iota_j=\iota\} \text{ where }\iota=\min_{1\le j\le l}\iota_j\ \text{ with }\ \iota_j=\lim_{x\to x_j}\frac{V(x)}{|x-x_j|^p}\in(0,\infty].
\end{equation}
For $\gamma\in\mathcal{S}^1$ given in Theorem \ref{th3},  set
\begin{equation}\label{eq-1.18}
    \bar\Gamma:=\{y\in\mathbb{R}^3: \int_{\mathbb{R}^3}|x+y|^p\rho_\gamma(x)dx=\bar\kappa\} \text{ with }\bar \kappa:=\inf_{y\in\mathbb{R}^3}\int_{\mathbb{R}^3}|x+y|^p\rho_\gamma(x)dx .
\end{equation}
Then, we have $R=r$ and thus \eqref{eq-zero2} holds. Specifically, there also hold
 \begin{equation}\label{eq-energy}
\begin{aligned}
E_{K_k}(N)=\big(1+o_k(1)\big)\frac{p+1}{p} (p\iota\bar\kappa)^{\frac{1}{p+1}}\Big(\int_{\mathbb{R}^3}(\rho_{\gamma}*|x|^{-1})\rho_{ \gamma}dx\Big)^{\frac{p}{p+1}}(\mathcal{K}_\infty^{(N)}-K_k)^{\frac{p}{p+1}} \ \text{as}\  K_k\nearrow \mathcal{K}_\infty^{(N)},
\end{aligned}
\end{equation}
\begin{equation}\label{eq-vas}
\varepsilon_{k}=\big(1+o_k(1)\big)\Big[(p\iota\bar \kappa)^{-1}\int_{\mathbb{R}^3}(\rho_{\gamma}*|x|^{-1})\rho_{ \gamma}dx\big(\mathcal{K}_\infty^{(N)}-K_k\big)\Big]^{\frac{1}{p+1}}\to0^+ \ \text{as}\  K_k\nearrow \mathcal{K}_\infty^{(N)}
\end{equation}
 and 
\begin{equation}\label{eq-points}
   z_0\in\mathcal{Z}\subset\Lambda \ \text{ and }\ \lim_{k\to\infty}\frac{z_k-z_0}{\varepsilon_k}=y\in \bar\Gamma.
\end{equation}
\end{theorem}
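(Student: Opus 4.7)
The plan is to combine sharp two-sided energy estimates with a spectral argument on the Euler--Lagrange system for $\gamma_k$. For the upper bound, I would pick $x_{j^*}\in\mathcal Z$ and $y_0\in\bar\Gamma$ and take the trial projection $\gamma^\tau:=\sum_{i=1}^R|Q_i^\tau\rangle\langle Q_i^\tau|$ with $Q_i^\tau(x):=\tau^{3/2}Q_i(\tau(x-x_{j^*})-y_0)$, where $\{Q_i\}$ are the orthonormal modes of the optimizer $\gamma$ from Theorem~\ref{th3}. Since $R\le r\le N$, this is admissible for \eqref{mini-N}. Using (i) the kinetic scaling $\T(\sqrt{-\Delta+m^2}\gamma^\tau)=\tau\T(\sqrt{-\Delta}\gamma)+O(\tau^{-1})$, (ii) the Hartree scaling $\int(\rho_{\gamma^\tau}*|x|^{-1})\rho_{\gamma^\tau}\,dx=\tau\int(\rho_\gamma*|x|^{-1})\rho_\gamma\,dx$, (iii) the local expansion $V(\varepsilon y+x_{j^*})\sim\iota\varepsilon^p|y|^p$ near $x_{j^*}$, and (iv) the GNS saturation $\T(\sqrt{-\Delta}\gamma)=\mathcal K_\infty^{(N)}\int(\rho_\gamma*|x|^{-1})\rho_\gamma\,dx$, I arrive at the expansion $\mathcal E_{K_k}(\gamma^\tau)=(\mathcal K_\infty^{(N)}-K_k)\tau\int(\rho_\gamma*|x|^{-1})\rho_\gamma\,dx+\iota\bar\kappa\,\tau^{-p}+\text{lower order}$, whose minimum in $\tau$ gives the RHS of \eqref{eq-energy} and identifies $\tau\sim\varepsilon_k^{-1}$ as in \eqref{eq-vas}.

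For the matching lower bound, since $\gamma_k$ is a rank-$r$ projection with $r\le N$ (so $\|\gamma_k\|_{\mathcal S^\infty}=1$), Theorem~\ref{th1} gives $T_k\ge\mathcal K_\infty^{(N)}C_k$, where $T_k:=\T(\sqrt{-\Delta}\gamma_k)=\varepsilon_k^{-1}$ and $C_k:=\int(\rho_{\gamma_k}*|x|^{-1})\rho_{\gamma_k}\,dx$. This yields $\mathcal E_{K_k}(\gamma_k)\ge(\mathcal K_\infty^{(N)}-K_k)/(\mathcal K_\infty^{(N)}\varepsilon_k)+\T(V\gamma_k)$. Changing variables to the scaled frame and exploiting the local power-law behaviour of $V$ together with uniform tail control on $\rho_{\tilde\gamma_k}\to\rho_\gamma$ from Theorem~\ref{th3} produces $\T(V\gamma_k)=(1+o(1))\,\iota_{j^*}\varepsilon_k^{p_{j^*}}\int|y+w_k|^{p_{j^*}}\rho_\gamma(y)\,dy$ with $w_k=(z_k-x_{j^*})/\varepsilon_k$, provided $z_0=x_{j^*}$. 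The case $z_0\notin\{x_1,\ldots,x_l\}$ is excluded because then $\T(V\gamma_k)\ge c>0$ would contradict the vanishing upper bound. Matching the exponent of $\mathcal K_\infty^{(N)}-K_k$ in the $\varepsilon$-optimization of $A/\varepsilon_k+B\varepsilon_k^{p_{j^*}}$ against the upper bound forces $p_{j^*}=p$, matching the leading constants forces $\iota_{j^*}=\iota$, and sharpness pins down every limit of $w_k$ in $\bar\Gamma$. This establishes \eqref{eq-energy}, \eqref{eq-vas}, and \eqref{eq-points}.

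For the identity $R=r$, the minimality comparison $\mathcal E_{K_k}(\gamma_k)\le\mathcal E_{K_k}(\gamma_k-|u_r^k\rangle\langle u_r^k|)$ yields $\mu_r^k+K_k\int(|u_r^k|^2*|x|^{-1})|u_r^k|^2\,dx\le0$, so $\mu_r^k<0$. Rescaling the Euler--Lagrange equation \eqref{1.3} gives $\tilde H_k\tilde u_i^k=\varepsilon_k\mu_i^k\tilde u_i^k$ with $\tilde H_k\to H_\infty:=\sqrt{-\Delta}-2\mathcal K_\infty^{(N)}\rho_\gamma*|y|^{-1}$, while summing yields $\sum_i\varepsilon_k\mu_i^k\to -\T(\sqrt{-\Delta}\gamma)<0$, so the rescaled eigenvalues are bounded. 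After excluding the ``vanishing'' case for $\tilde u_r^k$, one shows $\varepsilon_k\mu_i^k\to\nu_i<0$ for every $1\le i\le r$, and each $\tilde u_i^k$ converges to a genuine eigenfunction of $H_\infty$ with negative eigenvalue $\nu_i$. Since by Theorem~\ref{th1}(iii) the operator $H_\infty$ has at most $R$ negative eigenvalues, $r\le R$; combined with $R\le r$ from Theorem~\ref{th3}, we conclude $r=R$, and the strong convergence \eqref{eq-zero2} follows from the trace equality in \eqref{eq-zero}.

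The main obstacle is excluding the ``vanishing'' behaviour of the $r-R$ rescaled extra modes $\tilde u_i^k$ ($i>R$): they could a priori spread to infinity or translate away from $\mathrm{supp}\,\rho_\gamma$, making their Hartree self-interaction, kinetic contribution and cross-Hartree term all asymptotically negligible, so that the inequality $\mu_r^k+K_k\int(|u_r^k|^2*|x|^{-1})|u_r^k|^2\le 0$ would yield only $\nu_r\le0$, not $\nu_r<0$. Ruling this out requires a concentration--compactness/profile-decomposition analysis of $\{\tilde u_i^k\}$ at the scaled level, used in combination with the sharp energy lower bound and the power-law form of $V$ in \eqref{1.5} (which prevents mass from escaping without paying positive potential energy), in the spirit of \cite{gzz,gwzz}.
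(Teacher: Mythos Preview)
Your upper bound construction and the kinetic--Hartree part of the lower bound are essentially the paper's argument. The substantive difference is in how you obtain $r=R$.

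The paper does not use any spectral analysis of the rescaled Euler--Lagrange operator. Instead it packages $r=R$ together with $p_s=p$ and the boundedness of $(z_k-x_s)/\varepsilon_k$ into a single claim, proved by contradiction against the upper bound. The point is purely at the level of densities: since $\int\rho_{\tilde\gamma_k}=r$ while $\rho_{\tilde\gamma_k}\to\rho_\gamma$ locally with $\int\rho_\gamma=R$, the excess mass $r-R$ (if positive) must sit at $|x|\to\infty$ in the rescaled frame; but then in the exact identity
\[
\varepsilon_k^{-p}\int_{\mathbb R^3} V(\varepsilon_k x+z_k)\,\rho_{\tilde\gamma_k}(x)\,dx
=\varepsilon_k^{p_s-p}\int_{\mathbb R^3}\frac{V(\varepsilon_k x+z_k)}{|\varepsilon_k x+z_k-x_s|^{p_s}}\Big|x+\tfrac{z_k-x_s}{\varepsilon_k}\Big|^{p_s}\rho_{\tilde\gamma_k}(x)\,dx
\]
the weight $|x+\cdot|^{p_s}$ blows up on that escaping mass, forcing $\varepsilon_k^{-p}\T(V\gamma_k)\to\infty$, which contradicts the upper bound. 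So $r=R$ falls out immediately, and \emph{then} the sharp two-sided energy estimates and \eqref{eq-vas}--\eqref{eq-points} follow by matching.

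Your route instead tries to prove $r=R$ via convergence of each rescaled mode to a negative eigenfunction of $H_\infty$ and an eigenvalue count from Theorem~\ref{th1}(ii). This is correct in spirit but strictly more laborious, and your ``main obstacle'' (ruling out vanishing of $\tilde u_i^k$ for $i>R$) is resolved precisely by the potential-energy argument above---the very mechanism the paper uses as its \emph{direct} proof. In other words, the key idea you list at the end (``$V$ prevents mass from escaping without paying positive potential energy'') already yields $r=R$ without any spectral detour. Note also that your line ``$\T(V\gamma_k)=(1+o(1))\,\iota_{j^*}\varepsilon_k^{p_{j^*}}\int|y+w_k|^{p_{j^*}}\rho_\gamma\,dy$'' tacitly assumes $L^1$-tightness of $\rho_{\tilde\gamma_k}$, i.e.\ $r=R$, so as written your lower bound is circular; the fix is to keep $\rho_{\tilde\gamma_k}$ in the integral and read off $r=R$ from the matching, exactly as the paper does.
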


\textbf{Structure of the paper.} 
In Section \ref{Sec-2}, we prove Theorem \ref{th1} by establishing the optimal Hartree-type GNS-inequality  \eqref{1.1}  and analyzing the qualitative properties of its optimizers. Then, we  derive  its duality  in Theorem \ref{dual}.
In Section \ref{Sec-3}, based on this inequality, we show that  there exists  a  threshold to  distinguish the existence and nonexistence of ground states for problem \eqref{mini-N}, which finishes the proof of  Theorem \ref{th2}.
Section \ref{Sec-4}  is devoted to studying   the asymptotic behavior of ground states as the parameter $K$ approaches the  threshold $\mathcal{K}_\infty^{(N)}$, and we complete the proofs of Theorems \ref{th3} and  \ref{th4}.
In the Appendix, we prove that  optimizers of the GNS-inequality exhibit polynomial decay at infinity, and  some Pohozaev type identities for these optimizers are also  derived, which, to the best of our knowledge, has not been previously captured  in the literature.

%\textbf{Notation.} We employ standard notation for $L^p$ and $H^{\frac{1}{2}}$ space. Besides, we use $\|\cdot\|_{\mathcal{S}^q}$ to denote $q$-trace class operator norm, $\|\cdot\|$ to denote norm of operator and $\T(\cdot)$ to denote norm of $1$-trace class operator. Specially, we use $\|\cdot\|_{L^2\to L^2}$ to denote the norm of operator from $L^2$ to $L^2$.

\section{Existence of optimizers and dual version for Hartree type inequality} \label{Sec-2}
This section is devoted to the proof of Theorem \ref{th1}. Firstly, we show the existence of optimizer for $\mathcal{K}_{\alpha,q}^{(N)}$ by employing some variational techniques, such as the concentration compactness principle. Then, we establish some  analytic properties of   optimizers  by applying some regularity  theories.

\noindent\textit{ Proof of Theorem \ref{th1}.} \textbf{I. Existence of optimizers.} Assume that there exists a minimizing sequence $\{\gamma_n\}$ satisfying 
\begin{equation}\label{eq2.000}
    \gamma_n:=\sum_{i=1}^N a_{in}|u_{in}\rangle \langle u_{in}|,\ a_{in}\ge 0, \ {\rm rank~}\gamma_n\le N \ \text{ for all $n$},
\end{equation}
where we take $a_{in}=0$ alternatively for $i>{\rm rank~}\gamma_n$ provided  ${\rm rank~}\gamma_n<N$.  Normalizing the sequence such that 
\begin{equation}\label{eq2.0}
\text{Tr}(\sqrt{-\Delta}\gamma_n)=1,\quad \|\gamma_n\|_{\mathcal{S}^q}=1.
\end{equation}
Denote the density function $\rho_n:=\rho_{\gamma_n}$, then,
\begin{equation}\label{2.00}
\lim_{n\to\infty}\int_{\mathbb{R}^3}\big(|x|^{-\alpha}*\rho_n\big)\rho_n dx=\big(\mathcal{K}_{\alpha,q}^{(N)}\big)^{-\alpha}.
\end{equation}
From the above normalization we obtain
\begin{equation*}
\|\gamma_n\|\le\|\gamma_n\|_{\mathcal{S}^q}=1
\end{equation*}
and
\begin{equation*}
\int_{\mathbb{R}^3}\rho_n(x)dx=\text{Tr}(\gamma_n)=\sum_{i=1}^N a_{in}\le \sum_{i=1}^N\|\gamma_n\|\le N.
\end{equation*}
According to Hoffmann-Ostenhof type inequality (see e.g., Lemma 2.1 in \cite{ll}), we obtain
\begin{equation*}
1=\text{Tr}(\sqrt{-\Delta}\gamma_n)\ge \int_{\mathbb{R}^3}|(-\Delta)^{\frac{1}{4}}\sqrt{\rho_n}|^2dx.
\end{equation*}
Therefore, $\{\sqrt{\rho_n}\}_n$ is uniformly bounded in $H^{\frac{1}{2}}(\mathbb{R}^3)$. We can extract a  subsequence, still denoted by $\{\sqrt{\rho_n}\}_n$, such that
\begin{equation*}
\begin{aligned}
&\sqrt{\rho_n}\overset{n}\rightharpoonup \sqrt{\rho}\ \text{ weakly  in }\ H^\frac{1}{2}(\mathbb{R}^3),
\text{ and } \sqrt{\rho_n}\to \sqrt{\rho}\ \text{strongly in}\ L_{{\rm loc}}^s(\mathbb{R}^3)\ \text{ for }\  2\le s<3.
\end{aligned}
\end{equation*}
Next, we intend to verify that there exist $\bar R>0$, $C_{\bar R}>0$ and $\{y_n\}_{n=1}^\infty \subset \mathbb{R}^3$ such that 
\begin{equation}\label{eq2.05}
    \liminf_{n\to\infty}\int_{B_{\bar R}(y_n)}\rho_n(x) dx\geq C_{\bar R}>0.
\end{equation}
If it fails,  by  the vanishing lemma in  \cite{l}, we have
$\rho_n \overset{n}\to 0  \text{ in } L^r(\mathbb{R}^3)$, for all $\ r\in(1,3/2)$.
This indicates that 
\begin{equation*}
\int_{\mathbb{R}^3}\Big(\rho_n*{|x|^{-\alpha}}\Big)\rho_n dx\to 0\ \text{as} \ n\to \infty,
\end{equation*}
which contradicts  (\ref{2.00}).

From \eqref{eq2.05} we see that  there exists a $0\not=\sqrt\rho\in H^\frac12(\mathbb{R}^3)$ such that, up to a subsequence, 
\begin{equation}\label{eq2.06}
  \sqrt{\rho_n(\cdot-y_n)}\overset n\rightharpoonup \sqrt{\rho}\not =0 \text{ weakly in $H^{\frac{1}{2}}(\mathbb{R}^3)$}.  
\end{equation}
Moreover, we can apply the Banach-Alaoglu Theorem to deduce that $\{\gamma_n\}$ has  a weak-$\ast$ limit   in the trace class topology, i.e. $\gamma_n\overset{\star}\rightharpoonup \gamma\not =0$, and density function of $\gamma$ satisfies $\rho_\gamma=\rho$.

Since the minimization problem \eqref{1.1} is invariant up to translations, we may assume that $y_n\equiv0$ in \eqref{eq2.05}, and then deduce from \eqref{eq2.06} that  there exists a sequence $\{R_n\}_{n=1}^\infty$ with $R_n\overset{n}{\to}\infty$, such that
\begin{equation*}
\lim_{n\to\infty}\int_{|x|\le R_n}\rho_n(x)dx=\int_{\mathbb{R}^3}\rho(x)dx \ \text{ and  }\ \lim_{n\to\infty}\int_{R_n\le |x|\le 6R_n}\rho_n(x)dx=0.
\end{equation*}

Let $\chi\in C_c^\infty\big(\mathbb{R}^3,[0,1]\big)$ satisfy
\begin{equation*}
\chi(x)\equiv 1, \ \text{for}\ |x|<1;\quad \chi(x)\equiv 0,\ \text{for}\ |x|\ge 2.
\end{equation*}
Define $\chi_n(x):=\chi(\frac{x}{R_n})$ and $\eta_n(x):=\sqrt{1-\chi_n^2}$. Then,  $$\chi_n^2\rho_n\overset{n}\to \rho \text{ in $L^1(\mathbb{R}^3)\cap L^r(\mathbb{R}^3)$ for $r\in(1,3/2)$\ \ and\ \ $(|\nabla \chi_n|^2+|\nabla \eta_n|^2)\rho_n\overset{n}\to 0$ in $L^1(\mathbb{R}^3)$.} $$

Using the IMS type formula in \cite{ly, n} and Fatou's lemma for operators \cite{s}, we have
\begin{equation}\label{eq-trace1}
\begin{aligned}
\text{Tr}(\sqrt{-\Delta}\gamma_n)&\ge \text{Tr}(\sqrt{-\Delta}\chi_n\gamma_n\chi_n)+\text{Tr}(\sqrt{-\Delta}\eta_n\gamma_n\eta_n)+O(R_n^{-1})\\
&\ge \text{Tr}(\sqrt{-\Delta}\gamma)+\text{Tr}(\sqrt{-\Delta}\eta_n\gamma_n\eta_n)+o_n(1).
\end{aligned}
\end{equation}
Moreover,
\begin{equation}\label{2.1}
\begin{aligned}
&\int_{\mathbb{R}^3}\big(\rho_n*|x|^{-\alpha}\big)\rho_ndx=\int_{\mathbb{R}^3}\int_{\mathbb{R}^3}\frac{(\chi_n^2\rho_n(x)+\eta_n^2\rho_n(x))(\chi_n^2\rho_n(y)+\eta_n^2\rho_n(y))}{|x-y|^{\alpha}}dxdy\\
&=\int_{\mathbb{R}^3}\int_{\mathbb{R}^3}\frac{\chi_n^2\rho_n(x)\chi_n^2\rho_n(y)}{|x-y|^{\alpha}}dxdy+\int_{\mathbb{R}^3}\int_{\mathbb{R}^3}\frac{\eta_n^2\rho_n(x)\eta_n^2\rho_n(y)}{|x-y|^{\alpha}}dxdy+2\int_{\mathbb{R}^3}\int_{\mathbb{R}^3}\frac{\chi_n^2\rho_n(x)\eta_n^2\rho_n(y)}{|x-y|^{\alpha}}dxdy.
\end{aligned}
\end{equation}
To estimate the interaction term, we define $\tilde \chi_{n}(x):=\chi(\frac{x}{3R_n})$, $\tilde \eta_{n}(x)=\sqrt{1-\chi^2(\frac{x}{3R_n})}$,  and  divide $\eta_n(y)$ into two terms
\begin{equation*}
\eta_n^2(y)=[\eta_n^2(y)-\tilde \eta_{n}^2(y)]+\tilde \eta_n^2(y).
\end{equation*}
Inserting it into the interaction term, we get
\begin{equation}\label{2.2}
\chi_n^2(x)|x-y|^{-\alpha}\tilde \eta_{n}^2(y)\le \frac{\chi_n^2(x)\tilde \eta_{n}^2(y)}{\big(|y|-|x|\big)^{\alpha}}\le \frac{1}{R_n^{\alpha}}
\end{equation}
and
\begin{equation}\label{2.3}
\chi_n^2(x)|x-y|^{-\alpha}(\eta_n^2(y)-\tilde \eta_{n}^2(y))\le {\mathbb{I}}_{\{R_n\le|y|\le 6R_n\}}|x-y|^{-\alpha}.
\end{equation}
Taking (\ref{2.2}) and (\ref{2.3}) into consideration, we infer from Hardy-Littlewood-Sobolev inequality that
\begin{equation*}
\begin{aligned}
\int_{\mathbb{R}^3}\int_{\mathbb{R}^3}\frac{\chi_n^2\rho_n(x)\eta_n^2\rho_n(y)}{|x-y|^{\alpha}}dxdy\le& \frac{1}{R_n^{\alpha}}\int_{\mathbb{R}^3}\int_{\mathbb{R}^3}\rho_n(x)\rho_n(y)dxdy
+\int_{\mathbb{R}^3}\int_{R_n\le|y|\le 6R_n}\frac{\rho_n(x)\rho_n(y)}{|x-y|^{\alpha}}dxdy\\
\le& \frac{\|\rho_n\|_1^2}{R_n^{\alpha}}+C\|\rho_n\|_{\frac{6}{6-\alpha}}\Big(\int_{R_n\le|y|\le 6R_n}\rho_n^{\frac{6}{6-\alpha}}(y)dy\Big)^{\frac{6-\alpha}{6}}=o_n(1).
\end{aligned}
\end{equation*}
This together with \eqref{2.1} indicates that 
\begin{equation}\label{eq2.8}
\int_{\mathbb{R}^3}\big(\rho_n*|x|^{-\alpha}\big)\rho_ndx
=\int_{\mathbb{R}^3}\int_{\mathbb{R}^3}\frac{\chi_n^2\rho_n(x)\chi_n^2\rho_n(y)}{|x-y|^{\alpha}}dxdy+\int_{\mathbb{R}^3}\int_{\mathbb{R}^3}\frac{\eta_n^2\rho_n(x)\eta_n^2\rho_n(y)}{|x-y|^{\alpha}}dxdy+o_n(1).
\end{equation}
The proof for the existence of optimizers would be  divided  into two cases: Case (a)  $0<\alpha<1$ and  $1\le q\le \frac{2-\alpha}{1-\alpha}$ and Case (b) $1\le\alpha<2$ and $q\in[1,\infty]$. 

Case (a) $0<\alpha<1$ and  $1\le q\le \frac{2-\alpha}{1-\alpha}$. From the renormalization \eqref{eq2.0} 
\begin{equation}\label{eq2.4}
1=\|\gamma_n\|_{\mathcal{S}^q}^{2-\alpha}(\T(\sqrt{-\Delta}\gamma_n))^{\alpha}=\big(\T(\gamma_n^q)\big)^{\frac{2-\alpha}{q}}\big(\T(\sqrt{-\Delta}\gamma_n)\big)^{\alpha}.
\end{equation}
Recall that if $1<q<\infty$, then,
\begin{equation}\label{eq2.5}
\begin{aligned}
\T(\gamma_n^q)&=\T\Big((\chi_n^2+\eta_n^2)\gamma_n^q\Big)\ge \T(\chi_n^{2q}\gamma_n^q) +\T(\eta_n^{2q}\gamma_n^q) = \T(\chi_n^q\gamma_n^q\chi_n^q)+\T(\eta_n^q\gamma_n^q\eta_n^q)\\
&\ge \T\big((\chi_n\gamma_n\chi_n)^q\big)+\T\big((\eta_n\gamma_n\eta_n)^q\big)\ge \T(\gamma^q)+\T\big((\eta_n\gamma_n\eta_n)^q\big)+o_n(1).
\end{aligned}
\end{equation}
Moreover, since  for $\alpha+\theta\ge 1$,  there holds  that
\begin{equation*}
a^\alpha c^\theta+b^\alpha d^\theta\le (a+b)^\alpha(c+d)^\theta \text{ for  any }a,b,c,d\geq0.
\end{equation*}
Since $0<\alpha<1$ and  $1\le q\le \frac{2-\alpha}{1-\alpha}$,  we deuce from \eqref{eq-trace1} and  \eqref{eq2.5} that 
\begin{equation*}
\begin{aligned}
1&\ge \big[\T(\gamma^q)\big]^{\frac{2-\alpha}{q}}\big[\T(\sqrt{-\Delta}\gamma\big]^{\alpha}+\big[\T\big((\eta_n\gamma_n\eta_n)^q\big)\big]^{\frac{2-\alpha}{q}}\big[\T(\sqrt{-\Delta}\eta_n\gamma_n\eta_n)\big]^\alpha+o_n(1)\\
&\ge [\T(\gamma^q)]^{\frac{2-\alpha}{q}}[\T(\sqrt{-\Delta}\gamma)]^{\alpha}+(\mathcal{K}_{\alpha,q}^{(N)})^{\alpha}\int_{\mathbb{R}^3}\big((\eta_n^2\rho_n)*{|x|^{-\alpha}}\big)\eta_n^2\rho_n dx+o_n(1)\\
&\ge [\T(\gamma^q)]^{\frac{2-\alpha}{q}}[\T(\sqrt{-\Delta}\gamma)]^{\alpha}+1-(\mathcal{K}_{\alpha,q}^{(N)})^{\alpha}\int_{\mathbb{R}^3}\big((\chi_n^2\rho_n)*{|x|^{-\alpha}}\big)\chi_n^2\rho_n dx+o_n(1).\\
\end{aligned}
\end{equation*}
Here,  we have used the definition  of $\mathcal{K}_{\alpha,q}^{(N)}$  in the second inequality,  and \eqref{eq2.8} is used in the last inequality.
Rearranging the above inequality and letting $n\to \infty$, we get
\begin{equation*}
(\mathcal{K}_{\alpha,q}^{(N)})^{\alpha} \ge\frac{\|\gamma\|_{\mathcal{S}^q}^{2-\alpha}\T(\sqrt{-\Delta}\gamma)^{\alpha}}{\int_{\mathbb{R}^3}(\rho_\gamma*|x|^{-\alpha})\rho_\gamma dx}.
\end{equation*}
This indicates that  $\gamma\not=0$ is an optimizer of (\ref{1.1}).

Case (b) $1\le\alpha<2$ and $q\in[1,\infty]$. We infer from $\|\gamma_n\|_{\mathcal{S}^q}=1$ that $\|\gamma\|_{\mathcal{S}^q}\le 1$ and $\|\eta_n\gamma_n\eta_n\|_{\mathcal{S}^q}\le1$. Thus, we obtain
\begin{equation*}
\begin{aligned}
1=\Big(\text{Tr}(\sqrt{-\Delta}\gamma_n)\Big)^\alpha &\ge \Big(\text{Tr}(\sqrt{-\Delta}\gamma)\Big)^\alpha+\Big(\text{Tr}(\sqrt{-\Delta}\eta_n\gamma_n\eta_n)\Big)^\alpha+o_n(1)\\
&\ge \|\gamma\|_{\mathcal{S}^q}^{2-\alpha}\Big(\text{Tr}(\sqrt{-\Delta}\gamma)\Big)^\alpha+\|\eta_n\gamma_n\eta_n\|_{\mathcal{S}^q}^{2-\alpha}\Big(\text{Tr}(\sqrt{-\Delta}\eta_n\gamma_n\eta_n)\Big)^\alpha+o_n(1)\\
&\ge \|\gamma\|_{\mathcal{S}^q}^{2-\alpha}\Big(\text{Tr}(\sqrt{-\Delta}\gamma)\Big)^\alpha+\mathcal{K}_{\alpha,q}^{(N)}\int_{\mathbb{R}^3}\Big((\eta_n^2\rho_n)*{|x|^{-\alpha}}\Big)\eta_n^2\rho_n dx+o_n(1)\\
&\ge \|\gamma\|_{\mathcal{S}^q}^{2-\alpha}\Big(\text{Tr}(\sqrt{-\Delta}\gamma)\Big)^\alpha+1-\mathcal{K}_{\alpha,q}^{(N)}\int_{\mathbb{R}^3}\Big((\chi_n^2\rho_n)*{|x|^{-\alpha}}\Big)\chi_n^2\rho_n dx+o_n(1).\\
\end{aligned}
\end{equation*}
This indicates  $\gamma\not=0$ is an optimizer of (\ref{1.1}).

{\textbf{(i).} For $1\le q< \frac{2-\alpha}{(1-\alpha)_+}$, we prove that the minimizing sequence satisfying \eqref{eq2.0} is sequentially   compact. Specifically, let $1\leq R:=\Rank \gamma\leq N$,  we intend to show that   $\|\gamma_n-\gamma\|_{\mathcal{S}^q}\overset{n}{\to}0$,  and by  rearranging the order of $\{i\}$ if necessary, there holds that, up to a subsequence, 
\begin{equation}\label{eq-con}
    a_i:=\lim_{n\to\infty}a_{in}>0 \text{ and } u_{in}\overset{n}{\to}u_i \text{ strongly in } H^\frac12(\mathbb{R}^3) \text{ for  }1\le i\le R;\  a_i:=\lim_{n\to\infty}a_{in}=0 \text{ for }R+1\le i\le N
\end{equation}
and 
\begin{equation}\label{eq-gamma}
    \gamma=\sum_{i=1}^R a_i |u_i\rangle \langle u_i|,\ \ \text{where}\ (u_i,u_j)_{L^2}=\delta
_{ij} \ \text{ for }\ i,j=1,2,\cdots, R.
\end{equation}

From \eqref{eq-trace1}, (\ref{eq2.4}) and (\ref{eq2.5}), we then use H\"older inequality to get that 
\begin{equation}\label{eq-dic}
\begin{aligned}
1\ge& \Big(\T(\gamma^q)+\T((\eta_n\gamma_n\eta_n)^q)\Big)^{\frac{2-\alpha}{q}}\Big(\T(\sqrt{-\Delta}\gamma)+\T(\sqrt{-\Delta}\eta_n\gamma_n\eta_n)\Big)^{\alpha}+o_n(1)\\
=&\Big[\Big(\T(\gamma^q)+\T((\eta_n\gamma_n\eta_n)^q)\Big)^{\frac{2-\alpha}{2-\alpha+q\alpha}}\Big(\T(\sqrt{-\Delta}\gamma)+\T(\sqrt{-\Delta}\eta_n\gamma_n\eta_n)\Big)^{\frac{q\alpha}{2-\alpha+q\alpha}}\Big]^{\frac{2-\alpha+q\alpha}{q}}+o_n(1)\\
\geq&\Big\{\big[\T(\gamma^q)\big]^{\frac{2-\alpha}{2-\alpha+q\alpha}}\big[\T(\sqrt{-\Delta}\gamma)\big]^{\frac{q\alpha}{2-\alpha+q\alpha}}+\big[\T((\eta_n\gamma_n\eta_n)^q)\big]^{\frac{2-\alpha}{2-\alpha+q\alpha}}\big[\T(\sqrt{-\Delta}\eta_n\gamma_n\eta_n)\big]^{\frac{q\alpha}{2-\alpha+q\alpha}}\Big\}^{\frac{2-\alpha+q\alpha}{q}}+o_n(1).
\end{aligned}
\end{equation}

We claim that 
\begin{equation}\label{eq-claim1}
    \lim_{n\to\infty}\|\eta_n\gamma_n\eta_n\|_{\mathcal{S}^q}=0.
\end{equation}
 For otherwise, assume that 
\begin{equation}\label{eq-van1}
    \liminf_{n\to\infty}\|\eta_n\gamma_n\eta_n\|_{\mathcal{S}^q}>C_1>0.
\end{equation}

If $\T(\sqrt{-\Delta}\eta_n\gamma_n\eta_n)\overset{n}\to0$, we then deduce from \eqref{1.1} that 
$$\int_{\mathbb{R}^3}\int_{\mathbb{R}^3}\frac{\eta_n^2\rho_n(x)\eta_n^2\rho_n(y)}{|x-y|^{\alpha}}dxdy\overset{n}\to0.$$
It then follows from \eqref{2.00} and \eqref{eq2.8} that 
\begin{equation*}
    \lim_{n\to\infty}\int_{\mathbb{R}^3}\big(\rho_n*|x|^{-\alpha}\big)\rho_ndx=\int_{\mathbb{R}^3}\big(\rho_\gamma*|x|^{-\alpha}\big)\rho_\gamma dx=\big(\mathcal{K}_{\alpha,q}^{(N)}\big)^{-\alpha}.
\end{equation*}
Taking \eqref{eq-van1} into the first inequality in \eqref{eq-dic}, and using the GNS-inequality \eqref{1.1}, we get that, for some $C_2>0$,
\begin{equation*}
\begin{aligned}
    1&\geq \Big(\T(\gamma^q)+C_1^q\Big)^{\frac{2-\alpha}{q}}\big[\T(\sqrt{-\Delta}\gamma)\big]^{\alpha}\ge \big[\T(\gamma^q)\big]^{\frac{2-\alpha}{q}}\big[\T(\sqrt{-\Delta}\gamma)\big]^{\alpha}+C_2\\
    &\geq (\mathcal{K}_{\alpha,q}^{(N)})^{\alpha}\int_{\mathbb{R}^3}\big(\rho_\gamma*|x|^{-\alpha}\big)\rho_\gamma dx+C_2>1.
\end{aligned} 
\end{equation*}
This leads to a contradiction. 

If $\T(\sqrt{-\Delta}\eta_n\gamma_n\eta_n)>C_3>0$ for some $C_3>0$.  Since  $\frac{2-\alpha+q\alpha}{q}>1$, we  deduce from \eqref{eq-dic} and \eqref{eq-van1} that 
\begin{equation*}
\begin{aligned}
1\ge& [\T(\gamma^q)]^{\frac{2-\alpha}{q}}[\T(\sqrt{-\Delta}\gamma]^{\alpha}+[\T((\eta_n\gamma_n\eta_n)^q)]^{\frac{2-\alpha}{q}}[\T(\sqrt{-\Delta}\eta_n\gamma_n\eta_n)]^\alpha+o_n(1)\\
&+\frac{2-\alpha+q\alpha}{q}[\T(\gamma^q)]^{\frac{2-\alpha}{q}-\frac{2-\alpha}{2-\alpha+q\alpha}}[\T(\sqrt{-\Delta}\gamma)]^{\alpha-\frac{q\alpha}{2-\alpha+q\alpha}}[\T((\eta_n\gamma_n\eta_n)^q)]^{\frac{2-\alpha}{2-\alpha+q\alpha}}[\T(\sqrt{-\Delta}\eta_n\gamma_n\eta_n)]^{\frac{q\alpha}{2-\alpha+q\alpha}}\\
\ge& (\mathcal{K}_{\alpha,q}^{(N)})^{\alpha}\int_{\mathbb{R}^3}\big(\rho_\gamma*|x|^{-\alpha}\big)\rho_\gamma dx+(\mathcal{K}_{\alpha,q}^{(N)})^{\alpha}\int_{\mathbb{R}^3}\big[(\eta_n^2\rho_n)*{|x|^{-\alpha}}\big]\eta_n^2\rho_n dx+o_n(1)\\
&+\frac{2-\alpha+q\alpha}{q}[\T(\gamma^q)]^{\frac{2-\alpha}{q}-\frac{2-\alpha}{2-\alpha+q\alpha}}[\T(\sqrt{-\Delta}\gamma)]^{\alpha-\frac{q\alpha}{2-\alpha+q\alpha}}[\T((\eta_n\gamma_n\eta_n)^q)]^{\frac{2-\alpha}{2-\alpha+q\alpha}}[\T(\sqrt{-\Delta}\eta_n\gamma_n\eta_n)]^{\frac{q\alpha}{2-\alpha+q\alpha}}\\
>&1,
\end{aligned}
\end{equation*}
where we have used \eqref{2.00} and \eqref{eq2.8} in the last inequality. This also leads to a contradiction. Therefore, claim \eqref{eq-claim1} is obtained. This implies from GNS-inequality
that
\begin{equation*}
\int_{\mathbb{R}^3}\big[(\eta_n^2\rho_n)*{|x|^{-\alpha}}\big]\eta_n^2\rho_n dx=o_n(1)
\end{equation*}
and
\begin{equation*}
\int_{\mathbb{R}^3}\big(\rho_n*{|x|^{-\alpha}}\big)\rho_n dx=\int_{\mathbb{R}^3}\big(\rho_\gamma*|x|^{-\alpha}\big)\rho_\gamma dx+o_n(1).
\end{equation*}
As a consequence, we have  
\begin{equation*}
\begin{aligned}
    1=&\|\gamma_n\|_{\mathcal{S}^q}^{2-\alpha}(\T(\sqrt{-\Delta}\gamma_n))^{\alpha}\ge \|\gamma\|_{\mathcal{S}^q}^{2-\alpha}(\T(\sqrt{-\Delta}\gamma))^{\alpha}+o_n(1)\\
    \ge& (\mathcal{K}_{\alpha,q}^{(N)})^{\alpha}\int_{\mathbb{R}^3}\big(\rho_\gamma*|x|^{-\alpha}\big)\rho_\gamma dx+o_n(1)=1+o_n(1)
\end{aligned}
\end{equation*}
and
\begin{equation}\label{eq2.16}
1=\T(\gamma_n^q)=\T(\gamma^q)+o_n(1)\ \text{and}\ 1=\T(\sqrt{-\Delta}\gamma_n)=\T(\sqrt{-\Delta}\gamma)+o_n(1).
\end{equation}
It follows from Theorem 2.16 in \cite{s} that $\|\gamma_n-\gamma\|_{\mathcal{S}^q}\overset{n}\to 0$. In addition, from \eqref{eq2.000} and \eqref{eq2.0} , we  see that, up to subsequence, there holds 
\begin{equation}\label{eq2.17}
    a_i:=\lim_{n\to\infty}a_{in}\geq0 \text{ and } u_{in}\overset{n}{\rightharpoonup }u_i \text{ in } H^\frac12(\mathbb{R}^3).
\end{equation}
Thus, $\gamma^q$ can be expressed as 
\begin{equation}\label{eq2.18}
    \gamma^q=\sum_{i=1}^N a_i^q |u_i\rangle \langle u_i|.
\end{equation}
We deduce from \eqref{eq2.0}, {%\eqref{eq2.15},} 
\eqref{eq2.16} and \eqref{eq2.17} that 
\begin{equation*}
\begin{aligned}
1=\T(\gamma_n^q)=\sum_{i=1}^N a_{in}^q\stackrel{n}\to\sum_{i=1}^N a_i^q=\T(\gamma^q)
%=\sum_{k=1}^R b_k^q= \sum_{k=1}^R \sum_{i=1}^N a_i^q \langle v_k|u_i\rangle^2
\leq \sum_{i=1}^N a_i^q \langle u_i|u_i\rangle \le 1.
\end{aligned}
\end{equation*}
This implies that
\begin{equation}\label{eq2.20}
    (u_i,u_i)_{L^2}=1 \text{ provided }a_i>0, \text{ for }\ i=1,2,\cdots,N. 
\end{equation}
Recall $\Rank \gamma=\Rank \gamma^q=R$, rearranging the order of $\{i\}$ if necessary, we   obtain  that
\begin{equation}\label{eq2.19}
    a_i>0 \ \text{ and }\ u_i\not=0 \ \text{ for }\ i=1,2,\cdots,R.
\end{equation}
This combine with \eqref{eq2.18} and \eqref{eq2.20} gives that  $(u_i,u_i)_{L^2}=1$ for $i=1,2,\cdots,R$. This further implies 
\begin{equation}\label{eq2.21}
    u_{in}\overset{n}\to u_i \text{ in } L^2(\mathbb{R}^3;\mathbb{C}),\ \text{for all }i=1,2,\cdots,R.
\end{equation}
Recall that $(u_{in},u_{jn})_{L^2}=\delta_{ij}$ for all $i,j=1,2,\cdots,N,$  and $n\in\mathbb{N}^+$, we thus deduce that
\begin{equation}\label{eq2.22}
    (u_i,u_j)_{L^2}=\delta
_{ij} \ \text{ for }\ i,j=1,2,\cdots, R.
\end{equation}
We next prove that if $R<N$, then 
\begin{equation}\label{eq2.23}
    a_i=0 \ \text{for $i=R+1,\cdots, N$}.
\end{equation}
For otherwise, if $a_{R+1}>0$, then $u_{R+1}\not=0$ followed by \eqref{eq2.20}.  This together with \eqref{eq2.21} indicates that 
 $(u_i,u_{R+1})_{L^2}=0$ for $i=1,2,\cdots, R$. It then follows from \eqref{eq2.18} and \eqref{eq2.22} that $\Rank \gamma\geq R+1$, which however contradicts that $\Rank \gamma^q=R$. %\eqref{eq2.15}.
 \eqref{eq2.23} is thus proved. From \eqref{eq2.21} to \eqref{eq2.23} we see that  \eqref{eq-gamma} holds.  Moreover,  we obtain by noting from \eqref{2.00} that  
%Rearranging the order of $v_i$, we can deduce from \eqref{eq2.15} \eqref{eq2.17}, \eqref{eq2.21}-\eqref{eq2.23} that $b_i=a_i$ and $v_i=u_i$ for $1\leq i\leq R$, And then,
%$$\gamma=\sum_{i=1}^R a_i |u_i\rangle \langle u_i|,\ \ \text{where}\ (u_i,u_j)_{L^2}=\delta_{ij} \ \text{ for }\ i,j=1,2,\cdots, R.$$
%Moreover, there also holds
%$$\|\gamma_n-\gamma\|_{\mathcal{S}^q}\overset{n}{\to}0,\  \|\gamma\|_{\mathcal{S}^q}=1,$$
%$$a_{in} u_{in} \overset{n}\to a_i u_i\ \text{ in } L^2(\mathbb{R}^3), \ \text{ where }\ a_i>0 \text{ for } 1\leq i\leq R, \text{ and } a_i=0 \text{ for } R+1\leq i\leq N. $$
%As a consequence, we obtain 
It follows  from \eqref{2.00} that
\begin{equation*}
\int_{\mathbb{R}^3}\Big(\rho_n*{|x|^{-\alpha}}\Big)\rho_n dx\overset{n}\to \int_{\mathbb{R}^3}\Big(\rho_\gamma*{|x|^{-\alpha}}\Big)\rho_\gamma dx={\big(\mathcal{K}_{\alpha,q}^{(N)}\big)^{-\alpha}}.
\end{equation*}
Therefore, 
\begin{equation*}
1=[\T (\sqrt{-\Delta}\gamma_n)]^\alpha\geq[\T (\sqrt{-\Delta}\gamma)]^\alpha \ge \big(\mathcal{K}_{\alpha,q}^{(N)}\big)^{\alpha} \frac{\int_{\mathbb{R}^3}(\rho_\gamma*|x|^{-\alpha})\rho_\gamma dx}{[\T(\gamma^q)]^{2-\alpha} }= 1.
\end{equation*}
This together with \eqref{eq2.21} and \eqref{eq2.23} indicates that $ u_{in} \overset{n}\to  u_i $ in $H^\frac{1}{2}(\mathbb{R}^3)$  for $1\leq i\leq R$.  Then, \eqref{eq-con} follows by recalling \eqref{eq2.19} and \eqref{eq2.23}. %Therefore, for $1\le q<\frac{2-\alpha}{(1-\alpha)_+}$, every minimizing sequence for $K_{\alpha,q}^{(N)}$ has compact subsequence.

\textbf{II. Explicit expression of optimizers.} In this part, we will reveal  the explicit expression of the optimizer $\gamma$. Normalize an optimizer $\gamma$ such that
\begin{equation}\label{eq2.025}
\text{Tr}(\sqrt{-\Delta}\gamma)=\int_{\mathbb{R}^3}(\rho_\gamma*|x|^{-\alpha})\rho_\gamma dx=1 \ \text{ and }\ \|\gamma\|_{\mathcal{S}^q}=(\mathcal{K}_{\alpha,q}^{(N)})^\frac{\alpha}{2-\alpha}.
\end{equation}
 The proof  is divided into two cases:  case (A):  $0<\alpha<2$ and $1\le q<\frac{2-\alpha}{(1-\alpha)_+}(<\infty)$,  and case (B): $1\le \alpha<2$ and $q=\infty$.

{\em Case (A)   $0<\alpha<2$ and $1\le q<\frac{2-\alpha}{(1-\alpha)_+}$}. 
Choosing a smooth curve of operators
\begin{equation*}
0\leq \gamma^*(t)=\gamma(t)=\gamma+t\delta+o(t), \quad {\rm rank~}\gamma(t)\le N
\end{equation*}
and substituting $\gamma(t)$ into (\ref{1.1}), we get
\begin{equation}\label{2.4}
\begin{aligned}
(\mathcal{K}_{\alpha,q}^{(N)})^{\alpha}&\le \frac{\big[\text{Tr}(\gamma^q(t))\big]^{\frac{2-\alpha}{q}}[\T(\sqrt{-\Delta}\gamma(t))]^{\alpha}}{\int_{\mathbb{R}^3}\big(\rho_{\gamma(t)}*|x|^{-\alpha}\big)\rho_{\gamma(t)}dx}\\
&\le \frac{\big[\T(\gamma^q)+qt\text{Tr}(\delta\gamma^{q-1})+o(t)\big]^{\frac{2-\alpha}{q}}[\text{Tr}(\sqrt{-\Delta}\gamma)+t\text{Tr}(\sqrt{-\Delta}\delta)+o(t)]^\alpha}{\int_{\mathbb{R}^3}\int_{\mathbb{R}^3}[\rho_\gamma+t\rho_\delta+o(t)](x)[\rho_\gamma+t\rho_\delta+o(t)](y)|x-y|^{-\alpha}dxdy}\\
&= [\T(\gamma^q)]^{\frac{2-\alpha}{q}}\frac{[1+qt\frac{\text{Tr}(\delta\gamma^{q-1})}{\T(\gamma^q)}+o(t)]^{\frac{2-\alpha}{q}}[1+t\T(\sqrt{-\Delta}\delta)+o(t)]^\alpha}{1+2t\int_{\mathbb{R}^3}\int_{\mathbb{R}^3}\rho_{\gamma}(x)\rho_{\delta}(y)|x-y|^{-\alpha}dxdy+o(t)}\\
&\le (\mathcal{K}_{\alpha,q}^{(N)})^{\alpha}\Big[1+t\T\Big(\delta\big[\alpha\sqrt{-\Delta}-2\rho_\gamma*|x|^{-\alpha}+(2-\alpha)\frac{\gamma^{q-1}}{\text{Tr}(\gamma^q)}\big]\Big)+o(t)\Big],
\end{aligned}
\end{equation}
where in the denominator of the second inequality $[\rho_\gamma+t\rho_\delta+o(t)](x):=\rho_\gamma(x)+t\rho_\delta(x)+o(t)(x)$ and here $o(t)$ in integral is a function with respect to $x$ or $y$. It follows
\begin{equation}\label{eq-E}
1\le 1+t\text{Tr}\Big[\delta\Big(\alpha\sqrt{-\Delta}-2\rho_\gamma*|x|^{-\alpha}+(2-\alpha)\frac{\gamma^{q-1}}{\text{Tr}(\gamma^q)}\Big)\Big]+o(t),\quad \text{for $|t|$ small\ enough}.
\end{equation}
We deduce by taking $\gamma(t):=e^{{\rm i}tH}\gamma e^{-{\rm i}tH}=\gamma+{ \text{i}}t[H,\gamma]+o(t)$, for some self-adjoint (smooth and finite rank) operator $H$,
\begin{equation*}
\begin{aligned}
0&=\text{Tr}\Big([H,\gamma]\Big(\alpha\sqrt{-\Delta}-2\rho_\gamma*|x|^{-\alpha}+(2-\alpha)\frac{\gamma^{q-1}}{\text{Tr}(\gamma^q)}\Big)\Big)\\
&=\text{Tr}\Big(H\Big[\gamma,\alpha\sqrt{-\Delta}-2\rho_\gamma*|x|^{-\alpha}+(2-\alpha)\frac{\gamma^{q-1}}{\text{Tr}(\gamma^q)}\Big]\Big),
\end{aligned}
\end{equation*}
where $[\cdot,\cdot]$ is Lie bracket and $[a,b]=ab-ba$. It is apparent from the arbitrariness of $H$ that
\begin{equation*}
\Big[\gamma,\alpha\sqrt{-\Delta}-2\rho_\gamma*|x|^{-\alpha}+(2-\alpha)\frac{\gamma^{q-1}}{\text{Tr}(\gamma^q)}\Big]=0.
\end{equation*}
Therefore, we obtain the explicit expression of $\gamma$, which formed by  some eigenfunctions of $H_{\alpha,\gamma}:=\sqrt{-\Delta}-\frac{2}{\alpha} \rho_\gamma*|x|^{-\alpha}$,
\begin{equation*}
\gamma=\sum_{i=1}^R k_i|u_{n_i}\rangle \langle u_{n_i}|, \quad \text{for some}\ u_{n_i}\ \text{ satisfying } \ H_{\alpha,\gamma} u_{n_i}=\mu_{n_i}u_{n_i},
\end{equation*}
assuming $H_{\alpha,\gamma}$ admits at least $R$ eigenvalues, $1\le R\le N$.

It follows from selecting $\delta$ supported on the range of $\gamma$ in \eqref{eq-E} that, 
\begin{equation*}
\sqrt{-\Delta}u_{n_i}-\frac{2}{\alpha}\rho_\gamma*|x|^{-\alpha}u_{n_i}+\frac{2-\alpha}{\alpha}(\text{Tr}(\gamma^q))^{-1}\gamma^{q-1}u_{n_i}\equiv 0.
\end{equation*}
We infer that
\begin{equation*}
\mu_{n_i}=-\frac{2-\alpha}{\alpha}\frac{k_i^{q-1}}{\T(\gamma^q)}<0 \text{\  and thus \ }k_i=\big[\frac{\alpha}{2-\alpha}\T\big(\gamma^q)\big]^{\frac{1}{q-1}}|\mu_{n_i}|^{\frac{1}{q-1}}.
\end{equation*}
Then plugging the expression of $k_i$ into the identity $\T(\gamma^q)=\sum\limits_{i=1}^Rk_i^q$, we deduce that 
$$\big(\T(\gamma^q)\big)^\frac{1}{q-1}=\big(\frac{2-\alpha}{\alpha}\big)^\frac{q}{q-1}\Big(\sum\limits_{k=1}^R|\mu_{n_k}|^{\frac{q}{q-1}}\Big)^{-1}.$$ Eventually,  
 for $0<\alpha<2$ and $1< q<\frac{2-\alpha}{(1-\alpha)_+}$, we get that
\begin{equation}\label{iii3}
\gamma=\sum_{i=1}^R k_i|u_{n_i}\rangle \langle u_{n_i}|,\quad \text{ where }\quad
k_i=\frac{2-\alpha}{\alpha}\Big(\sum_{k=1}^R|\mu_{n_k}|^{\frac{q}{q-1}}\Big)^{-1}|\mu_{n_i}|^{\frac{1}{q-1}}.
\end{equation}

{\em Case (B)  $1\le \alpha<2$ and $q=\infty$}.  
Similar to Case (A), we  take a smooth curve of operators $\gamma(t)$
\begin{equation*}
0\leq\gamma^*(t)=\gamma(t)=\gamma+t\delta+o(t), \quad {\rm rank~}\gamma(t)\le N,
\end{equation*}
 where $\delta$ is chosen  such that $\|\gamma(t)\|\le \|\gamma\|$.  For $|t|$ small enough,
\begin{equation*}
\begin{aligned}
(\mathcal{K}_{\alpha,\infty}^{(N)})^{\alpha}&\le \frac{\|\gamma(t)\|^{2-\alpha}[\T(\sqrt{-\Delta}\gamma(t))]^\alpha}{\int_{\mathbb{R}^3}\big(\rho_{\gamma(t)}*|x|^{-\alpha}\big)\rho_{\gamma(t)}dx}
\le (\mathcal{K}_{\alpha,\infty}^{(N)})^{\alpha}\frac{1+\alpha t\T(\sqrt{-\Delta}\delta)+o(t)}{1+2t\int_{\mathbb{R}^3}(\rho_\gamma*|x|^{-\alpha})\rho_\delta dx+o(t)}\\
&\le (\mathcal{K}_{\alpha,\infty}^{(N)})^{\alpha}\big\{1+t\T[\delta(\alpha\sqrt{-\Delta}-2\rho_\gamma*|x|^{-\alpha})]+o(t)\big\}.
\end{aligned}
\end{equation*}
Taking $\delta={\rm i}[H,\gamma]$ for some self-adjoint and finite rank operator $H$, then we obtain that
\begin{equation*}
\Big[\gamma,\ \alpha \sqrt{-\Delta}-2\rho_\gamma*|x|^{-\alpha}\Big]=0.
\end{equation*}
It follows that $\gamma$ can be expressed by eigenfunctions of $H_{\alpha,\gamma}$, i.e., 
\begin{equation*}
\gamma=\sum_{i=1}^R k_i|u_{n_i}\rangle \langle u_{n_i}|, \quad \text{for some}\ u_{n_i}\ \text{ satisfying } \ H_{\alpha,\gamma} u_{n_i}=\mu_{n_i}u_{n_i}.
\end{equation*}

We \textbf{claim} that for all $1\leq i\le R$,
\begin{equation}\label{eq2.28}
    \text{$\mu_{n_i}\leq0$ if $1<\alpha<2$ , and  $\mu_{n_i}<0$ if $\alpha=1$}.
\end{equation}
 Actually, let
\begin{equation*}
\gamma'=\gamma-\varepsilon |u_{n_i}\rangle \langle u_{n_i}|,
\end{equation*}
where $0<\varepsilon <k_i$ is small enough. Then, $\|\gamma'\|\leq \|\gamma\|=(\mathcal{K}_{\alpha,q}^{(N)})^\frac{\alpha}{2-\alpha}$, $\rho_{\gamma'}=\rho_\gamma-\varepsilon |u_{n_i}|^2$ and 
\begin{equation}\label{eq2.29}
\T(\sqrt\Delta\gamma')=\T(\sqrt{\Delta}\gamma)-\varepsilon\int_{\mathbb{R}^3}|(-\Delta)^\frac{1}{4}u_{n_i}|^2dx=1-\varepsilon\int_{\mathbb{R}^3}|(-\Delta)^\frac{1}{4}u_{n_i}|^2dx.
\end{equation}
Moreover, we can deduce from $H_{\alpha,\gamma} u_{n_i}=\mu_{n_i}u_{n_i}$ that 
$$\int_{\mathbb{R}^3}|(-\Delta)^\frac{1}{4}u_{n_i}|^2dx=\mu_{n_i}+\frac{2}{\alpha}\int_{\mathbb{R}^3}(\rho_{\gamma}*|x|^{-\alpha})|u_{n_i}|^2dx.$$
 Recalling \eqref{eq2.025} and plugging the above estimates into the inequality $$(\mathcal{K}_{\alpha,\infty}^{(N)})^{\alpha}\le \frac{\|\gamma'\|^{2-\alpha}\big(\T(\sqrt{-\Delta}\gamma')\big)^{\alpha}}{\int_{\mathbb{R}^3}(\rho_{\gamma'}*|x|^{-\alpha})\rho_{\gamma'}dx},$$ we arrive at  
 \begin{equation}\label{eq2.30}
     1-2\varepsilon\int_{\mathbb{R}^3}(\rho_{\gamma}*|x|^{-\alpha})|u_{n_i}|^2dx+\varepsilon^2\int_{\mathbb{R}^3}(|u_{n_i}|^2*|x|^{-\alpha})|u_{n_i}|^2dx\le \big(1-\varepsilon \mu_{n_i}-\frac{2\varepsilon}{\alpha}\int_{\mathbb{R}^3}(\rho_{\gamma}*|x|^{-\alpha})|u_{n_i}|^2dx\big)^\alpha.
 \end{equation}
We recall  that for $|x|<1$, $(1-x)^\alpha\leq 1-\alpha x$ if $\alpha\in(0,1]$, and $(1-x)^\alpha\leq 1-\alpha x+o(x)$ if $\alpha\in(1,2)$. Applying this inequality to the RHS of \eqref{eq2.30}, a simple calculation then gives that 
 \begin{equation*}
      \alpha\mu_{n_i}\leq -\varepsilon\int_{\mathbb{R}^3}(|u_{n_i}|^2*|x|^{-\alpha})|u_{n_i}|^2dx \text{\ for \ } \alpha=1,
 \end{equation*}
 and 
 \begin{equation*}
      \alpha\mu_{n_i}\leq -\varepsilon\int_{\mathbb{R}^3}(|u_{n_i}|^2*|x|^{-\alpha})|u_{n_i}|^2dx+O(\varepsilon) \text{\ for \ } \alpha\in(1,2).
 \end{equation*}
 These two estimates  indicate  that claim \eqref{eq2.28} holds.
% we have
%\begin{equation}\label{iii2}
%\begin{aligned}
%(\mathcal{K}_{\alpha,\infty}^{(N)})^{\alpha}&\le (\mathcal{K}_{\alpha,\infty}^{(N)})^{\alpha}(1-\varepsilon\big(u_{n_j},(\alpha\sqrt{-\Delta}-2\rho_\gamma*|x|^{-\alpha})u_{n_j}\big)+o(\varepsilon))\\
%&=(\mathcal{K}_{\alpha,\infty}^{(N)})^{\alpha}(1-\varepsilon\alpha \mu_{n_j}+o(\varepsilon))<(\mathcal{K}_{\alpha,\infty}^{(N)})^{\alpha}.
%\end{aligned}
%\end{equation}
%Therefore, we know that $\mu_{n_i}$ corresponding to $u_{n_i}$ are non-positive. If $\alpha=1$, it follows from (\ref{iii.1}) and (\ref{iii2}) that $\mu_{n_i}$ corresponding to $u_{n_i}$ are negative.

We next prove  that when $1\le \alpha<2$ and $q=\infty$, then,
\begin{equation}\label{eq2.31}
    k_i\equiv k_j:=k>0 \text{\  for all \ } i,j \in\{1\leq l\leq R: \mu_{n_l}<0\}.
\end{equation}
We argue by contradiction.  Assume that there exists $\mu_{n_m}<0$ and $k_m<\|\gamma\|$,   then we choose $t>0$ small enough such that $t+k_m\le \|\gamma\|$, and construct $\gamma(t)=\gamma+t|u_m\rangle \langle u_m|$. Similar to the arguments of \eqref{eq2.29} and \eqref{eq2.30}, one can derive the following contradiction:
\begin{equation*}
\begin{aligned}
(\mathcal{K}_{\alpha,\infty}^{(N)})^{\alpha}&\le \frac{\|\gamma(t)\|^{2-\alpha}(\T(\sqrt{-\Delta}\gamma(t)))^{\alpha}}{\int_{\mathbb{R}^3}(\rho_{\gamma(t)}*|x|^{-\alpha})\rho_{\gamma(t)}dx}=(\mathcal{K}_{\alpha,\infty}^{(N)})^{\alpha}\frac{1+t\alpha\mu_{m}+2t\int_{\mathbb{R}^3}(\rho_\gamma*|x|^{-\alpha})|u_m|^2dx+o(t)}{\int_{\mathbb{R}^3}\big[(\rho_\gamma+t|u_m|^2)*|x|^{-\alpha}\big](\rho_\gamma+t|u_m|^2)dx}\\
&\le (\mathcal{K}_{\alpha,\infty}^{(N)})^{\alpha}(1+t\alpha\mu_m+o(t))<(\mathcal{K}_{\alpha,\infty}^{(N)})^{\alpha}.
\end{aligned}
\end{equation*}
It is a contradiction, and \eqref{eq2.31} is obtained.

As the end of  (ii), we  show  that $H_{\alpha, \gamma}$ has at most $R$ negative eigenvalue provided  $R<N$. If not, there exists $R+1\le N$ and $\mu_{n_{R+1}}<0$, then we consider the operator
\begin{equation*}
\gamma(t)=\gamma+t|u_{n_{R+1}}\rangle \langle u_{n_{R+1}}|,\quad \text{for $t> 0$ small enough}.
\end{equation*}
A routine computation similar to (\ref{2.4}) leads to the following contradiction
\begin{equation*}
\begin{aligned}
(\mathcal{K}_{\alpha,q}^{(N)})^{\alpha} &\le (\mathcal{K}_{\alpha,q}^{(N)})^{\alpha}\Big(1+t\Big(u_{n_{R+1}},\big(\alpha\sqrt{-\Delta}-2\rho_\gamma*|x|^{-\alpha}\big)u_{n_{R+1}}\Big)+o(t)\Big)\\
&\le (\mathcal{K}_{\alpha,q}^{(N)})^{\alpha}(1+t\mu_{R+1}+o(t))<(\mathcal{K}_{\alpha,q}^{(N)})^{\alpha}.
\end{aligned}
\end{equation*}
Thus, $H_{\alpha,\gamma}$ has at most $R$ negative eigenvalues when $R<N$.

\textbf{(iii).} The explicit expression of $k_i$ can be obtained by   (\ref{iii3}) for the case of $0<\alpha<2$ and $1< q<\frac{2-\alpha}{(1-\alpha)_+}$}, and  by \eqref{eq2.28} and \eqref{eq2.31} for the case of $\alpha=1$ and $q=\infty$, respectively.  It remains to  prove that  $\{\mu_{n_i}\}_{1\le i\le R}$ are $R$ first negative eigenfunctions of $H_{\alpha,\gamma}$ provided 
either $0<\alpha\le 1$, or $1<\alpha<2$ and  $\Rank \gamma<N$.
  We argue by contradiction and assume that the $i$-th eigenfunction of $H_{\alpha,\gamma}$, denoted by $u_i$, corresponding to $\mu_i<\mu_{n_i}<0$, does not belong to the range of $\gamma$.
  
  When $0<\alpha\le 1$,
  we construct an operator
\begin{equation*}
\gamma':=\gamma-k_i|u_{n_i}\rangle \langle u_{n_i}|+k_i|u_i\rangle \langle u_i|:=\gamma+\delta.
\end{equation*}
Then, $\|\gamma\|_{\mathcal{S}^q}=\|\gamma'\|_{\mathcal{S}^q}$, and it follows that  
\begin{equation*}
\begin{aligned}
0<\text{Tr}(\sqrt{-\Delta}\gamma')&=\text{Tr}(\sqrt{-\Delta}\gamma)+k_i\langle u_i,\sqrt{-\Delta}u_i\rangle-k_i\langle u_{n_i},\sqrt{-\Delta}u_{n_i}\rangle\\
&=1+\frac{2k_i}{\alpha}\int_{\mathbb{R}^3}(\rho_\gamma*|x|^{-\alpha})(|u_i|^2-|u_{n_i}|^2)+(\mu_i-\mu_{n_i})k_i\\
&<1+\frac{2k_i}{\alpha}\int_{\mathbb{R}^3}(\rho_\gamma*|x|^{-\alpha})(|u_i|^2-|u_{n_i}|^2).
\end{aligned}
\end{equation*}
 It also follows from the non-negativity properties of the Hartree-type convolution \cite{ll} that
\begin{equation*}
\begin{aligned}
\int_{\mathbb{R}^3}(\rho_{\gamma'}*|x|^{-\alpha})\rho_{\gamma'}dx=&\int_{\mathbb{R}^3}(\rho_{\gamma}*|x|^{-\alpha})\rho_{\gamma}dx+2k_i\int_{\mathbb{R}^3}(\rho_{\gamma}*|x|^{-\alpha})(|u_i|^2-|u_{n_i}|^2)dx\\
&+k_i^2\int_{\mathbb{R}^3}\big[(|u_i|^2-|u_{n_i}|^2)*|x|^{-\alpha}\big](|u_i|^2-|u_{n_i}|^2)dx\\
\ge&1+2k_i\int_{\mathbb{R}^3}(\rho_{\gamma}*|x|^{-\alpha})(|u_i|^2-|u_{n_i}|^2)dx>1-\alpha\ge0.
\end{aligned}
\end{equation*}

Since $\alpha\leq1$, we see that
\begin{equation*}
\begin{aligned}
(\mathcal{K}_{\alpha,q}^{(N)})^{\alpha}\leq \frac{\|\gamma'\|_{\mathcal{S}^q}^{2-\alpha}(\T(\sqrt{-\Delta}\gamma'))^\alpha}{\int_{\mathbb{R}^3}\rho_{\gamma'}*|x|^{-\alpha}\rho_{\gamma'}dx}&<(\mathcal{K}_{\alpha,q}^{(N)})^{\alpha}\frac{\Big(1+\frac{2k_i}{\alpha}\int_{\mathbb{R}^3}(\rho_\gamma*|x|^{-\alpha})(|u_i|^2-|u_{n_i}|^2)dx\Big)^{\alpha}}{1+2k_i\int_{\mathbb{R}^3}(\rho_{\gamma}*|x|^{-\alpha})(|u_i|^2-|u_{n_i}|^2)dx}<(\mathcal{K}_{\alpha,q}^{(N)})^{\alpha}.
\end{aligned}
\end{equation*}
This leads to a contradiction, hence $\mu_{n_i}=\mu_i$, for all $i$.

When $1<\alpha<2$ and rank $\gamma<N$,  we set
\begin{equation*}
\gamma':=\gamma-\varepsilon|u_{n_i}\rangle \langle u_{n_i}|+\varepsilon |u_{i}\rangle \langle u_{i}|,
\end{equation*}
where $0<\varepsilon<k_i$ is small enough. Then, $\Rank \gamma'\leq N$ and  $\|\gamma'\|_{\mathcal{S}^q}\leq\|\gamma\|_{\mathcal{S}^q}=(\mathcal{K}_{\alpha,q}^{(N)})^\frac{2-\alpha}{\alpha}$ for $q\geq1$. Moreover, we have 
\begin{equation*}
    \int_{\mathbb{R}^3}(\rho_{\gamma'}*|x|^{-\alpha})\rho_{\gamma'}dx>1+2\varepsilon\int_{\mathbb{R}^3}(\rho_{\gamma}*|x|^{-\alpha})(|u_i|^2-|u_{n_i}|^2)dx
\end{equation*}
and
\begin{equation*}
\T (\sqrt{-\Delta}\gamma')=1+\frac{2\varepsilon}{\alpha}\int_{\mathbb{R}^3}(\rho_\gamma*|x|^{-\alpha})(|u_i|^2-|u_{n_i}|^2)+(\mu_i-\mu_{n_i})\varepsilon.
\end{equation*}
By using $\mu_i<\mu_{n_i}<0$, we still can derive the following contradiction:
\begin{equation*}
\begin{aligned}
(\mathcal{K}_{\alpha,q}^{(N)})^{\alpha}&\le\frac{\|\gamma'\|_{\mathcal{S}^q}^{2-\alpha}(\T(\sqrt{-\Delta}\gamma'))^\alpha}{\int_{\mathbb{R}^3}\rho_{\gamma'}*|x|^{-\alpha}\rho_{\gamma'}dx}<(\mathcal{K}_{\alpha,q}^{(N)})^{\alpha}\frac{\Big(1+\frac{2\varepsilon}{\alpha}\int_{\mathbb{R}^3}(\rho_\gamma*|x|^{-\alpha})(|u_i|^2-|u_{n_i}|^2)dx+(\mu_i-\mu_{n_i})\varepsilon\Big)^{\alpha}}{1+2\varepsilon\int_{\mathbb{R}^3}(\rho_{\gamma}*|x|^{-\alpha})(|u_i|^2-|u_{n_i}|^2)dx}\\
<&(\mathcal{K}_{\alpha,q}^{(N)})^{\alpha}\frac{1+2\varepsilon\int_{\mathbb{R}^3}(\rho_{\gamma}*|x|^{-\alpha})(|u_i|^2-|u_{n_i}|^2)dx
+\alpha\varepsilon (\mu_i-\mu_{n_i})+o(\varepsilon)}{1+2\varepsilon\int_{\mathbb{R}^3}(\rho_{\gamma}*|x|^{-\alpha})(|u_i|^2-|u_{n_i}|^2)dx} 
<(\mathcal{K}_{\alpha,q}^{(N)})^{\alpha}.
\end{aligned}
\end{equation*}

%If $\alpha=1$ and $q=\infty$, an obvious induction gives $\{\mu_{n_i}\}_{1\le i\le R}$ are the $R$ first eigenvalues and $k_i\equiv const$ for any $1\le i\le R$.

\textbf{III. Existence of strict decreasing subsequence of $\mathcal{K}_{\alpha,q}^{(N)}$ w.r.t. \emph{N}.} Motivated by \cite{gln}, we first claim that when $0<\alpha\le1$, $q$ is close to $\frac{2-\alpha}{(1-\alpha)_+}$ enough, then 
\begin{equation}\label{eq2.12}
\mathcal{K}_{\alpha,q}^{(2N)}<\mathcal{K}_{\alpha,q}^{(N)} \ \text{ provided $\mathcal{K}_{\alpha,q}^{(N)}$ has an optimizer $\gamma$ of rank $N$.}
\end{equation}
Indeed, let $\gamma:=\sum_{j=1}^N k_j|u_j\rangle \langle u_j|$ be a minimizer  for $\mathcal{K}_{\alpha,q}^{(N)}$ of rank $N$,  such that
\begin{equation*}
\text{Tr}(\sqrt{-\Delta} \gamma)=\int_{\mathbb{R}^3}(\rho_\gamma*|x|^{-\alpha})\rho_\gamma dx=1
\end{equation*}
and $(\mathcal{K}_{\alpha,q}^{(N)})^{\frac{\alpha}{2-\alpha}}=(\T (\gamma^q))^{\frac{1}{q}}$, if $q\not =\infty$, $\mathcal{K}_{1,\infty}^{(N)}=\|\gamma\|$, if $q=\infty$.
Furthermore, there exist $N$ functions $u_1,\ u_2,...,u_N$ satisfy
\begin{equation*}
\sqrt{-\Delta} u_j-\frac{2}{\alpha}(\rho_\gamma*|x|^{-\alpha})u_j=\mu_j u_j,\quad \forall\ 1\le j\le N.
\end{equation*}
Proceeding as in the proof of Theorem \ref{A.1} in Appendix, we have
\begin{equation}\label{2.5}
|u_j|\le \frac{C}{1+|x|^4}, \quad \forall\ j=1,2,...,N.
\end{equation}

For $\bar R>0$, we define $u_{j,\bar R}(x):=u_j(x-\bar Re_1)$ with $e_1=(1,0,...,0)$. And we construct the Gram matrix
\begin{equation*}
G_{\bar R}= \left( \begin{array}{cc}
E_{N\times N} & E^{\bar R} \\
(E^{\bar R})^* & E_{N\times N}
\end{array}\right)
\end{equation*}
with $E_{N\times N}$ is $N\times N$ identity matrix and $E^{\bar R}=(E_{ij}^{\bar R})_{N\times N}$ where
\begin{equation*}
E_{ij}^{\bar R}:=( u_i, u_{j,\bar R})_{L^2}=\int_{\mathbb{R}^3}u_i(x)u_j(x-\bar Re_1)dx.
\end{equation*}
Now we need to estimate $|E_{ij}^{\bar R}|$. Using (\ref{2.5}), we get
\begin{equation*}
\begin{aligned}
|E_{ij}^{\bar R}|&\le \int_{\mathbb{R}^3}|u_i(x)||u_{j}(x-\bar Re_1)|dx\le  \int_{\mathbb{R}^3}\frac{C}{(1+|x|^4)(1+|x-\bar Re_1|^4)}dx.
\end{aligned}
\end{equation*}
It follows from Lemma A.3 in \cite{gnnt} that
$|E_{ij}^{\bar R}|\le C(\frac{1}{\bar R^4}+\frac{1}{\bar R^5}).$
 We  construct
\begin{equation*}
\left(\begin{array}{cc}
\psi_{1,\bar R}\\
\vdots\\
\psi_{N,\bar R}\\
\psi_{N+1, \bar R}\\
\vdots\\
\psi_{2N,\bar R}
\end{array}\right)=(G_{\bar R})^{-\frac{1}{2}}
\left(\begin{array}{cc}
u_1\\
\vdots\\
u_N\\
u_{1,\bar R}\\
\vdots\\
u_{N,\bar R}
\end{array}\right)
\end{equation*}
and
\begin{equation*}
\gamma_{\bar R}=\sum_{i=1}^Nk_i\Big(|\psi_{i,\bar R}\rangle\langle\psi_{i,\bar R}|+|\psi_{N+i,\bar R}\rangle \langle \psi_{N+i,\bar R}|\Big).
\end{equation*}
We have
\begin{equation}\label{2.6.1}
\text{Tr}(\gamma_{\bar R}^q)=2\text{Tr}(\gamma^q),\quad \|\gamma_{\bar R}\|=\|\gamma\|.
\end{equation}
Substituting $\gamma_{\bar R}$ into inequality (\ref{1.1}), we obtain
\begin{equation*}
(\mathcal{K}_{\alpha,q}^{(2N)})^{\alpha}\le (\mathcal{K}_{\alpha,q}^{(N)})^\alpha\frac{2^{\frac{2-\alpha}{q}}(\text{Tr}\sqrt{-\Delta}\gamma_{\bar R})^{\alpha}}{\int_{\mathbb{R}^3}(\rho_{\gamma_{\bar R}}*|x|^{-\alpha})\rho_{\gamma_{\bar R}}dx}
\end{equation*}
and
\begin{equation*}
\mathcal{K}_{1,\infty}^{(2N)}\le \mathcal{K}_{1,\infty}^{(N)}\frac{\text{Tr}(\sqrt{-\Delta}\gamma_{\bar R})}{\int_{\mathbb{R}^3}(\rho_{\gamma_{\bar R}}*|x|^{-1})\rho_{\gamma_{\bar R}}dx}.
\end{equation*}
Expanding the Gram matrix $G_{\bar R}$ as in \cite{gln}
\begin{equation*}
(S_{\bar R})^{-\frac{1}{2}}=\left( \begin{array}{cc}
E_{N\times N} &0 \\
0 & E_{N\times N}
\end{array}\right)-\frac{1}{2}\left( \begin{array}{cc}
0 & E^{\bar R} \\
(E^{\bar R})^* & 0
\end{array}\right)+o((\max_{i,j}E_{ij}^{\bar R})^2),
\end{equation*}
where
\begin{equation*}
\max_{i,j}E_{ij}^{\bar R}=\max_{i,j}\int_{\mathbb{R}^3}|u_i(x)||u_j(x-\bar Re_1)|dx=O(\bar R^{-4}).
\end{equation*}
We then get
\begin{equation*}
\gamma_{\bar R}=\gamma+\gamma_{\bar R}'-\frac{1}{2}\sum_{i=1}^N\sum_{j=1}^N E_{ij}^{\bar R}(|u_{i}\rangle\langle u_{j,\bar R}|+|u_{j,\bar R}\rangle\langle u_{i}|)+O(\bar R^{-8}),
\end{equation*}
where $\gamma_{\bar R}'=\sum_{i=1}^N k_i|u_{i,\bar R}\rangle \langle u_{i,\bar R}|.$
A simple computation gives
\begin{equation}\label{2.7}
\begin{aligned}
\text{Tr}(\sqrt{-\Delta}{\gamma_{\bar R}})=&\text{Tr}(\sqrt{-\Delta}{\gamma})+\text{Tr}(\sqrt{-\Delta}{\gamma_{\bar R}'})-\sum_{i=1}^N\sum_{j=1}^N E_{ij}^{\bar R}\Big(\mu_iE_{ij}^{\bar R}
+\frac{2}{\alpha}\int_{\mathbb{R}^3}(\rho_\gamma*|x|^{-\alpha})u_iu_{j,\bar R}dx\Big)
+O(\bar R^{-8})\\
\le& 2+\frac{2}{\alpha}\sum_{i=1}^N\sum_{j=1}^NE_{ij}^{\bar R}\|\rho_\gamma*|x|^{-\alpha}\|_{\infty}\int_{\mathbb{R}^3}u_iu_{j,\bar R}dx+O(\bar R^{-8})
\le 2+O(\bar R^{-8})
\end{aligned}
\end{equation}
and
\begin{equation}\label{2.8}
\rho_{\gamma_{\bar R}}=\rho_\gamma+\rho_{\gamma_{\bar R}'}-\sum_{i=1}^N\sum_{j=1}^NE_{ij}^{\bar R}u_i u_{j,\bar R}+O(\bar R^{-8}).
\end{equation}
It follows from (\ref{2.8}) that
\begin{equation}\label{2.9}
\begin{aligned}
\int_{\mathbb{R}^3}(\rho_{\gamma_{\bar R}}*|x|^{-\alpha})\rho_{\gamma_{\bar R}}=&2\int_{\mathbb{R}^3}(\rho_\gamma*|x|^{-\alpha})\rho_\gamma dx+2\sum_{i=1}^N\sum_{j=1}^N\int_{\mathbb{R}^3}(u_i^2*|x|^{-\alpha})u_{j,\bar R}^2dx\\
&-\sum_{i=1}^N\sum_{j=1}^N\int_{\mathbb{R}^3}(\rho_\gamma*|x|^{-\alpha})E_{ij}^{\bar R}u_i u_{j,\bar R}dx+O(\bar R^{-8})\\
\ge&2\int_{\mathbb{R}^3}(\rho_\gamma*|x|^{-\alpha})\rho_\gamma dx+2\int_{\mathbb{ R}^3}(u_1^2*|x|^{-\alpha})u_{1,\bar R}^2dx\\
&-C\|\rho_\gamma*|x|^{-\alpha}\|_{\infty}{\bar R}^{-8}+O({\bar R}^{-8}).
\end{aligned}
\end{equation}
Using Theorem \ref{tha} in Appendix, we can now derive estimate about $u_1$ that
\begin{equation*}
\int_{\mathbb{R}^3}\int_{\mathbb{R}^3}\frac{u_1^2(x)u_1^2(y-\bar Re_1)}{|x-y|^{\alpha}}dx \ge \int_{|x|\le \bar R}\int_{\bar R\le |y-\bar Re_1|\le 2\bar R}\frac{u_1^2(x)u_1^2(y-\bar Re_1)}{|x-y|^{\alpha}}dxdy.
\end{equation*}
It follows from triangle inequality $|x-y|\le |x|+|y|\le 4\bar R$ in the domain $\big\{(x,y)\big| |x|\le \bar R,\ \bar R\le|y-\bar Re_1|\le 2\bar R\big\}$ that
\begin{equation}\label{2.10}
\begin{aligned}
\int_{\mathbb{R}^3}\int_{\mathbb{R}^3}\frac{u_1^2(x)u_1^2(y-\bar Re_1)}{|x-y|^{\alpha}}dx \ge& \frac{1}{(4\bar R)^{\alpha}} \int_{|x|\le \bar R}\int_{\bar R\le |y-\bar Re_1|\le 2\bar R}u_1^2(x)u_1^2(y-\bar Re_1)dxdy\\
\ge & \frac{C}{\bar R^\alpha}\int_{|x|\le \bar R}u_1^2dx\int_{\bar R\le |y|\le 2\bar R}\frac{1}{|y|^8}dy\\
\ge & \frac{C}{\bar R^\alpha}\Big(\int_{\mathbb{R}^3}u_1^2dx-\int_{|x|\ge \bar R}u_1^2dx\Big)\frac{1}{\bar R^5}\\
\ge & \frac{C}{\bar R^{5+\alpha}}-\frac{C}{\bar R^{5+\alpha}}\int_{\bar R}^{\infty}\frac{1}{|x|^8}dx\ge\frac{C}{\bar R^{5+\alpha}}.
\end{aligned}
\end{equation}

Therefore, taking (\ref{2.6.1}), (\ref{2.7}), (\ref{2.9}) and (\ref{2.10}), we have
\begin{equation*}
\begin{aligned}
(\mathcal{K}_{\alpha,q}^{(2N)})^\alpha&\le (\mathcal{K}_{\alpha,q}^{(N)})^\alpha\frac{2^{\frac{2-\alpha}{q}}(2+O(\bar R^{-8}))^\alpha}{2+C\bar R^{-(5+\alpha)}}=2^{\frac{2-\alpha}{q}+\alpha-1}(\mathcal{K}_{\alpha,q}^{(N)})^{\alpha}(1-\frac{1}{2}C\bar R^{-(5+\alpha)}+O(\bar R^{-8}))< (\mathcal{K}_{\alpha,q}^{(N)})^{\alpha},
\end{aligned}
\end{equation*}
for $0<\alpha\le1$, $q$ is close to $\frac{2-\alpha}{(1-\alpha)_+}$ enough, and $\bar R$ is large enough. Hence, \eqref{eq2.12} is obtained.  The case $q=\frac{2-\alpha}{(1-\alpha)_+}$  can be proved  analogously, so we  omit the details here.

\textbf{(iv).} To finish the proof of (iv), it suffices to prove \eqref{eq1.8}. 
%Since $N=1$, (\ref{1.1}) is Hartree type Gagliardo-Nirenberg inequality, it definitely has an optimizer. Then it follows from $\mathcal{K}_{\alpha,q}^{(2)}<\mathcal{K}_{\alpha,q}^{(1)}$, optimizer of (\ref{1.1}) with $N=2$ has rank $2$. By induction, if $\mathcal{K}_{\alpha,q}^{(3)}<\mathcal{K}_{\alpha,q}^{(2)}$, then its optimizer must be of rank $3$. Otherwise, $\mathcal{K}_{\alpha,q}^{(3)}=\mathcal{K}_{\alpha,q}^{(2)}$ and $\mathcal{K}_{\alpha,q}^{(4)}<\mathcal{K}_{\alpha,q}^{(3)}$, the optimizer of  (\ref{1.1}) for $\mathcal{K}_{\alpha,q}^{(4)}$ has rank $4$. In summary, there exists a subsequence $\{\mathcal{K}_{\alpha,q}^{(N_i)}\}_{1\le i\le \infty}$ such that $\mathcal{K}_{\alpha,q}^{(N_i)}<\mathcal{K}_{\alpha,q}^{(N_{i-1})}$ and optimizer of (\ref{1.1}) for $\mathcal{K}_{\alpha,q}^{(N_i)}$ has rank $N_i$.
%With \eqref{eq2.12} in mind, we claim that  for every $N\in \mathbb{N}^+$,
%\begin{equation*}
%\mathcal{K}_{\alpha,q}^{(N)}>\mathcal{K}_{\alpha,q}^{(2N)},
%\end{equation*}
On the contrary, if $\mathcal{K}_{\alpha,q}^{(N)}=\mathcal{K}_{\alpha,q}^{(2N)}$ for some $N\in \mathbb{N}^+$. Then, there exists a $\gamma$ with ${\rm rank~}\gamma=M\leq N$ such that $\gamma$ is a minimizer of (\ref{1.1}) for $\mathcal{K}_{\alpha,q}^{(M)}=\mathcal{K}_{\alpha,q}^{(N)}=\mathcal{K}_{\alpha,q}^{(2N)}$. This indicates that $\mathcal{K}_{\alpha,q}^{(M)}=\mathcal{K}_{\alpha,q}^{(2M)}\ $, which however contradicts  \eqref{eq2.12}. %that $\mathcal{K}_{\alpha,q}^{(M)}>\mathcal{K}_{\alpha,q}^{(2M)}$ Here we notice that $M>\frac{N}{2}$, otherwise, $\mathcal{K}_{\alpha,q}^{(M)}=\mathcal{K}_{\alpha,q}^{(N)}\le \mathcal{K}_{\alpha,q}^{(2M)}<\mathcal{K}_{\alpha,q}^{(M)} $. On the other hands, since the rank of minimizer of (\ref{1.1}) for $\mathcal{K}_{\alpha,q}^{(M)}$ is $M$, using part III of proof again, we see that $\mathcal{K}_{\alpha,q}^{(2N)} \le \mathcal{K}_{\alpha,q}^{(2M)}<\mathcal{K}_{\alpha,q}^{(M)}=\mathcal{K}_{\alpha,q}^{(N)}$, which is contrary to $\mathcal{K}_{\alpha,q}^{(N)}=\mathcal{K}_{\alpha,q}^{(2N)}$. 
\qed

\begin{remark}
According to the Gagliardo-Nirenberg type inequality \textnormal {\cite{fl}}, we have 
\begin{equation*}
\int_{\mathbb{R}^3}({|x|^{-\alpha}}*\rho )\rho dx\le C_{gn}\|(-\Delta)^{\frac{1}{4}}\sqrt{\rho}\|_2^{2\alpha}\|\rho\|_1^{2-\alpha}\leq C_{gn}N^\frac{(q-1)(2-\alpha)}{q}\|(-\Delta)^{\frac{1}{4}}\sqrt{\rho}\|_2^{2\alpha}\|\gamma\|_{\mathcal{S}_q}^{2-\alpha}.
\end{equation*}
This indicates 
\begin{equation*}
\mathcal{K}_{\alpha,q}^{(N)}\ge C_{gn}^{-\frac{1}{\alpha}}N^{-\frac{(q-1)(2-\alpha)}{q\alpha}}>0, \quad \text{for all}\ N,
\end{equation*}
which gives a lower bound of $\mathcal{K}_{\alpha,q}^{(N)}$. Specifically,  this bound is uniformly w.r.t. $N$ for $q=1$.
\end{remark}

%\begin{remark}
%As it is widely recognized, it follows from $\gamma_n\overset{\star}\rightharpoonup \gamma$ in $\mathcal{S}^q$ and $\T(\gamma_n)\to \T(\gamma)$ that $\gamma_n\to\gamma$ in ${\mathcal{S}^q}$. Consequently, based on \textbf{(i)} of proof in Theorem \ref{th1}, $\gamma_n\to\gamma$ in ${\mathcal{S}^q}$ implies that $u_i^n\to u_i$ in $L^2$, where $u_i^n$ and $u_i$ denote the $i$-th component of $\gamma_n$ and $\gamma$, respectively.
%\end{remark}

Motivated by  the arguments in Appendix A in \cite{fgl}, in the rest of this section,  we will finish the proof of Theorem \ref{dual} and show the dual relation between non-local Gagliardo-Nirenberg-Sobolev inequality and Lieb-Thirring type inequality.

\noindent\textit{ Proof of Theorem \ref{dual}.}
ssume that $\sqrt{-\Delta}+V*|x|^{-\alpha}$ has at least $N$ negative eigenvalues (counting multiplicity). Let $u_1,u_2,...,u_N$ be orthogonal eigenfunctions corresponding to the $N$ negative eigenvalues. Define an operator $$\gamma=\sum_{j=1}^N n_j|u_j\rangle \langle u_j|,\ \text{ with }\ n_j=|\lambda_j(\sqrt{-\Delta}+V*|x|^{-1})|^{\frac{1}{q-1}}, \ j=1,2,\cdots,N. $$ Set $H_{\alpha}:=\int_{\mathbb{R}^3}(\rho_\gamma*|x|^{-\alpha})\rho_\gamma dx$, it then follows that
\begin{equation*}
\begin{aligned}
\sum_{j=1}^N &|\lambda_j(\sqrt{-\Delta}+V(x)*|x|^{-\alpha})|^{q'}=-\sum_{j=1}^N n_j\int_{\mathbb{R}^3}\Big( |\sqrt{-\Delta} u_j|^2+(V(x)*|x|^{-\alpha})|u_j|^2\Big)dx\\
\le&-\mathcal{K}_{\alpha,q}^{(N)}\|\gamma\|_{\mathcal{S}^{q}}^{-\frac{2-\alpha}{\alpha}}H_{\alpha}^{\frac{1}{\alpha}}
+\int_{\mathbb{R}^3}(\rho_\gamma*|x|^{-\alpha})V_-(x)dx\\
\le&-\mathcal{K}_{\alpha,q}^{(N)}\|\gamma\|_{\mathcal{S}^{q}}^{-\frac{2-\alpha}{\alpha}}H_{\alpha}^{\frac{1}{\alpha}}+H_\alpha^{\frac{1}{2}}\Big(\int_{\mathbb{R}^3}(V_-*|x|^{-\alpha})V_-(x)dx\Big)^{\frac{1}{2}}.
\end{aligned}
\end{equation*}

Maximizing over $H_\alpha$, we have
\begin{equation*}
\sum_{j=1}^N |\lambda_j(\sqrt{-\Delta}+V(x)*|x|^{-\alpha})|^{q'}\le \frac{2-\alpha}{2}\Big(\frac{\alpha}{2}\Big)^{\frac{\alpha}{2-\alpha}}(\mathcal{K}_{\alpha,q}^{(N)})^{-\frac{\alpha}{2-\alpha}}\|\gamma\|_{\mathcal{S}^{q}}\Big(\int_{\mathbb{R}^3}(V_-*|x|^{-\alpha})V_-(x)dx\Big)^{\frac{1}{2-\alpha}}.
\end{equation*}
It follows from $\|\gamma\|_{\mathcal{S}^{q}}^{q}=\sum_{j=1}^N |\lambda_j(\sqrt{-\Delta}+V(x)*|x|^{-\alpha})|^{q'}$ that
\begin{equation*}
\sum_{j=1}^N |\lambda_j(\sqrt{-\Delta}+V(x)*|x|^{-\alpha})|^{q'}\le \Big(\frac{2-\alpha}{2}\Big)^{q'}\Big(\frac{\alpha}{2}\Big)^{\frac{\alpha q'}{2-\alpha}} (\mathcal{K}_{\alpha,q}^{(N)})^{-\frac{\alpha q'}{2-\alpha}}\Big(\int_{\mathbb{R}^3}(V_-*|x|^{-\alpha})V_-(x)dx\Big)^\frac{q'}{2-\alpha}.
\end{equation*}
This indicates  that the  Hartree type Lieb-Thirring inequality \ref{re3}   is well defined, and the optimal constant $\mathcal{L}_{\alpha,q'}^{(N)}$ satisfies  
\begin{equation}\label{eq-dual}
\Big(\frac{2-\alpha}{2}\Big)^{\frac{q}{q-1}}\Big(\frac{\alpha}{2}\Big)^{\frac{\alpha q}{(2-\alpha)(q-1)}} (\mathcal{K}_{\alpha,q}^{(N)})^{-\frac{\alpha q}{(2-\alpha)(q-1)}}\ge \mathcal{L}_{\alpha,q'}^{(N)}.
\end{equation}

On the other hand, for any given  operator $0\leq \gamma=\sum_{i=1}^Nn_i|u_i\rangle \langle u_i|$,  
we choose $V(x)=-\beta \rho_\gamma$ with $\beta>0$ to be determined. Then,
\begin{equation*}
\begin{aligned}
&\sum_{j=1}^N n_j\int_{\mathbb{R}^3}|(-\Delta)^{\frac{1}{4}}u_j|^2dx-\beta\int_{\mathbb{R}^3}(\rho_\gamma*|x|^{-\alpha})\rho_\gamma dx
=\sum_{j=1}^N n_j \int_{\mathbb{R}^3}\Big(|(-\Delta)^{\frac{1}{4}}u_j|^2dx-\beta(\rho_\gamma*|x|^{-\alpha})|u_j|^2\Big) dx\\
&\ge -\|\gamma\|_{\mathcal{S}^{q}}\Big(\sum_{j=1}^N |\lambda_j(\sqrt{-\Delta}-\beta\rho_\gamma*|x|^{-\alpha})|^{q'}\Big)^{\frac{1}{q'}}
\ge -\|\gamma\|_{\mathcal{S}^{q}}(\mathcal{L}_{\alpha,q'}^{(N)})^{\frac{1}{q'}}\beta^{\frac{2}{2-\alpha}}\Big(\int_{\mathbb{R}^3}(\rho_\gamma*|x|^{-\alpha})\rho_\gamma dx\Big)^{\frac{1}{2-\alpha}},
\end{aligned}
\end{equation*}
where we used H\"older inequality in Schatten space \cite{s}  in the first ``$\ge$”. Still denote $H_{\alpha}=\int_{\mathbb{R}^3}(\rho_\gamma*|x|^{-\alpha})\rho_\gamma dx$, we then get that 
\begin{equation*}
\sum_{j=1}^N n_j\int_{\mathbb{R}^3}|(-\Delta)^{\frac{1}{4}}u_j|^2dx\ge \beta H_{\alpha}-\|\gamma\|_{\mathcal{S}^{q}}(\mathcal{L}_{\alpha,q'}^{(N)})^{\frac{1}{q'}}\beta^{\frac{2}{2-\alpha}}H_{\alpha}^{\frac{1}{2-\alpha}}.
\end{equation*}
Optimizing over $\beta$, we obtain
\begin{equation*}
\sum_{j=1}^N n_j\int_{\mathbb{R}^3}|(-\Delta)^{\frac{1}{4}}u_j|^2dx\ge \Big(\frac{2-\alpha}{2}\Big)^{\frac{2-\alpha}{\alpha}}\Big(\frac{\alpha}{2}\Big) (\mathcal{L}_{\alpha,q'}^{(N)})^{-\frac{2-\alpha}{\alpha q'}} \|\gamma\|_{\mathcal{S}^{q}}^{-\frac{2-\alpha}{\alpha}}H_{\alpha}^{\frac{1}{\alpha}}.
\end{equation*}
It then follows from the GNS inequality \eqref{1.1} that 
\begin{equation*}
\frac{\sum_{j=1}^N n_j\int_{\mathbb{R}^3}|(-\Delta)^{\frac{1}{4}}u_j|^2dx \|\gamma\|_{\mathcal{S}^{q}}^{\frac{2-\alpha}{\alpha}} }{H_{\alpha}^{\frac{1}{\alpha}}}\ge \mathcal{K}_{\alpha,q}^{(N)}\ge \Big(\frac{2-\alpha}{2}\Big)^{\frac{2-\alpha}{\alpha}}\Big(\frac{\alpha}{2}\Big) (\mathcal{L}_{\alpha,q'}^{(N)})^{-\frac{(2-\alpha)(q-1)}{\alpha q}}.
\end{equation*}
Comparing it with \eqref{eq-dual}, we obtain that
\begin{equation*}
\mathcal{K}_{\alpha, q}^{(N)}(\mathcal{L}_{\alpha,q'}^{(N)})^{\frac{(2-\alpha)(q-1)}{\alpha q}}=\frac{\alpha}{2}\Big(\frac{2-\alpha}{2}\Big)^{\frac{2-\alpha}{\alpha}}.
\end{equation*}
\qed

\section{Existence and nonexistence for problem (\ref{mini-N})} \label{Sec-3}
This section is devoted to the proof of Theorem \ref{th2}. We first give the existence and nonexistence for problem (\ref{mini-N}) by using Theorem \ref{th1} with $\alpha=1$ and $q=\infty$, denote $\mathcal{K}_\infty^{(N)}:=\mathcal{K}_{1,\infty}^{(N)}$, and then the asymptotic behavior of optimizer will be considered in the next section.

\noindent\textit{ Proof of Theorem \ref{th2}.} \text{\bf (i). Existence for $K\in (0,\mathcal{K}_\infty^{(N)})$.} For $K\in (0,\mathcal{K}_\infty^{(N)})$, we infer from Theorem \ref{th1} that
\begin{equation}\label{3.0}
\begin{aligned}
E_K(N)&\ge \text{Tr}[(\sqrt{-\Delta}+V(x))\gamma]-K\int_{\mathbb{R}^3}(\rho_\gamma*|x|^{-1})\rho_\gamma dx\ge \Big(1-\frac{K}{\mathcal{K}_\infty^{(N)}}\Big)\text{Tr}(\sqrt{-\Delta}\gamma)+\int_{\mathbb{R}^3}V(x)\rho_\gamma dx.
\end{aligned}
\end{equation}
So $E_K(N)$ is bounded from below for $K\in(0,\mathcal{K}_\infty^{(N)}]$. Let   $\{\gamma_n=\sum_{i=1}^{r_n} |u_i^n\rangle \langle u_i^n|\}_n$ be  a minimizing sequence of $E_K(N)$, where $1\leq r_n\leq N$ is an integer  and  $(u_i^n,u_j^n)_{L^2}=\delta_{ij}$ for $1\leq i,j\leq r_n$. Since $1\leq r_n\leq N$ is an integer, up to a subsequence, we may assume that $r_n\equiv r_K\in [1,N]$.   By Hoffmann-Ostenhof type inequality \cite{fl}
\begin{equation*}
\text{Tr}(\sqrt{-\Delta}\gamma_n)\ge \int_{\mathbb{R}^3}|(-\Delta)^{\frac{1}{4}}\sqrt{\rho_{\gamma_n}}|^2dx.
\end{equation*}
From (\ref{3.0}) we know that $\{u_i^n\}_n^\infty$ is uniformly bounded in $\mathcal{H}$ for all $i=1,2,\cdots, r_K$. Applying the compact embedding theorem in \cite{af}, we obtain
\begin{equation*}
\begin{aligned}
&u_i^n\overset{n}\rightharpoonup u_i\ \text{weakly in}\ \mathcal{H}, 
u_i^n\overset{n}\to u_i\ \text{strongly in}\ L^s(\mathbb{R}^3)\ \forall 2\le s<3, \text{\ and \ }(u_i,u_j)_{L^2}=\delta_{ij} \text{\  for \ }i,j=1,2,\cdots,r_K.
\end{aligned}
\end{equation*}
Denote $\gamma_K:=\sum_{i=1}^{r_K} |u_i\rangle \langle u_i|$, then we have 
\begin{equation*}
 \mathcal{E}_K(\gamma_K)\ge E_K(N)=\lim_{n\to\infty}\mathcal{E}_K(\gamma_n)\ge \mathcal{E}_K(\gamma_K).
\end{equation*}
This means $\gamma_K$ is a minimizer of (\ref{mini-N}).

Similar to the argument in Appendix A in \cite{s1}, $\gamma_K$ can be rewritten in the form $\gamma_K=\sum_{i=1}^{r_K}|u_{l_i}\rangle\langle u_{l_i}|$ where $u_{l_i}$ is $l_i$-th eigenfunction of the operator
\begin{equation*}
H_V u_{l_i}:=[\sqrt{-\Delta+m^2}+V(x)-2K\rho_{\gamma_K}*|x|^{-1}]u_{l_i}=\mu_{l_i}u_{l_i}
\end{equation*}
and $\rho_{\gamma_K}=\sum_{i=1}^{r_K}|u_{l_i}|^2$. We first prove that $\mu_{l_1}=\mu_1$, on the contrary, suppose $\mu_{l_1}\not=\mu_1$, we construct an operator as follows
\begin{equation*}
\gamma':=\gamma_K-|u_{l_1}\rangle \langle u_{l_1}|+|u_1\rangle\langle u_1|.
\end{equation*}
It follows that
\begin{equation}\label{3.1}
\begin{aligned}
\text{Tr}(\sqrt{-\Delta+m^2}\gamma')=&\T(\sqrt{-\Delta+m^2}\gamma_K)-(\sqrt{-\Delta+m^2}u_{l_1}, u_{l_1})+(\sqrt{-\Delta+m^2}u_1, u_1)\\
=&\text{Tr}(\sqrt{-\Delta+m^2}\gamma_K)+2K\int_{\mathbb{R}^3}(\rho_{\gamma_K}*|x|^{-1})(|u_1|^2-|u_{l_1}|^2)\\
&+\int_{\mathbb{R}^3}V(x)(|u_{l_1}|^2-|u_1|^2)dx+\mu_1-\mu_{l_1}
\end{aligned}
\end{equation}
and
\begin{equation}\label{3.2}
\text{Tr}(V(x)\gamma')=\text{Tr}(V(x)\gamma_K)+\int_{\mathbb{R}^3}V(x)(|u_1|^2-|u_{l_1}|^2)dx.
\end{equation}
Similarly,
\begin{equation}\label{3.3}
\begin{aligned}
\int_{\mathbb{R}^3}(\rho_{\gamma'}*|x|^{-1})\rho_{\gamma'}dx=&\int_{\mathbb{R}^3}(\rho_{\gamma_K}*|x|^{-1})\rho_{\gamma_K}dx+2\int_{\mathbb{R}^3}(\rho_{\gamma_K}*|x|^{-1})(|u_1|^2-|u_{l_1}|^2)dx\\
&+\int_{\mathbb{R}^3}\big[(|u_1|^2-|u_{l_1}|^2)*|x|^{-1}\big](|u_1|^2-|u_{l_1}|^2)dx\\
\ge&\int_{\mathbb{R}^3}(\rho_{\gamma_K}*|x|^{-1})\rho_{\gamma_K}dx+2\int_{\mathbb{R}^3}(\rho_{\gamma_K}*|x|^{-1})(|u_1|^2-|u_{l_1}|^2)dx.
\end{aligned}
\end{equation}
 Plugging (\ref{3.1}), (\ref{3.2}) and (\ref{3.3}) back into $\mathcal{E}_K(\gamma')$, we get the followoing contradiction:
\begin{equation*}
E_K(N)\le \mathcal{E}_K(\gamma')\le \mathcal{E}_K(\gamma_K)+\mu_1-\mu_{k_1}<\mathcal{E}_K(\gamma)=E_K(N).
\end{equation*}
Proceeding the above arguments, we can see that $\mu_{l_i}=\mu_i$, $\forall\ i=1,2,\cdots, r_K$. 

Finally, we claim that if $\mu_i<0$, $\forall\ 1\le i\le r_K$ and $r_K<N$, then $\mu_{r_K+1}>0$.
For otherwise, assume $u_{r_K+1}$ is the eigenfunction of $H_V$ corresponding to $\mu_{r_K+1}\le0$, set 
$\gamma':=\gamma_K+|u_{r_K+1}\rangle \langle u_{r_K+1}|$, then similar to \eqref{3.1}-\eqref{3.3}, one can deduce the following contradiction:
$$ E_K(N)\le \mathcal{E}_K(\gamma')=\mathcal{E}_K(\gamma_K)+\mu_{r_K+1}-K\int_{\mathbb{R}^3}(u_{r_K+1}^2*|x|^{-1})u_{r_K+1}^2dx<E_K(N).$$

\text{\bf(ii). Rank of minimizer} 
let $\mu_{V}>0$ be the first  eigenvalue of $\sqrt{-\Delta+m^2}+V(x)$,  and $v(x)$ be the corresponding eigenfunctions. 

We first show that as $K\searrow 0$, then $\Rank \gamma_K\equiv1$.  Indeed, by taking $\gamma_1:=|v\rangle\langle v|$ as a trial operator, one can deduce that 
\begin{equation*}
E_K(N)\le \mathcal{E}(\gamma_1)= \mu_V-K\int_{\mathbb{R}^3}(v^2*|x|^{-1})v^2dx\le \mu_V+O(K) \quad \text{ as\ } K\searrow0.
\end{equation*}
On the contrary, assume that the minimizer $\gamma_K$ for (\ref{mini-N})  satisfies $\Rank \gamma_K\ge 2$.  Let $\gamma_K$  be the minimizer for (\ref{mini-N}).  From \eqref{1.1}, one can see that  there exists $C>0$ independent of $K\searrow0$ such that 
\begin{equation*}
\int_{\mathbb{R}^3}(\rho_{\gamma_K}*|x|^{-1})\rho_{\gamma_K} dx\le C\quad \text{as } K\searrow 0. 
\end{equation*}
 If $\Rank \gamma_K\geq2$,  we then deduce that 
\begin{equation*}
E_K(N)=\T(\sqrt{-\Delta+m^2}\gamma_K)+\int_{\mathbb{R}^3}V(x)\rho_{\gamma_K} dx+O(K)\ge 2\mu_V+O(K) \quad \text{as\ } K\searrow0 ,
\end{equation*}
which obviouly derives a contradiction. Thus,  $\Rank\gamma_K\equiv1$ provided $K>0$ is small enough. Moreover, from Appendix A in \cite{gzz}, we know that the  minimizer for  (\ref{mini-N})  is unique when $K>0$ is small enough. 

We next focus on the case that $K\nearrow \mathcal{K}_\infty^{(N)}$. We intend to show that $E_K(N)\to 0$ as $K\nearrow \mathcal{K}_\infty^{(N)}$.  Let $\gamma_0=\sum_{i=1}^r |Q_i\rangle \langle Q_i|$ with rank $\gamma_0=r$ be an optimizer for (\ref{1.1}) with $\mathcal{K}_\infty^{(N)}$, where $Q_i$ satisfies
\begin{equation*}
\sqrt{-\Delta} Q_i-2\mathcal{K}_\infty^{(N)}(\rho_{\gamma_0}*|x|^{-1})Q_i=\mu_iQ_i,\quad \forall 1\le i\le r.
\end{equation*}

Then we use cut-off function to construct
\begin{equation*}
Q_i^{\bar R}(x)=A_i^{\bar R}\bar R^{\frac{3}{2}}\phi(x-x_0)Q_i[\bar R(x-x_0)],
\end{equation*}
where $x_0$ is some point to be determined, $\phi$ is a smooth non-negative cut-off radial function such that $\phi(x)\equiv 1$ for $|x|\le 1$ and $\phi(x)\equiv 0$ for $|x|\ge 2$, and $A_i^{\bar R}$ is chosen such that $\|Q_i^{\bar R}\|_2=1$. We then estimate each $A_i^{\bar R}$ and $E_{ij}:=(Q_i^{\bar R},Q_j^{\bar R})_{L^2}$. By Theorem \ref{tha} we see that
\begin{equation*}
|A_i^{\bar R}-1|=\Big|\int_{\mathbb{R}^3\backslash{B_{\bar R}(x_0)}}(A_i^{\bar R})^2(\phi^2({{\bar R}}^{-1}x)-1)Q^2(x)dx\Big|\le C{\bar R}^{-5}
\end{equation*}
and
\begin{equation*}
\begin{aligned}
|E_{ij}|=|(Q_i^{\bar R},Q_j^{\bar R})_{L^2}|&=\int_{\mathbb{R}^3}{\bar R}^3Q_i({\bar R}(x-x_0))Q_j({\bar R}(x-x_0))dx+O({\bar R}^{-5})=\delta_{ij}+O({\bar R}^{-5}).
\end{aligned}
\end{equation*}
Furthermore, we can establish a Gram matrix
\begin{equation*}
G_{\bar R}:=\left( \begin{matrix}
1 &(Q_1^{\bar R}, Q_2^{\bar R})_{L^2}& \cdots & (Q_1^{\bar R}, Q_r^{\bar R})_{L^2} \\
(Q_2^{\bar R},Q_1^{\bar R})_{L^2} & 1& \cdots & (Q_2^{\bar R}, Q_r^{\bar R})_{L^2} \\
\vdots & \vdots& \ddots  & \vdots\\
(Q_r^{\bar R}, Q_1^{\bar R})_{L^2} & (Q_r^{\bar R}, Q_2^{\bar R})_{L^2} & \ldots & 1
\end{matrix}\right)_{r\times r}.
\end{equation*}
Taking $R$ large enough such that the Gram matrix is invertible, we define
\begin{equation*}
(\tilde Q_1^{\bar R},\tilde Q_2^{\bar R},...,\tilde Q_r^{\bar R}):=(Q_1^{\bar R},Q_2^{\bar R},...,Q_r^{\bar R})G_{\bar R}^{-\frac{1}{2}}.
\end{equation*}
It then follows that
\begin{equation*}
(\tilde Q_i^{\bar R},\tilde Q_j^{\bar R})_{L^2}=\delta_{ij},\quad \forall\ 1\le i,j\le r.
\end{equation*}
Set $$\tilde \gamma=\sum_{i=1}^r |\tilde Q_i^{\bar R} \rangle \langle \tilde Q_i^{\bar R}|.$$
Next, we first estimate potential term. Expanding the Gram matrix, we have
\begin{equation*}
\begin{aligned}
(\tilde Q_1^{\bar R},\tilde Q_2^{\bar R},...,\tilde Q_r^{\bar R})=&(Q_1^{\bar R},Q_2^{\bar R},...,Q_r^{\bar R}) +O(e_{\bar R}^2)\\
&-\frac{1}{2}\big(\sum_{i=2}^rE_{i1}Q_i^{\bar R}, \sum_{i=1,i\not =2}^r E_{i2}Q_i^{\bar R},\ldots,\sum_{i=1,i\not =j}^r E_{ij}Q_i^{\bar R},\ldots,\sum_{i=1}^{r-1}E_{ir}Q_i^{\bar R}\big),
\end{aligned}
\end{equation*}
where
$e_{\bar R}=\max_{i\not=j}|E_{ij}|=O({\bar R}^{-5})$.

We can derive for all $i=1,2,\cdots, r$,
\begin{equation*}
\begin{aligned}
\int_{\mathbb{R}^3}V(x)|\tilde Q_i^{\bar R}|^2dx&\le \int_{\mathbb{R}^3}V(x)[Q_i^{\bar R}-\sum_{i\not=j}\frac{1}{2}E_{ji} Q_j^{\bar R}+O(e_{\bar R}^2)]^2dx\\
&\le \int_{\mathbb{R}^3}V(\frac{x}{{\bar R}}+x_0)\phi^2(\frac{x}{{\bar R}})Q_i^2dx+O({\bar R}^{-5})\\
&\le V(x_0)+O({\bar R}^{-5})
\end{aligned}
\end{equation*}
and
\begin{equation}\label{3.3.1}
\T(\tilde \gamma V(x))\le V(x_0)\T(\gamma_0)+O({\bar R}^{-5}).
\end{equation}
Next, we will estimate the term with fractional Laplacian by following the ideas in \cite{yy},
\begin{equation*}
\int_{\mathbb{R}^3}|(-\Delta+m^2)^{\frac{1}{4}}\tilde Q_i^{\bar R}|^2dx={\bar R}\int_{\mathbb{R}^3}|(-\Delta+m^2{\bar R}^{-2})^{\frac{1}{4}}[\phi({\bar R}^{-1}x)Q_i]|^2dx+O({\bar R}^{-5}).
\end{equation*}
By Lemma 3 in \cite{ly1}, there holds
\begin{equation*}
\sqrt{-\Delta+{\bar R}^{-2}m^2}\le \sqrt{-\Delta} +\frac{1}{2}{\bar R}^{-2}m^2(-\Delta)^{-\frac{1}{2}}.
\end{equation*}
We now have
\begin{equation*}
\begin{aligned}
{\bar R}\int_{\mathbb{R}^3}|(-\Delta+m^2{\bar R}^{-2})^{\frac{1}{4}}\phi({\bar R}^{-1}x)Q_i|^2dx\le& {\bar R}\int_{\mathbb{R}^3}\phi({\bar R}^{-1}x)Q_i(x)(-\Delta)^{\frac{1}{2}}\big[\phi({\bar R}^{-1}x)Q_i(x)\big]dx\\
&+\frac{m^2}{2{\bar R}}\int_{\mathbb{R}^3}\phi({\bar R}^{-1}x)Q_i(x)(-\Delta)^{-\frac{1}{2}}[\phi({\bar R}^{-1}x)Q_i(x)]dx\\
:=&I+II.
\end{aligned}
\end{equation*}
We can divide I into three parts
\begin{equation*}
\begin{aligned}
I\le& {\bar R}\Big(\int_{\mathbb{R}^3}Q_i(x)\sqrt{-\Delta}Q_i(x)dx+\int_{\mathbb{R}^3}(\phi({\bar R}^{-1}x)-1)Q_i(x)\sqrt{-\Delta}Q_i(x)dx\\
&+\int_{\mathbb{R}^3}\phi({\bar R}^{-1}x)Q_i(x)\sqrt{-\Delta}\big[(\phi({\bar R}^{-1}x)-1)Q_i(x)\big]dx    \Big)\\
:=&{\bar R}\int_{\mathbb{R}^3}Q_i(x)\sqrt{-\Delta}Q_i(x)dx+I.I+I.II.
\end{aligned}
\end{equation*}
Due to $Q_i$ satisfied
\begin{equation*}
\sqrt{-\Delta} Q_i-2\rho_\gamma*|x|^{-1}Q_i=\mu_i Q_i,
\end{equation*}
it follows from Theorem \ref{tha}, we have
\begin{equation*}
|\sqrt{-\Delta} Q_i|\le (\frac{2C}{|x|}+|\mu_i|)|Q_i|.
\end{equation*}
It is easy to show that
\begin{equation*}
\begin{aligned}
|I.I|\le {\bar R}C\int_{\mathbb{R}^3\backslash B_{\bar R}(0)}|Q_i|^2dx
\le C{\bar R}^{-4}.
\end{aligned}
\end{equation*}
For term I.II, we need to use an estimate of commutators in \cite{ll1} that
\begin{equation*}
\big\|\big[\sqrt{-\Delta},\ \phi({\bar R}^{-1}x)\big]\big\|_{L^2 \to L^2}\le C\big\|\nabla \phi({\bar R}^{-1}x)\big\|_\infty.
\end{equation*}
Hence, we get
\begin{equation*}
\begin{aligned}
|I.II|\le&{\bar R}\Big|\int_{\mathbb{R}^3}(\phi({\bar R}^{-1}x)-1)Q(x)\Big(\phi({\bar R}^{-1}x)\sqrt{-\Delta}+\Big[\sqrt{-\Delta},\phi({\bar R}^{-1}x)\Big]\Big)Q_i(x)dx\Big|\\
\le& C{\bar R}\int_{\mathbb{R}^3\backslash B_{\bar R}(0)}|Q_i(x)||\sqrt{-\Delta}Q_i(x)|dx +\bar R\Big(\int_{\mathbb{R}^3\backslash B_{\bar R}(0)}Q^2(x)dx\Big)^{\frac{1}{2}}\big|\big|\big[\sqrt{-\Delta}, \phi({\bar R}^{-1}x)\big]\big|\big|_{L^2\to L^2}\|Q_i\|_2\\
\le& C{\bar R}^{-4}+C{\bar R}^{-\frac{5}{2}}=C{\bar R}^{-\frac{5}{2}}.
\end{aligned}
\end{equation*}
For the last term II, using Fourier transform and Plancherel Theorem, we have
\begin{equation*}
\begin{aligned}
|II|&=\frac{m^2}{2{\bar R}}\Big|\int_{\mathbb{R}^3}\big|(\phi(\frac{x}{{\bar R}})Q_i)^\wedge (\xi)\big|^2|\xi|^{-1}d\xi \Big|=\frac{C}{{\bar R}}\Big|\int_{\mathbb{R}^3}\Big[(\phi(\frac{x}{{\bar R}})Q_i(x))*|x|^{-2}\Big]\phi(\frac{x}{{\bar R}})Q_i(x)dx\Big|.
\end{aligned}
\end{equation*}
It follows from Hardy-Littlewood-Sobolev inequality and Theorem \ref{tha} in the Appendix that
\begin{equation*}
|II|\le C{\bar R}^{-1}\|\phi(\frac{x}{{\bar R}})Q_i(x)\|_{\frac{3}{2}}^2\le C{\bar R}^{-1}.
\end{equation*}
In conclusion, we have
\begin{equation*}
\int_{\mathbb{R}^3}|(-\Delta+m^2)^{\frac{1}{4}}\tilde Q_i^{\bar R}|^2dx={\bar R}\int_{\mathbb{R}^3}|(-\Delta)^{\frac{1}{4}}Q_i|^2dx+O({\bar R}^{-1})
\end{equation*}
and
\begin{equation}\label{3.3.3}
\T(\sqrt{-\Delta+m^2}\tilde\gamma)=\bar R\T(\sqrt{-\Delta}\gamma_0)+O({\bar R}^{-1}).
\end{equation}

Another key step in the proof is to estimate the interaction terms. Expanding the interaction term, we have
\begin{equation*}
\begin{aligned}
&\int_{\mathbb{R}^3}\int_{\mathbb{R}^3}\frac{\rho_{\tilde \gamma}(x)\rho_{\tilde \gamma}(y)}{|x-y|}dxdy=\int_{\mathbb{R}^3}\int_{\mathbb{R}^3}\frac{\sum_{i=1}^r(Q_i^{\bar R})^2(x)\sum_{j=1}^r(Q_j^{\bar R})^2(y)}{|x-y|}dxdy+O({\bar R}^{-5})\\
=&\sum_{i=1}^r\int_{\mathbb{R}^3}\int_{\mathbb{R}^3}\frac{(Q_i^{\bar R})^2(x)(Q_i^{\bar R})^2(y)}{|x-y|}dxdy+2\sum_{i=1}^r\sum_{j=1}^r\int_{\mathbb{R}^3}\int_{\mathbb{R}^3}\frac{(Q_i^{\bar R})^2(x)(Q_j^{\bar R})^2(y)}{|x-y|}dxdy
+O({\bar R}^{-5}).
\end{aligned}
\end{equation*}

For any $1 \le i,j\le N$, we have
\begin{equation}\label{3.3.5}
\begin{aligned}
\int_{\mathbb{R}^3}\int_{\mathbb{R}^3}\frac{(Q_i^{\bar R})^2(x)(Q_j^{\bar R})^2(y)}{|x-y|}dxdy=&{\bar R}\int_{\mathbb{R}^3}\int_{\mathbb{R}^3}\frac{\phi^2(\frac{x}{{\bar R}})Q_i^2(x)\phi^2(\frac{y}{{\bar R}})Q_i^2(y)}{|x-y|}dxdy+O({\bar R}^{-5})\\
=&{\bar R}\int_{\mathbb{R}^3}(Q_i^2*|x|^{-1})Q_j^2dx+{\bar R}\int_{\mathbb{R}^3}\frac{\phi^2(\frac{x}{{\bar R}})-1}{|x-y|}Q_i^2(x)Q_j^2(y)dxdy\\
&+{\bar R}\int_{\mathbb{R}^3}\int_{\mathbb{R}^3}\frac{(\phi^2(\frac{y}{{\bar R}})-1)\phi^2(\frac{x}{{\bar R}})}{|x-y|}Q_i^2(x)Q_j^2(y)dxdy+O({\bar R}^{-5})\\
=&{\bar R}\int_{\mathbb{R}^3}(Q_i^2*|x|^{-1})Q_j^2dx+III+IV+O({\bar R}^{-5}).
\end{aligned}
\end{equation}
It follows from Theorem \ref{tha}, Newton theorem and Hardy-Kato inequality and that
\begin{equation}\label{3.4}
|III|\le {\bar R}\int_{\mathbb{R}^3}\frac{Q_j^2(y)}{|y|}dy\int_{\mathbb{R}^3}\frac{C|\phi^2({\bar R}^{-1}x)-1|}{1+|x|^8}dx
\le C{\bar R}\int_{\mathbb{R}^3\backslash B_{\bar R}(0)}\frac{C}{1+|x|^8}dx\le C{\bar R}^{-4}
\end{equation}
and
\begin{equation}\label{3.5}
|IV|\le{\bar R}\int_{\mathbb{R}^3}\frac{\phi^2({\bar R}^{-1}x)Q_i^2(x)}{|x|}dx\int_{\mathbb{R}^3}\frac{C|\phi^2({\bar R}^{-1}y)-1|}{|1+|y|^8}dy
\le C{\bar R}^{-4}.
\end{equation}
Taking (\ref{3.4}) and (\ref{3.5}) into (\ref{3.3.5}), we obtain that
\begin{equation*}
\int_{\mathbb{R}^3}\int_{\mathbb{R}^3}\frac{(Q_i^{\bar R})^2(x)(Q_j^{\bar R})^2(y)}{|x-y|}dxdy={\bar R}\int_{\mathbb{R}^3}(Q_i^2*|x|^{-1})Q_j^2dx+O({\bar R}^{-4})
\end{equation*}
and
\begin{equation}\label{3.6}
\int_{\mathbb{R}^3}\int_{\mathbb{R}^3}\frac{\rho_{\tilde \gamma}(x)\rho_{\tilde \gamma}(y)}{|x-y|}dxdy=\bar R\int_{\mathbb{R}^3}\int_{\mathbb{R}^3}\frac{\rho_{ \gamma_0}(x)\rho_{\gamma_0}(y)}{|x-y|}dxdy+O({\bar R}^{-4}).
\end{equation}

To summarize, substituting (\ref{3.3.1}), (\ref{3.3.3}) and (\ref{3.6}) into $\mathcal{E}_K(\tilde \gamma)$, we get
\begin{equation}\label{eq3.07}
\begin{aligned}
\mathcal{E}_K(\tilde \gamma)&\le {\bar R}\Big[\text{Tr}(\sqrt{-\Delta}\gamma_0)-K\int_{\mathbb{R}^3}\rho_{\gamma_0}*|x|^{-1}\rho_{\gamma_0}dx\Big]+V(x_0)\text{Tr}(\gamma_0) +O({\bar R}^{-1})\\
&\le {\bar R}\frac{\mathcal{K}_\infty^{(N)}-K}{\mathcal{K}_\infty^{(N)}}\T(\sqrt{-\Delta}\gamma_0)+rV(x_0)+C\mathcal{K}_\infty^{(N)}\bar R^{-1}.
\end{aligned}
\end{equation}
 Choosing $x_0\in \mathbb{R}^3$ such that $V(x_0)=0$ and $$\bar R=\mathcal{K}_\infty^{(N)}\sqrt{\frac{C\T(\sqrt{-\Delta}\gamma_0)}{\mathcal{K}_\infty^{(N)}-K}}\to\infty \quad\text{ as $K\nearrow \mathcal{K}_\infty^{(N)}$},$$  then we deduce that 
\begin{equation}\label{3.7}
\begin{split}
    E_K(N)&\leq \mathcal{E}_K(\tilde \gamma) \le {\bar R}\frac{\mathcal{K}_\infty^{(N)}-K}{\mathcal{K}_\infty^{(N)}}\T(\sqrt{-\Delta}\gamma_0)+C\mathcal{K}_\infty^{(N)}\bar R^{-1}= 2\sqrt{C\T(\sqrt{-\Delta}\gamma_0})\sqrt{\mathcal{K}_\infty^{(N)}-K}\\
    &=O\big((\mathcal{K}_\infty^{(N)}-K)^\frac12\big) \to 0\quad \text{as\ } K\nearrow \mathcal{K}_\infty^{(N)}. 
\end{split}
\end{equation}

If $\Rank \gamma_K\le [\frac{N}{2}]$, from (iv) in Theorem \ref{th1} we know that $ \mathcal{K}_\infty^{([\frac{N}{2}])}>\mathcal{K}_\infty^{(N)}$.   then we have
\begin{equation*}
\begin{split}
    \mathcal{E}_K(N)&\ge (1-\frac{K}{\mathcal{K}_\infty^{([\frac{N}{2}])}})\T(\sqrt{-\Delta+m^2}\gamma_K)+\int_{\mathbb{R}^3}V(x)\rho_{\gamma_K} dx\ge \frac{\mathcal{K}_\infty^{([\frac{N}{2}])}-\mathcal{K}_\infty^{(N)}}{\mathcal{K}_\infty^{([\frac{N}{2}])}}\Big(\T(\sqrt{-\Delta+m^2}\gamma_K)+\int_{\mathbb{R}^3}V(x)\rho_{\gamma_K} dx\Big)\\
    &\geq \frac{\mathcal{K}_\infty^{([\frac{N}{2}])}-\mathcal{K}_\infty^{(N)}}{\mathcal{K}_\infty^{([\frac{N}{2}])}}\mu_V.
\end{split}
\end{equation*}
This contradicts \eqref{3.7}.

\text{\bf (iii). Nonexistence for $K\in [\mathcal{K}_\infty^{(N)},+\infty)$.}
When  $K> \mathcal{K}_\infty^{(N)}$ is fixed, we deduce that  $E_K(N)=-\infty $ by taking $\bar R\to\infty$ in (\ref{eq3.07}), which means (\ref{mini-N}) has no minimizer. 

When $K=\mathcal{K}_\infty^{(N)}$, we see from \eqref{3.7} that  $E_{{K}_\infty^{(N)}}(N)=0$. On the contrary, if there is a minimizer $\gamma_0$ of $E_{\mathcal{K}_\infty^{(N)}}$, we deduce that 
\begin{equation*}
\int_{\mathbb{R}^3}V(x)\rho_{\gamma_0}(x)dx=0
\end{equation*}
and 
\begin{equation*}
\text{Tr}(\sqrt{-\Delta}\gamma_0)=\mathcal{K}_\infty^{(N)}\int_{\mathbb{R}^3}(\rho_{\gamma_0}*|x|^{-1})\rho_{\gamma_0}dx.
\end{equation*}
This leads to a contradiction, since the first identity indicates that 
$\rho_{\gamma_0}$ must be a compact support function for $V(x)\to \infty$ as $|x|\to \infty$. However, the second identity indicates that $\gamma_0$ is indeed an optimizer for GNS-inequality \eqref{1.1}, and thus   $\rho_{\gamma_0}(x)>0$ in $\mathbb{R}^3$. \qed

\section{Asymptotic behavior of minimizers} \label{Sec-4}

In this section, we investigate the asymptotic behavior of minimizers for (\ref{mini-N}) as $K\nearrow \mathcal{K}_\infty^{(N)}$. We first establish rough blow-up properties for general potentials, then derive quantitative estimates of minimizers under additional assumptions on the potential.

\begin{lemma}\label{lem3.1}
Let $\gamma_K$ be a minimizer for (\ref{mini-N}), then $\T (\sqrt{-\Delta}\gamma_K)\to \infty$, as $K\nearrow \mathcal{K}_\infty^{(N)}$.
\end{lemma}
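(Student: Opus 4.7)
The plan is to argue by contradiction. Suppose instead that along some sequence $K_k \nearrow \mathcal{K}_\infty^{(N)}$, the kinetic energies $\T(\sqrt{-\Delta}\gamma_{K_k})$ remain uniformly bounded. Writing $\gamma_{K_k} = \sum_{i=1}^{r_k}|u_i^k\rangle\langle u_i^k|$, I may pass to a subsequence so that the rank $r_k$ stabilizes to some integer $r \in [[N/2]+1, N]$ (using Theorem \ref{th2}(ii)). From the GNS inequality \eqref{1.1} applied with $\alpha = 1$, $q = \infty$ and $\|\gamma_{K_k}\| \le 1$, the interaction terms are then also uniformly bounded, and combined with the trivial bound $\sqrt{-\Delta + m^2} \ge \sqrt{-\Delta}$, this forces $\int V(x)\rho_{\gamma_{K_k}}\,dx$ to stay uniformly bounded as well (since $E_{K_k}(N) \to 0$ by \eqref{3.7}).

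Next I would extract compactness. Since $\{u_i^k\}_k$ is bounded in $\mathcal{H} \hookrightarrow H^{1/2}(\mathbb{R}^3)$ and $\int V(x)|u_i^k|^2\,dx$ is bounded with $V(x) \to \infty$ at infinity, standard compactness arguments (as in the proof of Theorem \ref{th2}(i)) give, up to a subsequence,
\begin{equation*}
u_i^k \rightharpoonup u_i \ \text{weakly in } \mathcal{H}, \qquad u_i^k \to u_i \ \text{strongly in } L^s(\mathbb{R}^3),\ 2 \le s < 3,
\end{equation*}
for each $1 \le i \le r$. Strong $L^2$ convergence preserves orthonormality: $(u_i, u_j)_{L^2} = \delta_{ij}$, so $\gamma_0 := \sum_{i=1}^r |u_i\rangle\langle u_i|$ has rank exactly $r \ge 1$ and in particular $\gamma_0 \ne 0$. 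By strong $L^s$ convergence of $\{u_i^k\}$ for suitable $s$ and the Hardy--Littlewood--Sobolev inequality, the Hartree term passes to the limit, and by Fatou plus weak lower semicontinuity of $\T(\sqrt{-\Delta+m^2}\cdot)$ and $\int V\rho_\cdot\,dx$, I obtain
\begin{equation*}
0 = \lim_{k\to\infty} E_{K_k}(N) = \liminf_{k\to\infty}\mathcal{E}_{K_k}(\gamma_{K_k}) \ge \mathcal{E}_{\mathcal{K}_\infty^{(N)}}(\gamma_0).
\end{equation*}

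To close the argument I now show $\mathcal{E}_{\mathcal{K}_\infty^{(N)}}(\gamma_0) > 0$, contradicting the above. Using the GNS inequality with $\|\gamma_0\| \le 1$,
\begin{equation*}
\mathcal{K}_\infty^{(N)}\int_{\mathbb{R}^3}(\rho_{\gamma_0}*|x|^{-1})\rho_{\gamma_0}\,dx \le \T(\sqrt{-\Delta}\gamma_0),
\end{equation*}
so
\begin{equation*}
\mathcal{E}_{\mathcal{K}_\infty^{(N)}}(\gamma_0) \ge \T\bigl((\sqrt{-\Delta+m^2}-\sqrt{-\Delta})\gamma_0\bigr) + \int_{\mathbb{R}^3}V(x)\rho_{\gamma_0}(x)\,dx.
\end{equation*}
The second term is non-negative, and the first is strictly positive whenever $m > 0$ and $\gamma_0 \ne 0$, since $\sqrt{|\xi|^2 + m^2} - |\xi| > 0$ pointwise on Fourier side. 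This yields the desired contradiction.

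The main obstacle is the compactness/lower-semicontinuity step: one must verify that the nonlocal Hartree functional converges along the weakly convergent sequence (not merely that it is lower semicontinuous), which is where the coercivity of $V$ is essential to upgrade weak $H^{1/2}$ convergence to strong $L^2$ (and hence $L^s$ for $2 \le s < 3$) convergence of each $u_i^k$. Once this is in hand, the rest is an application of the GNS inequality combined with the strict gap $\sqrt{-\Delta+m^2} > \sqrt{-\Delta}$ coming from the positive mass $m > 0$.
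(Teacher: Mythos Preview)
Your proof is correct and follows the same contradiction-via-compactness structure as the paper: assume bounded kinetic energy along a subsequence, use coercivity of $V$ to get strong $L^s$ convergence of the orbitals, and pass to a limit operator $\gamma_0$ with $\mathcal{E}_{\mathcal{K}_\infty^{(N)}}(\gamma_0)\le 0$.

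The only difference is in how the contradiction is closed. The paper simply observes that $\gamma_0$ would then be a minimizer of $E_{\mathcal{K}_\infty^{(N)}}(N)$ and invokes the nonexistence part of Theorem~\ref{th2}(iii). You instead bypass that citation and derive $\mathcal{E}_{\mathcal{K}_\infty^{(N)}}(\gamma_0)>0$ directly from the GNS inequality together with the strict pointwise gap $\sqrt{|\xi|^2+m^2}-|\xi|>0$ for $m>0$. Your route is slightly more self-contained (it does not rely on the compact-support argument used in the paper's proof of nonexistence at the threshold), while the paper's route is shorter once Theorem~\ref{th2}(iii) is already in hand. Both are valid; the invocation of Theorem~\ref{th2}(ii) to pin down $r\ge[\frac{N}{2}]+1$ is unnecessary for the argument but harmless.
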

\begin{proof}

We argue by contradiction. Assume that there exists a subsequence $K_k\nearrow \mathcal{K}_\infty^{(N)}$ as $k\to\infty$, such that    $$\gamma_{K_k}:=\gamma_{k}=\sum_{i=1}^{r_k} |u_i^{k}\rangle\langle u_i^{k}|$$
satisfies  $\T(\sqrt{-\Delta}\gamma_k)\le C$. Since $1\leq r_k\leq N$ is an integer, up to a subsequence, we may assume that $r_k\equiv r\in [1,N]$. Then, $\int_{\mathbb{R}^3}V(x)\rho_{\gamma_k}dx\le C$ follows by (\ref{3.0}).
Thus, for every $1\le i\le r$, there exists a bounded sequence $\{u_i^{k}\}$ in $\mathcal{H}$ such that
\begin{equation*}
u_i^{k} \overset{k}\to u_i^0,\ \text{in}\ L^q(\mathbb{R}^3),\ \text{for}\ 2\le q<3.
\end{equation*}
Define $\tilde \gamma_{0}=\sum_{i=1}^r|u_i^0\rangle \langle u_i^0|$, we see that 
\begin{equation*}
\int_{\mathbb{R}^3} (\rho_{\gamma_{k}}*|x|^{-1}) \rho_{\gamma_{k}}dx\to \int_{\mathbb{R}^3} (\rho_{\tilde \gamma_{0}}*|x|^{-1}) \rho_{\tilde \gamma_{0}} dx,\quad \text{as}\ k\to\infty
\end{equation*}
and then 
\begin{equation*}
0=E_{\mathcal{K}_\infty^{(N)}}(N)\le \mathcal{E}_{\mathcal{K}_\infty^{(N)}}(\tilde \gamma_{0})=\lim_{k\to\infty}\mathcal{E}_{K_k}(\gamma_{k})=0.
\end{equation*}
This indicates that $\tilde \gamma_{0}$ is a minimizer of $E_{\mathcal{K}_\infty^{(N)}}$,  which however contradicts \textbf{(ii)} of Theorem \ref{th2}.
\end{proof}

After the preparations, we will show the asymptotic behavior of minimizer.\\ 

\noindent\textit{ Proof of Theorem \ref{th3}.}
It follows from Lemma \ref{lem3.1} that there exist $K_k\nearrow \mathcal{K}_\infty^{(N)}$ as $k\to\infty$ such that
\begin{equation*}
\text{Tr}(\sqrt{-\Delta}\gamma_{k})\to \infty,\ \text{as}\ k\to \infty,
\end{equation*}
where $\gamma_{k}=\sum_{i=1}^{r_k} |u_i^k \rangle\langle u_i^k|$. Since $1\leq r_k\leq N$ is an integer, up to a subsequence, we may assume that $r_k\equiv r\in [1,N]$. We set $\rho_k:=\rho_{\gamma_k}$ and deduce from 
\eqref{1.1} that 
\begin{equation}\label{eq-4.1}
\begin{aligned}
\lim_{k\to\infty}\big[\text{Tr}(\sqrt{-\Delta} \gamma_{k})-K_k\int_{\mathbb{R}^3} (\rho_{k}*|x|^{-1})\rho_{k}dx\big]=0 \ \text{ and }\lim_{k\to\infty}\int_{\mathbb{R}^3}V(x) \rho_{k}dx=0.
\end{aligned}
\end{equation}
Let
\begin{equation*}
\tilde \gamma_k=\sum_{i=1}^r |w_i^k\rangle \langle w_i^k| \ \text{ with }\ w_i^k=\varepsilon_k^\frac{3}{2} u_i^k(\varepsilon_k x+z_k),
\end{equation*}
where $\varepsilon_k\overset{k}{\to}0^+$ is given by \eqref{eq-eps} and    $z_k\in \mathbb{R}^3$  is to be determined later.  It then follows  that  
\begin{equation}\label{4.1}
\text{Tr}(\sqrt{-\Delta}\tilde \gamma_k)\equiv1,\quad K_k\int_{\mathbb{R}^3}(\tilde \rho_{k}*|x|^{-1})\tilde \rho_{k}dx\to 1,\ \text{as}\ k\to \infty,
\end{equation}
where $\tilde \rho_k:=\rho_{\tilde \gamma_k}$.
From Hoffmann-Ostenhof type inequality we know that $\sqrt{\tilde \rho_{k}}$ is bounded in $H^{\frac{1}{2}}(\mathbb{R}^3)$. We claim that $\sqrt{\tilde \rho_{k}}$ is non-vanishing. In fact, if the assertion does not hold, then by vanishing lemma, we can derive that 
$\tilde \rho_{k}\overset{k}\to 0\ \text{in}\ L^q(\mathbb{R}^3),\forall\ 1< q<\frac{3}{2}.$
This further indicates 
$\int_{\mathbb{R}^3}(\tilde \rho_{k}*|x|^{-1})\tilde \rho_{k}dx\overset{k}\to 0,$
which contradicts (\ref{4.1}).
Thus, there exist $\{z_k\}$, and $M>0$ and $\beta\in(0,1)$ such that
\begin{equation}\label{eq-4.3}
\liminf_{K\nearrow \mathcal{K}_\infty^{(N)}}\int_{B_M(z_k)}\tilde \rho_{k}dx\ge \beta>0.
\end{equation}
Next, we intend to prove that  
\begin{equation*}%\label{eq-4.04}
    \lim_{k\to\infty}\text{dist}(z_k,\Lambda )=0, \ \text{ where $\Lambda$ defined in (\ref{eq-1.15})},
\end{equation*} 
which indicates that, up to a subsequence, $z_k\overset{k}\to z_0\in \Lambda$.
  If not, there exists $\delta>0$ such that 
\begin{equation*}
\text{dist}(z_k,\Lambda )\ge \delta>0,\ \text{as}\ k\to \infty.
\end{equation*}
This together with $V\in C(\mathbb{R}^3)$ implies there exists $C(\delta)>0$ such  that $V(z_k)\ge C(\delta)>0$ and then 
\begin{equation*}
\begin{aligned}
\liminf_{k\to\infty}\int_{\mathbb{R}^3}V(x) \rho_{k}dx=& \liminf_{k\to\infty}\int_{\mathbb{R}^3}V(\varepsilon_k x+z_k) \tilde \rho_{k}dx
\ge\int_{B_M(0)}\liminf_{k\to\infty}V(\varepsilon_k x+z_k)\tilde \rho_{k}dx
\ge \frac{C(\delta)\beta}{2}>0.
\end{aligned}
\end{equation*}
This  contradicts \eqref{eq-4.1}.   From \eqref{eq-4.3} we see  that there exists a positive operator $0\not =\gamma\in \mathcal{S}^1$ such that 
 \begin{equation*}%\label{eq4.03}
 \tilde \gamma_k\overset{\star}\rightharpoonup \gamma \ \text{ in } \mathcal{S}^1\ \text{ and } \rho_{\tilde \gamma_k} \to \rho_\gamma \text{ in } L_{\rm loc}^q (\mathbb{R}^3)\  \forall \ q\in[1,\frac32],\  \text{ as } k\to\infty.
\end{equation*}
As a consequence, there hold 
\begin{equation*}
\begin{aligned}
\T(\sqrt{-\Delta+m^2\varepsilon_k^2} \tilde \gamma_k)\geq \T(\sqrt{-\Delta} \tilde \gamma_k) 
=\T(\sqrt{-\Delta}\gamma)+\T\big(\sqrt{-\Delta} (\tilde \gamma_k-\gamma)\big)
\end{aligned}
\end{equation*}
and 
\begin{equation}\label{eq4.5}
\int_{\mathbb{R}^3}\Big(\tilde \rho_{k}*|x|^{-1}\Big)\tilde \rho_{k}dx=\int_{\mathbb{R}^3}(\rho_\gamma)*|x|^{-1})\rho_{\gamma}dx+\int_{\mathbb{R}^3}(\rho_{\tilde \gamma_k-\gamma})*|x|^{-1})\rho_{\tilde \gamma_k-\gamma}dx+o_k(1).
\end{equation}
Plugging the above two estimates into \eqref{eq-4.1}, and applying \eqref{1.1},  we get  that
\begin{equation}\label{eq4.05}
\begin{aligned}
0= &\lim_{k\to \infty} \Big[\text{Tr}(\sqrt{-\Delta}\tilde \gamma_k)-K_k\int_{\mathbb{R}^3}(\tilde \rho_{k}*{|x|^{-1}})\tilde \rho_{k}dx \Big]= \T(\sqrt{-\Delta}\gamma)-\mathcal{K}_\infty^{(N)}\int_{\mathbb{R}^3}(\rho_{ \gamma}*{|x|^{-1}})\rho_{ \gamma}dx\\&
+\lim_{k\to\infty}\Big[\text{Tr}\big(\sqrt{-\Delta}(\tilde \gamma_k-\gamma)\big)-K_k\int_{\mathbb{R}^3}( \rho_{\tilde\gamma_k-\gamma})*|x|^{-1}\rho_{\tilde\gamma_k-\gamma}dx\Big]\\
&\geq \T(\sqrt{-\Delta}\gamma)-\mathcal{K}_\infty^{(N)}\int_{\mathbb{R}^3}(\rho_{ \gamma}*{|x|^{-1}})\rho_{ \gamma}dx\\
&\geq \|\gamma\|\T(\sqrt{-\Delta}\gamma)-\mathcal{K}_\infty^{(N)}\int_{\mathbb{R}^3}(\rho_{ \gamma}*{|x|^{-1}})\rho_{ \gamma}dx\geq0,
\end{aligned}
\end{equation}
due to $\|\gamma\|\le \lim_{k\to \infty}\|\tilde \gamma_k\|=1$.
This together with the fact that $\gamma$ is finite rank operator implies that $\gamma\in\mathcal{S}^1$ with $\|\gamma\|=1$     is an optimizer of (\ref{1.1}) and 
\begin{equation}\label{eq-4.7}
    \lim_{k\to\infty}\Big[\text{Tr}\big(\sqrt{-\Delta}(\tilde \gamma_k-\gamma)\big)-K_k\int_{\mathbb{R}^3}( \rho_{\tilde\gamma_k-\gamma})*|x|^{-1}\rho_{\tilde\gamma_k-\gamma}dx\Big]=0.
\end{equation}
We denote   
$$\gamma=\sum_{i=1}^R|Q_i \rangle \langle Q_i|,\quad (Q_i,Q_j)=\delta_{ij},\ i,j=1,2,\cdots,R,  $$ where $[\frac{N}{2}]+1\le R\le r$  by \eqref{eq1.8}, and $Q_i$ is $i$-th eigenfunction of $H_{1,\gamma}$.

We claim that 
\begin{equation}\label{eq4.8}
    \T\big(\sqrt{-\Delta}(\tilde \gamma_k-\gamma)\big)=\T(\sqrt{-\Delta}\tilde \gamma_k)-\T(\sqrt{-\Delta}\gamma)=o_k(1).
\end{equation}
For otherwise, it follows  from \eqref{eq-4.7} that there exists $C>0$ such that 
$$\T\big(\sqrt{-\Delta}(\tilde \gamma_k-\gamma)\big), \int_{\mathbb{R}^3}( \rho_{\tilde\gamma_k-\gamma})*|x|^{-1}\rho_{\tilde\gamma_k-\gamma}dx\geq C>0. $$
Proceeding the same arguments between \eqref{eq-4.3} and \eqref{eq4.05},  one can derive that there exist $\{z'_k\}\subset\mathbb{R}^3$ satisfying $|z'_k|\overset{k}\to\infty$ and  $\tilde \gamma\in \mathcal{S}^1$ such that, by passing to a subsequence, 
\begin{equation*}
( \tilde \gamma_k-\gamma) (\cdot-z_k')\overset{\star}\rightharpoonup \tilde \gamma\not=0,\  \text{in}\ \mathcal{S}^{1}, \text{   as}\ k\to \infty
\end{equation*}
and $\tilde \gamma\in \mathcal{S}^1$ with  $\|\tilde \gamma\|=1$  is an optimizer of (\ref{1.1}).
Then, we can use \eqref{eq1.8} to deduce the follow contradiction:
\begin{equation*}
r\equiv \T(\tilde \gamma_k)=\int_{\mathbb{R}^3}\tilde \rho_kdx \ge \int_{\mathbb{R}^3}\rho_\gamma dx+ \int_{\mathbb{R}^3} \rho_{\tilde\gamma}dx=\T(\gamma)+\T(\tilde \gamma)> N\ge r.
\end{equation*}
From \eqref{eq4.5}, \eqref{eq-4.7} and \eqref{eq4.8}, we deduce that  \eqref{eq-zero} holds.
Finally, if $r=R$, it follows from  $\T \tilde \gamma_k=\T \gamma=r$ that  $\tilde \gamma_k\overset{k}\to \gamma $ in $\mathcal{S}^{1}$, and thus
\begin{equation*}
w_{i}^k(x)\to Q_i(x)\ \text{in}\ L^2(\mathbb{R}^3) \ \text{ for all }i=1,2,\cdots,r, \  \text{   as}\ \ k\to \infty. 
\end{equation*}
Then, one can further use \eqref{eq-4.7} to obtain \eqref{eq-zero2}.

\qed

%\begin{remark}
%There is a little bit difference between Boson system and Fermion system. For Boson system, since many-body system can be reduce by one-body problem by scaling, its wave function only has three possibilities, vanisher, compactness and dichotomy. By contrast, for Fermion system, if we do not constraint the number of particle, it can decompose two operators which are orthogonal in dichotomy case. Thus, in the last inequality (\ref{4.2}) of this proof, we see that there are two possibility. The first one is  $\lim_{k\to\infty}\|\eta_k\tilde \gamma_k\eta_k\|=0$, we will derive $\gamma$ is one of minimizer of (\ref{1.1}) with $\mathcal{K}_\infty^{(N)}$. The Second one is  $\lim_{k\to\infty}\|\eta_k\tilde \gamma_k\eta_k\|\not=0$. At that time, $\gamma$ is orthogonal with the limit of $\eta_k\tilde \gamma_k\eta_k$. Moreover, $\gamma$ and the limit of $\eta_k\tilde \gamma_k\eta_k$ both are minimizer of (\ref{1.1}) with $\mathcal{K}_\infty^{(N)}$. However, the second one do not occur. In view of Theorem \ref{th1}, we see if N is even,
%\begin{equation*}
 %   \mathcal{K}_\infty^{(N)}< \mathcal{K}_\infty^{(\frac{N}{2})}
%\end{equation*}
%and if N is odd,
%\begin{equation*}
  %  \mathcal{K}_\infty^{(N)}\le \mathcal{K}_\infty^{(N-1)}< \mathcal{K}_\infty^{(\frac{N-1}{2})}.
%\end{equation*}
%It is impossible that one operator whose rank less than $N$ splits into two operators which both are minimizer of (\ref{1.1}) for $ \mathcal{K}_\infty^{(N)}$.
%\end{remark}
\vskip .1truein
If we make further assumption (\ref{1.5}) for potential $V$,  we can prove that  $r=R$ holds for any $N\in\mathbb{N}^+$. Moreover,  the   energy $E_{K}(N)$ and the  blow-up rate of minimizers for \eqref{mini-N}  as $K_k\nearrow  \mathcal{K}_\infty^{(N)}$ can be calculated precisely.

\noindent\textit{Proof of Theorem \ref{th4}.}
We first establish a refined upper bound for $E_{K_k}(N)$.
Choose $x_j\in\mathcal{Z}$ and  $y\in\bar\Gamma$ with the sets $\mathcal{Z}$ and $\bar\Gamma$ being given by \eqref{eq-1.17} and \eqref{eq-1.18}, respectively. Recall  $\gamma\in \mathcal{S}^1$ is the optimizer of \eqref{1.1} obtained in Theorem \ref{th3}  and  set 
$$\gamma_{\tau_k}:=\sum_{i=1}^R |\tau_k^\frac{3}{2}Q_{i}(\tau_k (x-x_j)-y)\rangle\langle \tau_k^\frac{3}{2}Q_{i}(\tau_k (x-x_j)-y)|,$$
where  $\tau_k\overset{k}{\to}\infty$ will be determined later.  Direct calculations give that, as $\tau_k\to\infty$,
\begin{equation*}
\int_{\mathbb{R}^3}(\rho_{\gamma_{\tau_k}}*|x|^{-1})\rho_{ \gamma_{\tau_k}}dx=\tau_k\int_{\mathbb{R}^3}(\rho_{\gamma}*|x|^{-1})\rho_{ \gamma}dx \ \text{ and }\ 
    \T(\sqrt{-\Delta+m^2}\gamma_{\tau_k})=\tau_k\T(\sqrt{-\Delta}\gamma)+O(\tau_k^{-1}).
\end{equation*}
In addition of  $x_j\in \mathcal{Z}$ and $y\in\bar\Lambda$, we have 
\begin{equation*}
\int_{\mathbb{R}^3}V(x)\rho_{\gamma_{\tau_k}}dx=\int_{\mathbb{R}^3}V\big(\frac{x+y}{\tau_k}+x_j\big)\rho_{\gamma}(x)dx=\tau_k^{-p}\int_{\mathbb{R}^3}\frac{V\big(\frac{x+y}{\tau_k}+x_j\big)}{|\frac{x+y}{\tau_k}|^p}|x+y|^p\rho_{\gamma}(x)dx=\tau_{k}^{-p}\big(\iota\bar\kappa+o_k(1)\big).
\end{equation*}
Here the constants $\iota,\bar\kappa>0$ are defined in \eqref{eq-1.17} and \eqref{eq-1.18}, respectively.
Note that $0<p<1$, we take $$\tau_{k}=\Big[(p\iota\bar \kappa)^{-1}\int_{\mathbb{R}^3}(\rho_{\gamma}*|x|^{-1})\rho_{ \gamma}dx\big(\mathcal{K}_\infty^{(N)}-K_k\big)\Big]^{-\frac{1}{p+1}}\to\infty \ \text{as}\  K_k\nearrow \mathcal{K}_\infty^{(N)}$$ to obtain that 
 \begin{equation}\label{1.5.1}
\begin{aligned}
E_{K_k}(N)\le \mathcal{E}_{K_k}(\gamma_{\tau_k})\le& \tau_k\Big(\mathcal{K}_\infty^{(N)}-K_k\Big)\int_{\mathbb{R}^3}(\rho_{\gamma}*|x|^{-1})\rho_{ \gamma}dx+\iota\bar\kappa\tau_{k}^{-p}+o(\tau_k^{-p})\\
=&\big(1+o_k(1)\big)\frac{p+1}{p} (p\iota\bar\kappa)^{\frac{1}{p+1}}\Big(\int_{\mathbb{R}^3}(\rho_{\gamma}*|x|^{-1})\rho_{ \gamma}dx\Big)^{\frac{p}{p+1}}(\mathcal{K}_\infty^{(N)}-K_k)^{\frac{p}{p+1}},
\end{aligned}
\end{equation}
where in the first inequality,  we  used $\T(\sqrt{-\Delta}\gamma)=\int_{\mathbb{R}^3}(\rho_{\gamma}*|x|^{-1})\rho_{ \gamma}dx$ since $\gamma\in \mathcal{S}^1$ is a minimizer of \eqref{1.1}.

Recall $z_k \overset{k}\to z_0\in \Lambda$ and $\varepsilon_k\to 0^+$ are defined in Theorem \ref{th3}. From \eqref{1.5} we see that $z_0=x_s$ for some $1\le s\le l$. We claim that 
\begin{equation}\label{eq-claim}
    p_s=p, \ \ \lim_{k\to\infty}\frac{|z_k-x_s|}{\varepsilon_k}<\infty\ \text{ and }\ \lim_{k\to\infty}\int_{\mathbb{R}^3}\rho_{\tilde\gamma_k}dx=\int_{\mathbb{R}^3}\rho_{\gamma}dx.
\end{equation}
Indeed, from \eqref{1.1}, \eqref{eq-gamma0} and \eqref{eq-zero} we see that 
\begin{equation}\label{eq-kin}
\begin{aligned}
    &\T(\sqrt{-\Delta+m^2\varepsilon_k^2}\tilde \gamma_k)-K_k\int_{\mathbb{R}^3}(\rho_{\tilde \gamma_k}*|x|^{-1})\rho_{\tilde \gamma_k}dx\geq (\mathcal{K}_\infty^{(N)}-K_k)\int_{\mathbb{R}^3}(\rho_{\tilde \gamma_k}*|x|^{-1})\rho_{\tilde \gamma_k}dx\\
    &=\big(1+o_k(1)\big)(\mathcal{K}_\infty^{(N)}-K_k)\int_{\mathbb{R}^3}(\rho_{\gamma}*|x|^{-1})\rho_{ \gamma}dx
\end{aligned}
\end{equation}
and 
\begin{equation}\label{eq-Vss}
\int_{\mathbb{R}^3}V(\varepsilon_kx+z_k)\rho_{\tilde\gamma_{k}}dx=\varepsilon_k^{p_s}\int_{\mathbb{R}^3}\frac{V(\varepsilon_kx+z_k)}{|\varepsilon_kx+z_k-x_s|^{p_s}}\big|x+\frac{z_k-x_s}{\varepsilon_k}\big|^{p_s}\rho_{\tilde \gamma_k}(x)dx.
\end{equation}
Now, assume that \eqref{eq-claim} does not hold, which means
\begin{equation*}
\text{either }p_s<p, \text{ or } \lim_{k\to\infty}\frac{|z_k-x_s|}{\varepsilon_k}=\infty,  \text{ or }  \liminf_{k\to\infty}\int_{\mathbb{R}^3}\rho_{\tilde\gamma_k-\gamma}dx>0.
\end{equation*}
In any one case, we can always deduce from \eqref{eq-gamma0} and \eqref{eq-Vss} that,   for any given $M\gg1$, there holds
\begin{equation*}
\lim_{k\to\infty}\varepsilon_k^{-p}\int_{\mathbb{R}^3}V(\varepsilon_kx+z_k)\rho_{\tilde\gamma_{k}}dx=\lim_{k\to\infty}\varepsilon_k^{p_s-p}\int_{\mathbb{R}^3}\frac{V(\varepsilon_kx+z_k)}{|\varepsilon_kx+z_k-x_s|^{p_s}}\big|x+\frac{z_k-x_s}{\varepsilon_k}\big|^{p_s}\rho_{\tilde \gamma_k}(x)dx\gg M. 
\end{equation*}
This together with \eqref{eq-kin} indicates that 
\begin{equation*}
\begin{aligned}
    E_{K_k}(N)&=\mathcal{E}_{K_k}(\gamma_k)=\varepsilon_k^{-1}\big[\T(\sqrt{-\Delta+m^2\varepsilon_k^2}\tilde \gamma_k)-K_k\int_{\mathbb{R}^3}(\rho_{\tilde \gamma_k}*|x|^{-1})\rho_{\tilde \gamma_k}dx\big]+\int_{\mathbb{R}^3}V(\varepsilon_kx+z_k)\rho_{\tilde\gamma_{k}}dx\\
    &\ge \varepsilon_k^{-1}\big(1+o_k(1)\big)(\mathcal{K}_\infty^{(N)}-K_k)\int_{\mathbb{R}^3}(\rho_{\gamma}*|x|^{-1})\rho_{ \gamma}dx+M\varepsilon_k^{p}\ge C(M) (\mathcal{K}_\infty^{(N)}-K_k)^{\frac{p}{p+1}},
\end{aligned}
\end{equation*}
where the constant $C(M)>0$ satisfies $C(M)\to\infty$ as $M\to\infty$. This however contradicts the previous upper bound  of $E_{K_k}(N)$ in \eqref{1.5.1}. Thus claim \eqref{eq-claim} is proved.  

Because of \eqref{eq-claim}, we may  assume that, up to a subsequence,
\begin{equation*}
    \lim_{k\to\infty}\frac{z_k-x_s}{\varepsilon_k}=y\in\mathbb{R}^3.
\end{equation*}
Applying \eqref{eq-Vss} again, we get that 
\begin{equation}\label{eq-Vss2}
\lim_{k\to\infty}\varepsilon_k^{-p}\int_{\mathbb{R}^3}V(\varepsilon_kx+z_k)\rho_{\tilde\gamma_{k}}dx=\lim_{k\to\infty}\int_{\mathbb{R}^3}\frac{V(\varepsilon_kx+z_k)}{|\varepsilon_kx+z_k-x_s|^{p}}\big|x+\frac{z_k-x_s}{\varepsilon_k}\big|^{p}\rho_{\tilde \gamma_k}(x)dx=\iota_s\int_{\mathbb{R}^3}\big|x+y\big|^{p}\rho_{ \gamma}dx\geq \iota\bar\kappa,
\end{equation}
where the ``=" in the last inequality holds if and only if $x_s\in \mathcal{Z}$ and $y\in\bar\Gamma$. \eqref{eq-kin} and \eqref{eq-Vss2} indicate that
\begin{equation}\label{eq-lower}
\begin{aligned}
    E_{K_k}(N)&=\varepsilon_k^{-1}\big[\T(\sqrt{-\Delta+m^2\varepsilon_k^2}\tilde \gamma_k)-K_k\int_{\mathbb{R}^3}(\rho_{\tilde \gamma_k}*|x|^{-1})\rho_{\tilde \gamma_k}dx\big]+\int_{\mathbb{R}^3}V(\varepsilon_kx+z_k)\rho_{\tilde\gamma_{k}}dx\\
    &\ge \varepsilon_k^{-1}\big(1+o_k(1)\big)(\mathcal{K}_\infty^{(N)}-K_k)\int_{\mathbb{R}^3}(\rho_{\gamma}*|x|^{-1})\rho_{ \gamma}dx+\iota\bar\kappa\varepsilon_k^{p} \\
    &\ge \big(1+o_k(1)\big)\frac{p+1}{p} (p\iota\bar\kappa)^{\frac{1}{p+1}}\Big(\int_{\mathbb{R}^3}(\rho_{\gamma}*|x|^{-1})\rho_{ \gamma}dx\Big)^{\frac{p}{p+1}}(\mathcal{K}_\infty^{(N)}-K_k)^{\frac{p}{p+1}},
\end{aligned}
\end{equation}
where the ``="  in the last inequality holds if and only if 
$$\varepsilon_{k}=\big(1+o_k(1)\big)\Big[(p\iota\bar \kappa)^{-1}\int_{\mathbb{R}^3}(\rho_{\gamma}*|x|^{-1})\rho_{ \gamma}dx\big(\mathcal{K}_\infty^{(N)}-K_k\big)\Big]^{\frac{1}{p+1}}\to0^+ \ \text{as}\  K_k\nearrow \mathcal{K}_\infty^{(N)}.$$

Comparing the lower bound in \eqref{eq-lower} with the upper bound in 
\eqref{1.5.1}, we see that \eqref{eq-energy} holds. This further indicates that all equalities in \eqref{eq-Vss2} and \eqref{eq-lower} hold, and thus \eqref{eq-vas} and \eqref{eq-points} is proved. Finally, from \eqref{eq-claim} we know that 
$$r\equiv\int_{\mathbb{R}^3}\rho_{\tilde\gamma_k}dx=\int_{\mathbb{R}^3}\rho_{\gamma}dx=R$$ and  then \eqref{eq-zero2}  follows.  The proof of Theorem \ref{th4} is finished.
 \qed

\noindent{\bf Acknowledgments.} We would like to express our gratitude to Prof. David Gontier for his valuable clarification on the technical details of their research. This work was supported by the National Natural Science Foundation of China under grant number 12322106, 12171379, 12271417 and supported by   ``the Fundamental Research Funds for the Central Universities”.

\section{Appendix}
In this section, we will present the polynomial decay of the solution of system as follows
\begin{equation}\label{A.1}
\sqrt{-\Delta }u_i-\frac{2}{\alpha}(\rho*|x|^{-\alpha})u_i=\mu_i u_i,\quad 1\le i\le N,
\end{equation}
where $0<\alpha\le 1$, $\rho=\sum_{i=1}^N \beta_i|u_i|^2$ with $\beta_i>0$ and $\mu_i<0$, for all $1\le i\le N$.
\begin{theorem}\label{tha}
Assume that $(u_1,...,u_N)$ is the solution of system (\ref{A.1}). Then we have
\begin{equation*}
|u_i(x)|\le \frac{C}{1+|x|^4},\quad \forall\ 1\le i\le N, \ \text{ and }\ \int_{\mathbb{R}^3}\frac{\rho(y)}{|x-y|^{\alpha}}dy\le \frac{C}{1+|x|^{\alpha}}.
\end{equation*}
Furthermore, $u_1$ corresponding to the first negative eigenvalue $\mu_1$ satisfies
\begin{equation*}
u_1(x)\ge \frac{C}{1+|x|^4}.
\end{equation*}
\end{theorem}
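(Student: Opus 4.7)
My overall strategy is to exploit the resolvent representation for $\sqrt{-\Delta}$ in $\mathbb{R}^3$. Since each $\mu_i<0$, equation (\ref{A.1}) is equivalent to the integral identity
\begin{equation*}
u_i(x)=\frac{2}{\alpha}\int_{\mathbb{R}^3}G_{|\mu_i|}(x-y)\big(\rho*|\cdot|^{-\alpha}\big)(y)\,u_i(y)\,dy,
\end{equation*}
where $G_\nu$ is the kernel of $(\sqrt{-\Delta}+\nu)^{-1}$ on $\mathbb{R}^3$. The subordination identity $G_\nu(x)=c\int_0^\infty e^{-\nu t}\,t(t^2+|x|^2)^{-2}\,dt$ yields the two-sided bound $G_\nu(x)\le C\min\{|x|^{-2},\nu^{-2}|x|^{-4}\}$ together with the matching lower bound $G_\nu(x)\ge c\,\nu^{-2}|x|^{-4}$ for $|x|\ge 1/\nu$. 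The polynomial decay rate of order $4$ at infinity is the precise source of the exponent appearing in the theorem.

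For the upper bound on $|u_i|$ I would proceed in three stages. First, a classical bootstrap through the integral identity, combined with Hardy-Littlewood-Sobolev to handle $\rho*|\cdot|^{-\alpha}$, upgrades $u_i\in H^{1/2}(\mathbb{R}^3)\subset L^3(\mathbb{R}^3)$ to $u_i\in L^\infty(\mathbb{R}^3)$; together with $u_i\in L^2$ this gives $\rho\in L^1\cap L^\infty(\mathbb{R}^3)$. Second, to establish $(\rho*|\cdot|^{-\alpha})(x)\le C(1+|x|)^{-\alpha}$, I would split the convolution at $|y|=|x|/2$: on the inner region $|x-y|\ge|x|/2$ so the contribution is bounded by $2^\alpha|x|^{-\alpha}\|\rho\|_1$, while on $|y|>|x|/2$ the inductively-improving pointwise decay of $\rho$, together with the local integrability of $|\cdot|^{-\alpha}$ in $\mathbb{R}^3$, controls the remainder by an even smaller quantity. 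Third, I would iterate the integral identity for $u_i$: assuming $|u_i(y)|\le C(1+|y|)^{-k}$ with $k\in[0,4)$, the same dyadic splitting at $|y|=|x|/2$ combined with $G_{|\mu_i|}(x)\lesssim|x|^{-4}$ at infinity improves the decay rate by a fixed positive increment per iteration, until it saturates at the ceiling $k=4$ set by $G_{|\mu_i|}$.

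For the lower bound on $u_1$ I would first invoke ground-state positivity: the inequality $(f,\sqrt{-\Delta}f)\ge(|f|,\sqrt{-\Delta}|f|)$, which follows from the Poisson semigroup representation of $e^{-t\sqrt{-\Delta}}$, shows that $|u_1|$ is again a minimizer of the Rayleigh quotient for the linear operator $H:=\sqrt{-\Delta}-\frac{2}{\alpha}\rho*|\cdot|^{-\alpha}$; a strong-maximum-principle argument for $H$ then yields $u_1>0$ (up to a global sign). With positivity in hand, and with every factor in the integral representation nonnegative, fixing any $R>0$ and taking $|x|$ large gives
\begin{equation*}
u_1(x)\ge\int_{|y|\le R}G_{|\mu_1|}(x-y)\cdot\frac{2}{\alpha}\big(\rho*|\cdot|^{-\alpha}\big)(y)\,u_1(y)\,dy\ge\frac{c}{|x|^4}\int_{|y|\le R}\big(\rho*|\cdot|^{-\alpha}\big)u_1\,dy,
\end{equation*}
where the last lower bound uses $G_{|\mu_1|}(x-y)\ge c\,|\mu_1|^{-2}|x|^{-4}$ uniformly for $|y|\le R$ and $|x|$ large. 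Choosing $R$ so that the remaining integral is strictly positive concludes $u_1(x)\ge c'|x|^{-4}$.

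The principal technical obstacle lies in the coupled bootstrap: one must simultaneously track how the decay of $u_i$, of $\rho$, and of the convolution potential $\rho*|\cdot|^{-\alpha}$ feed into one another at each iteration, and verify that the resulting exponent increments close the induction precisely at $k=4$ (neither undershooting nor trying to exceed the ceiling imposed by $G_{|\mu_i|}$); the fact that $|\cdot|^{-\alpha}$ contributes at most $\alpha\le 1$ units of extra decay per step forces several iterations before saturation. A secondary subtlety is that the positivity argument for $u_1$, and the Perron-type application of the strong maximum principle, rely crucially on the observation that although (\ref{A.1}) is a coupled system, every $u_i$ is an eigenfunction of the \emph{same} linear self-adjoint operator $H$, so the standard spectral picture for $H$ is available.
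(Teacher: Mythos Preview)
Your proposal is correct, but it proceeds quite differently from the paper. The paper's argument is a two-step black-box approach: first a Sobolev-space bootstrap (Lemma~\ref{A.2} and its $H^{3/2}$ analogue) shows that the nonlinearity maps $H^{1/2}\to H^{1/2}$ and $H^{3/2}\to H^{3/2}$, so iterating the resolvent identity yields $u_i\in H^{5/2}(\mathbb{R}^3)\hookrightarrow C^{1,\beta}$; this gives continuity and qualitative decay to zero of $u_i$ and of $\rho*|\cdot|^{-\alpha}$. At that point the paper simply invokes Propositions~IV.1 and~IV.3 of Carmona--Masters--Simon \cite{cms} as a black box to obtain the sharp $|x|^{-4}$ upper and lower bounds, and a one-line Newton-type estimate for the convolution decay.

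Your route instead unpacks the content of \cite{cms} directly: you write down the explicit kernel of $(\sqrt{-\Delta}+\nu)^{-1}$ via Poisson subordination, read off its $|x|^{-4}$ two-sided bound at infinity, and then run a pointwise bootstrap on the integral equation to saturate at the ceiling $k=4$. This is more self-contained and makes transparent exactly where the exponent $4$ comes from, at the cost of having to manage the coupled iteration you flag as the main technical obstacle. Your dyadic splitting for the convolution bound is also more robust than the paper's appeal to ``Newton's theorem,'' which strictly speaking is tailored to the harmonic case $\alpha=1$ and radial densities. Both arguments are sound; yours is the hands-on version of what the paper outsources to \cite{cms}.
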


This proof follows the ideas from \cite{fl,le}. Before approaching theorem \ref{tha}, we first pose the regularity of solutions.

\begin{lemma}\label{A.2}
Assume $F=\frac{2}{\alpha}\rho*|x|^{-\alpha}$ be a mapping, where $\rho=\sum_{i=1}^N \beta_i|u_i|^2$, $F$ maps $H^\frac{1}{2}(\mathbb{R}^3)$ into itself.
\end{lemma}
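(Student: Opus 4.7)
My plan is to show that multiplication by $F$ is a bounded operator on $H^{1/2}(\mathbb{R}^3)$. The strategy proceeds in two stages: first I establish that $F$ is a bounded, sufficiently smooth function; second, I use a fractional Leibniz (Kato-Ponce) inequality to transfer the regularity of $F$ and of $u$ onto the product $Fu$.

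For the first stage, I will exploit that each $u_i \in H^{1/2}(\mathbb{R}^3) \hookrightarrow L^p(\mathbb{R}^3)$ for $2 \le p \le 3$, so $\rho = \sum_{i=1}^N \beta_i |u_i|^2 \in L^1 \cap L^{3/2}(\mathbb{R}^3)$, and hence $\rho \in L^p$ for every $p \in [1, 3/2]$ by interpolation. Splitting the kernel as $|x|^{-\alpha} = \mathbf{1}_{\{|x|\le 1\}}|x|^{-\alpha} + \mathbf{1}_{\{|x|> 1\}}|x|^{-\alpha}$, the far-field piece lies in $L^\infty$ so its convolution with $\rho \in L^1$ is bounded, while the near-field piece lies in $L^3$ when $0<\alpha<1$ (and in weak-$L^3$ when $\alpha=1$). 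Young's inequality in the balanced form $L^{3/2} * L^3 \hookrightarrow L^\infty$ (or Hardy-Littlewood-Sobolev at the endpoint $\alpha=1$) then yields $F \in L^\infty(\mathbb{R}^3)$, which already gives $\|Fu\|_{L^2} \le \|F\|_\infty \|u\|_{L^2}$.

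For the fractional part $(-\Delta)^{1/4}(Fu)$, I plan to invoke the fractional Leibniz inequality
\begin{equation*}
\|(-\Delta)^{1/4}(Fu)\|_{L^2} \lesssim \|F\|_{L^\infty} \|(-\Delta)^{1/4} u\|_{L^2} + \|(-\Delta)^{1/4} F\|_{L^3} \|u\|_{L^6}.
\end{equation*}
The first term is already controlled. For the second, note that up to constants $F = c_\alpha\, I_{3-\alpha}\rho$, where $I_\beta$ denotes the Riesz potential of order $\beta$, so $(-\Delta)^{1/4} F = c_\alpha'\, I_{(5-2\alpha)/2}\rho$. Hardy-Littlewood-Sobolev then maps $L^{6/(7-2\alpha)}$ into $L^3$, and since $6/(7-2\alpha) \in [6/7,\, 6/5] \subset [1,\, 3/2]$, this exponent falls inside the integrability range of $\rho$ established above. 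Combined with the embedding $H^{1/2}(\mathbb{R}^3) \hookrightarrow L^6(\mathbb{R}^3)$, this delivers $\|Fu\|_{H^{1/2}} \lesssim \|u\|_{H^{1/2}}$.

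The main obstacle I anticipate is the borderline case $\alpha=1$, where both the convolution kernel and the HLS exponents become critical, so the bound $F \in L^\infty$ must be argued via the weak-type endpoint and the Riesz-potential computation for $(-\Delta)^{1/4} F$ needs to be checked at $p=6/5$. If the authors prefer to avoid Kato-Ponce as a black box, an equivalent route is to write $(-\Delta)^{1/4}$ as a singular integral and estimate the commutator $[(-\Delta)^{1/4},F]$ directly, using that $\nabla F \in L^3$ (obtained by the same HLS argument applied to $\rho * |x|^{-\alpha-1}$); this replaces the fractional Leibniz rule by a commutator bound and a Hardy-type estimate.
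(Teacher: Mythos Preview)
Your overall strategy---show $F\in L^\infty$ and then apply a fractional Leibniz (Kato--Ponce) estimate---is exactly the route the paper takes. However, the particular pair of H\"older exponents you choose in the Leibniz step is wrong and makes the argument break down.

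Concretely, in $\mathbb{R}^3$ the Sobolev embedding for $H^{1/2}$ is $H^{1/2}(\mathbb{R}^3)\hookrightarrow L^p(\mathbb{R}^3)$ only for $2\le p\le 3$; the critical exponent is $3$, not $6$. So your term $\|u\|_{L^6}$ is not controlled by $\|u\|_{H^{1/2}}$. Relatedly, your HLS step asks for $\rho\in L^{6/(7-2\alpha)}$, but for $0<\alpha\le 1$ this exponent ranges over $(6/7,\,6/5]$, and once $\alpha<1/2$ it drops below $1$, where neither the $L^p$ scale of $\rho$ nor HLS applies. Thus for small $\alpha$ both ingredients of your second term fail simultaneously.

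The fix is simply to swap the H\"older exponents: use
\[
\|(-\Delta)^{1/4}(Fu)\|_{L^2}\ \lesssim\ \|F\|_{L^\infty}\|(-\Delta)^{1/4}u\|_{L^2}+\|(-\Delta)^{1/4}F\|_{L^6}\,\|u\|_{L^3},
\]
which is precisely the form the paper employs. Then $\|u\|_{L^3}$ is controlled by $\|u\|_{H^{1/2}}$, and $(-\Delta)^{1/4}F=c\,|x|^{-(1+2\alpha)/2}*\rho$, so HLS gives $\|(-\Delta)^{1/4}F\|_{L^6}\lesssim\|\rho\|_{L^{3/(3-\alpha)}}$ with $3/(3-\alpha)\in(1,3/2]$ for all $0<\alpha\le 1$---safely inside the available range for $\rho$. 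For the $L^\infty$ bound on $F$ (including the endpoint $\alpha=1$) the paper bypasses the weak-type issue you anticipate by invoking the Hardy--Kato inequality $\sup_x\int|x-y|^{-\alpha}\rho(y)\,dy\le C\|(-\Delta)^{\alpha/4}\sqrt{\rho}\|_{L^2}^2$, which is cleaner than the kernel-splitting argument.
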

\begin{proof}
By H\"older inequality, Hardy-Kato inequality and Theorem 2.5 in \cite{h},
\begin{equation*}
\|(\rho*|x|^{-\alpha})u\|_2\le \|\rho*|x|^{-\alpha}\|_{\infty}\|u\|_2\le C\|(-\Delta)^{\frac{\alpha}{4}}\sqrt \rho\|_2\|u\|_2\le C\|(-\Delta)^{\frac{1}{4}}\sqrt \rho\|_2\|u\|_2\le C\|u\|_2.
\end{equation*}
In addition, since $0<\frac{3}{3-\alpha}<\frac{3}{2}$,, we deduce from  the generalized Leibniz rule that
\begin{equation*}
\begin{aligned}
\|(-\Delta)^{\frac{1}{4}}[(\rho*|x|^{-\alpha})u]\|_2&\le C\|(-\Delta)^{\frac{1}{4}}(\rho*|x|^{-\alpha})\|_6 \|u\|_3+\|\rho* |x|^{-\alpha}\|_{\infty} \|(-\Delta)^{\frac{1}{4}}u\|_2\\
&\le C\|(-\Delta)^{\frac{1}{4}-\frac{3-\alpha}{2}}\rho\|_6\|u\|_3+C\|(-\Delta)^{\frac{1}{4}}\sqrt{\rho}\|_2\|u\|_{H^\frac{1}{2}}\\
&\le C\| |x|^{-\frac{1+2\alpha}{2}}*\rho\|_6\|u\|_{H^\frac{1}{2}}+C\|u\|_{H^\frac{1}{2}}\\
&\le C\|\rho\|_{\frac{3}{3-\alpha}}\|u\|_{H^\frac{1}{2}}+C\|u\|_{H^\frac{1}{2}}\le C\|u\|_{H^\frac{1}{2}}.
\end{aligned}
\end{equation*}
\end{proof}

Assume that $u_i$ satisfies \eqref{A.1} with $\mu_i<0$, i.e.,
\begin{equation*}
u_i=(\sqrt{-\Delta}-\mu_i)^{-1}\Big[\frac{2}{\alpha}(\rho*|x|^{-\alpha})u_i\Big]\in H^{\frac{3}{2}}(\mathbb{R}^3),
\end{equation*}
 it then follows from inequality of Bessel's operator and Lemma \ref{A.2},
\begin{equation*}
\rho=\sum_{i=1}^N\beta_i|u_i|^2\in W^{\frac{3}{2},1}(\mathbb{R}^3).
\end{equation*}

\begin{lemma}
$F$ defined in Lemma \ref{A.2} maps $H^\frac{3}{2}(\mathbb{R}^3)$ into itself.
\end{lemma}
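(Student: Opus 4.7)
The argument will parallel that of Lemma A.2 with the fractional derivative $(-\Delta)^{1/4}$ replaced by $(-\Delta)^{3/4}$ throughout. The $L^2$ part $\|(\rho*|x|^{-\alpha})u\|_2 \le \|\rho*|x|^{-\alpha}\|_\infty\|u\|_2 \le C\|u\|_2$ is inherited verbatim, since $H^{3/2}(\mathbb{R}^3)\subset L^2(\mathbb{R}^3)$ and the sup-norm bound $\|\rho*|x|^{-\alpha}\|_\infty\le C\|(-\Delta)^{1/4}\sqrt\rho\|_2$ from Lemma A.2 depends only on the $H^{1/2}$-energy of the $u_i$, which is already finite under the weaker hypothesis used there.

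For the homogeneous $H^{3/2}$-seminorm I would invoke the fractional Leibniz (Kato-Ponce) inequality:
\begin{equation*}
\|(-\Delta)^{3/4}[(\rho*|x|^{-\alpha})u]\|_2 \le C\|(-\Delta)^{3/4}(\rho*|x|^{-\alpha})\|_p\|u\|_q + C\|\rho*|x|^{-\alpha}\|_\infty\|(-\Delta)^{3/4}u\|_2,
\end{equation*}
with a dual pair $\frac{1}{p}+\frac{1}{q}=\frac{1}{2}$. The second summand is finite from the boundedness of $\rho*|x|^{-\alpha}$ and the assumption $u\in H^{3/2}(\mathbb{R}^3)$. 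For the first summand I would use the same Riesz-kernel identity already exploited in Lemma A.2:
\begin{equation*}
(-\Delta)^{3/4}(\rho*|x|^{-\alpha}) = c_\alpha (-\Delta)^{3/4-(3-\alpha)/2}\rho = \tilde c_\alpha\,\rho*|x|^{-(3/2+\alpha)},
\end{equation*}
which is legitimate because $0<3/2+\alpha<3$ whenever $\alpha\in(0,1]$. Hardy-Littlewood-Sobolev then yields $\|\rho*|x|^{-(3/2+\alpha)}\|_p\le C\|\rho\|_s$ whenever $\frac{1}{s}-\frac{3/2-\alpha}{3}=\frac{1}{p}$.

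A concrete admissible triple is $(p,q,s)=\bigl(6,\,3,\,\frac{3}{2-\alpha}\bigr)$: one has $u\in L^3(\mathbb{R}^3)$ from the Sobolev embedding $H^{3/2}(\mathbb{R}^3)\hookrightarrow L^3(\mathbb{R}^3)$, and $\rho=\sum_{i=1}^N\beta_i|u_i|^2\in L^s(\mathbb{R}^3)$ since each $u_i\in H^{3/2}(\mathbb{R}^3)\hookrightarrow L^{2s}(\mathbb{R}^3)$ for every finite $s\ge 1$. The only delicacy is verifying the compatibility of the HLS admissibility window with the Kato-Ponce duality constraint, but this range is open throughout $\alpha\in(0,1]$, so no new obstruction appears beyond those already handled in Lemma A.2. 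Combining the two bounds gives $(\rho*|x|^{-\alpha})u\in H^{3/2}(\mathbb{R}^3)$, hence $F$ sends $H^{3/2}(\mathbb{R}^3)$ into itself, which is the desired conclusion.
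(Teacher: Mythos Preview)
Your proof is correct and follows essentially the same approach as the paper: fractional Leibniz with the exponent pair $(p,q)=(6,3)$, the Riesz-kernel identity for $(-\Delta)^{3/4}(\rho*|x|^{-\alpha})$, and then Hardy--Littlewood--Sobolev together with the embedding $H^{3/2}(\mathbb{R}^3)\hookrightarrow L^q$ for all finite $q$. Your version is in fact slightly tidier than the paper's, which first passes through $\|(-\Delta)^{(2\alpha-1)/4}\rho\|_2$ via Sobolev and then splits into cases according to the sign of $2\alpha-1$, whereas your direct HLS bound $\|\rho*|x|^{-(3/2+\alpha)}\|_6\le C\|\rho\|_{3/(2-\alpha)}$ handles all $\alpha\in(0,1]$ uniformly.
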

\begin{proof} We just  need to check that $(-\Delta)^\frac{3}{4}(\rho*|x|^{-\alpha})u\in L^2(\mathbb{R}^3)$, indeed,
\begin{equation*}
\begin{aligned}
\|(-\Delta)^\frac{3}{4}(\rho*|x|^{-\alpha})u\|_2&\le C\|(-\Delta)^{\frac{3}{4}}(\rho*|x|^{-\alpha})\|_6 \|u\|_3+\|\rho* |x|^{-\alpha}\|_{\infty} \|(-\Delta)^{\frac{3}{4}}u\|_2\\
&\le C\|(-\Delta)^{\frac{3}{4}-\frac{3-\alpha}{2}}\rho\|_6\|u\|_3+C\|(-\Delta)^{\frac{1}{4}}\sqrt{\rho}\|_2\|u\|_{H^\frac{3}{2}}\\
&\le C\|(-\Delta)^{\frac{2\alpha-1}{4}}\rho\|_2\|u\|_{H^\frac{3}{2}}+C\|u\|_{H^\frac{3}{2}}.
\end{aligned}
\end{equation*}
If $2\alpha-1\ge 0$, then $$\|(-\Delta)^{\frac{2\alpha-1}{4}}\rho\|_2\le C\|(-\Delta)^{\frac{2\alpha-1}{4}}\sqrt{\rho}\|_6\|\sqrt{\rho}\|_3\le C\|(-\Delta)^{\frac{2\alpha+1}{4}}\sqrt{\rho}\|_2\le C\sum_{i=1}^N\beta_i\|u_i\|_{H^{\frac{3}{4}}}^2\le C,$$
otherwise, we obtain from $\frac{3}{2}<\frac{3}{2-\alpha}<2$ that 
\begin{equation*}
\begin{aligned}
\|(-\Delta)^{\frac{2\alpha-1}{4}}\rho\|_2&\le C\|\rho*|x|^{-\frac{5+2\alpha}{2}} \|_2\le C\|\rho\|_{\frac{3}{2-\alpha}}
\le C\|\rho\|_{W^{\frac{3}{2},1}}\le C.
\end{aligned}
\end{equation*}
Therefore, there holds that 
\begin{equation*}
\|(-\Delta)^\frac{3}{4}(\rho*|x|^{-\alpha})u\|_2\le C\|u\|_{H^\frac{3}{2}}.
\end{equation*}
\end{proof}

\noindent\textit{Proof of Theorem \ref{A.1}.}
From the argument above, we can iterate that the solution of (\ref{A.1}) $u_i\in H^{\frac{5}{2}}(\mathbb{R}^3)$, and then $u_i\in C^{1,\beta}$ for some $\beta>0$ by applying  Sobolev's inequalities. This implies that  $|u_i(x)|\to 0$ as $|x|\to \infty$ and $(\rho*|x|^{-\alpha})(x)\to 0$ as $|x|\to \infty$. From proposition IV.1 in \cite{cms}, we deduce
\begin{equation*}
|u_i(x)|\le \frac{C}{1+|x|^4}.
\end{equation*}
By Newton's theorem and $\rho=\sum_{i=1}^N\beta_iu_i^2$, we get
\begin{equation*}
\rho*|x|^{-\alpha}\le \frac{1}{1+|x|^{\alpha}}\int_{\mathbb{R}^3}\frac{1}{1+|x|^8} dx\le \frac{C}{1+|x|^{\alpha}}.
\end{equation*}
In addition, applying Proposition IV.3 in \cite{cms} to $u_1$, we obtain
\begin{equation*}
u_1(x)\ge \frac{C}{1+|x|^4}.
\end{equation*}\qed

\begin{theorem}
Let $(u_1,...,u_N)$ be a solution of system (\ref{A.1}), then for $1\le i\le N$, $u_i$ satisfies
\begin{equation}\label{eq-Poho1}
\int_{\mathbb{R}^3}|(-\Delta)^{\frac{1}{4}}u_i|^2dx=\frac{6-\alpha}{\alpha}\int_{\mathbb{R}^3}(|x|^{-\alpha}*\rho)u_i^2dx+\frac{3}{2}\int_{\mathbb{R}^3}\mu_iu_i^2dx+\frac{1}{\alpha}\int_{\mathbb{R}^3}|x|^{-\alpha}*(x\cdot \nabla \rho)u_i^2dx.
\end{equation}
Moreover, system (\ref{A.1}) satisfies
\begin{equation}\label{eq-Poho2}
{\rm Tr}(\sqrt{-\Delta}\gamma)=\frac{6-\alpha}{2\alpha}\int_{\mathbb{R}^3}(|x|^{-\alpha}*\rho)\rho dx+\frac{3}{2}\sum_{i=1}^N\int_{\mathbb{R}^3}\mu_i \alpha_i u_i^2dx
\end{equation}
and
\begin{equation*}
{\rm Tr}(\sqrt{-\Delta}\gamma)=\int_{\mathbb{R}^3}(\rho*|x|^{-\alpha})\rho dx.
\end{equation*}
\end{theorem}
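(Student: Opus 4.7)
The strategy is the classical Pohozaev argument, adapted to the nonlocal operator $\sqrt{-\Delta}$ and the Hartree-type nonlinearity. I would test each equation in (A.1) against the dilation generator $x\cdot\nabla u_i$ and integrate over $\mathbb{R}^3$. The polynomial decay $|u_i(x)|\le C(1+|x|)^{-4}$ together with the $C^{1,\beta}$ regularity established in the proof of Theorem~\ref{tha} above makes all integrations by parts rigorous. Three ingredients then appear: the fractional Pohozaev identity
\begin{equation*}
\int_{\mathbb{R}^3}(x\cdot\nabla u_i)\sqrt{-\Delta}\,u_i\,dx=-\int_{\mathbb{R}^3}|(-\Delta)^{1/4}u_i|^2\,dx
\end{equation*}
(which is the standard Ros-Oton--Serra identity at $d=3$, $s=1/2$, provable via the Caffarelli--Silvestre extension); the mass identity $\int u_i(x\cdot\nabla u_i)\,dx=-\tfrac{3}{2}\int u_i^2\,dx$ obtained from $u_i(x\cdot\nabla u_i)=\tfrac{1}{2}x\cdot\nabla(u_i^2)$; and an analogous integration by parts for the nonlinear term.

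The key algebraic input is the homogeneity identity for the Riesz convolution: because $|x|^{-\alpha}$ is homogeneous of degree $-\alpha$, either a direct scaling computation or the decomposition $x=(x-y)+y$ inside the convolution followed by an integration by parts in $y$ yields
\begin{equation*}
x\cdot\nabla(\rho*|x|^{-\alpha})=(3-\alpha)(\rho*|x|^{-\alpha})+(y\cdot\nabla\rho)*|x|^{-\alpha}.
\end{equation*}
Substituting this into the integration-by-parts expression for $\tfrac{2}{\alpha}\int(\rho*|x|^{-\alpha})u_i(x\cdot\nabla u_i)\,dx=\tfrac{1}{\alpha}\int(\rho*|x|^{-\alpha})x\cdot\nabla(u_i^2)\,dx$ and collecting terms produces exactly the Pohozaev identity \eqref{eq-Poho1} for each $u_i$.

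To pass to the trace identity \eqref{eq-Poho2}, I multiply \eqref{eq-Poho1} by $\beta_i$ and sum in $i$, which produces $\T(\sqrt{-\Delta}\gamma)$ on the left. The delicate term $\int[(y\cdot\nabla\rho)*|x|^{-\alpha}]\rho\,dx$ on the right is reduced to the ordinary Hartree energy by a self-referential trick: writing $w:=\rho*|x|^{-\alpha}$ and integrating by parts gives $\int(y\cdot\nabla\rho)w\,dy=-3\int\rho w\,dy-\int\rho(y\cdot\nabla w)\,dy$, and plugging the homogeneity identity into $y\cdot\nabla w$ yields
\begin{equation*}
2\int(y\cdot\nabla\rho)w\,dy=-(6-\alpha)\int\rho w\,dy,
\end{equation*}
which exactly accounts for the coefficient $\tfrac{6-\alpha}{2\alpha}$ in the theorem. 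For the final identity $\T(\sqrt{-\Delta}\gamma)=\int(\rho*|x|^{-\alpha})\rho\,dx$, I would separately test (A.1) against $u_i$, weighted by $\beta_i$ and summed, to get the Nehari-type relation $\T(\sqrt{-\Delta}\gamma)=\tfrac{2}{\alpha}\int(\rho*|x|^{-\alpha})\rho\,dx+\sum_i\beta_i\mu_i\|u_i\|_2^2$, then eliminate $\sum_i\beta_i\mu_i\|u_i\|_2^2$ between this and the trace Pohozaev identity; two linear equations in two unknowns collapse to the desired identification.

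The main obstacle is twofold. First, rigorously justifying $x\cdot\nabla u_i$ as a test function against the nonlocal operator $\sqrt{-\Delta}$: one either truncates with a cutoff $\chi_R(x)=\chi(x/R)$ and uses the $(1+|x|)^{-4}$ decay of both $u_i$ and $\sqrt{-\Delta}u_i=\tfrac{2}{\alpha}(\rho*|x|^{-\alpha})u_i+\mu_i u_i$ (the latter bounded by $C/(1+|x|)^{4+\alpha}$ by Theorem~\ref{tha}) to pass $R\to\infty$, or lifts to $\mathbb{R}^4_+$ via the harmonic extension and applies a local Pohozaev argument to the extended function. Second, organizing the self-referential convolution identities cleanly enough that the exact coefficient $\tfrac{6-\alpha}{2\alpha}$ emerges; this is the computation where a bookkeeping error is most likely.
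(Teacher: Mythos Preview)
Your proposal is correct and follows essentially the same route as the paper's proof. The paper makes the rigorous justification precisely via your second option: it lifts to the harmonic extension in $\mathbb{R}^4_+$, applies the local Pohozaev identity on half-balls $D_{R,\delta}^+$, and sends $\delta\to0$ and $R=R_n\to\infty$ along a sequence where the boundary terms vanish (using $u_i,\,\nabla w\in L^2$); the convolution homogeneity identity, the self-referential computation $2\int(y\cdot\nabla\rho)(\rho*|x|^{-\alpha})\,dy=-(6-\alpha)\int\rho(\rho*|x|^{-\alpha})\,dy$, and the elimination against the Nehari relation are exactly as you describe.
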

\begin{proof}
According to Theorem \ref{tha}, we know that $u_i\in C^{1,\beta_1}(\mathbb{R}^3)$ $(0<\beta_1<1)$. By similar arguments as in Lemma 6 in \cite{psv}, we obtain $u_i\in C^{2,\beta_2}(\mathbb{R}^3)$ ( $0<\beta_2<1$). From the classical results of 
\cite{cs}, we can  transform the nonlocal problem (\ref{A.1}) into local problem as follows
\begin{equation}\label{B1}
\begin{cases}
-\Delta w(x,y))=0,\quad &\text{in}\  \mathbb{R}^3\times\{y\ge0\},\\
\frac{\partial w(x,y)}{\partial \nu}=(\frac{2}{\alpha}|x|^{-\alpha}*\rho)w+\mu_i w,\quad &\text{on}\  \mathbb{R}^3\times\{y=0\},\\
w(x,0)=u_i(x),\quad &\text{on}\  \mathbb{R}^3\times\{y=0\}.
\end{cases}
\end{equation}

From basic theory of Harmonic function, we know $w$ is uniquely determined by boundary value $u$ and $w\in C^2(\mathbb{R}^4_+)$ due to $u\in C^{2,\beta_2}(\mathbb{R}^3)$. Then, we follow the ideas from Proposition 4.1 in \cite{cw} to define
\begin{equation*}
D_{R,\delta}^+=\{z=(x,y)\in \mathbb{R}^3\times [\delta, +\infty): |z|^2\le R^2\},
\end{equation*}
and its boundary
\begin{equation*}
\begin{aligned}
\partial D_{R,\delta}^1&=\{z=(x,y)\in \mathbb{R}^3\times \{y=\delta\}: |x|^2\le R^2-\delta^2\},\\
\partial D_{R,\delta}^2&=\{z=(x,y)\in \mathbb{R}^3\times [\delta,\infty): |z|^2= R^2\}.
\end{aligned}
\end{equation*}
Then we have from (\ref{B1}) that
\begin{equation}\label{B2}
\begin{aligned}
0=&\int_{D_{R,\delta}^+}\Delta w(x,y)dxdy\\
=&\int_{\partial D_{R,\delta}^1}\Big[(x,\nabla_x w)\frac{\partial w(x,y)}{\partial \nu}+\frac{y}{2}|\nabla w|^2\Big]d\sigma+\int_{\partial D_{R,\delta}^2}\Big[\frac{1}{R}|(z,\nabla w)|^2
-\frac{R}{2}|\nabla w|^2\Big]d\sigma +\int_{D_{R,\delta}^+}|\nabla w|^2dz\\
=&I+II+III.
\end{aligned}
\end{equation}
Letting $\delta \to 0$, we get
\begin{equation*}
\begin{aligned}
\lim_{\delta\to 0}&\int_{\partial D_{R,\delta}^1}\Big[(x,\nabla_x w)\frac{\partial w(x,y)}{\partial \nu}\Big]d\sigma 
= \int_{B_R}(\frac{2}{\alpha}|x|^{-\alpha}*\rho u_i+\mu_i u_i)(x,\nabla u_i)dx\\
=&\int_{B_R}\Big[\frac{1}{2}{\rm div}(\frac{2}{\alpha}|x|^{-\alpha}*\rho xu_i^2+\mu_i xu_i^2)
-\frac{1}{\alpha}x\cdot \nabla (|x|^{-\alpha}*\rho)u_i^2-\frac{3}{\alpha}(|x|^{-\alpha}*\rho)u_i^2-\frac{3}{2}\mu_i u_i^2\Big]dx\\
=&R\int_{\partial B_R}\big[\frac{1}{\alpha}(|x|^{-\alpha}*\rho)u_i^2+\frac{1}{2}\mu_i u_i^2\big]dS-\frac{1}{2}\int_{B_R}\Big[3\mu_i u_i^2
+\frac{6}{\alpha}(|x|^{-\alpha}*\rho)u_i^2+\frac{2}{\alpha}\Big(x\cdot \nabla (|x|^{-\alpha}*\rho)\Big) u_i^2\Big]dx.
\end{aligned}
\end{equation*}
It follows from
$\lim_{\delta\to 0}\int_{\partial D_{R,\delta}^1}\frac{y}{2}|\nabla w|^2d\sigma=0$
that
\begin{equation*}
\begin{aligned}
I=&R\int_{\partial B_R}\big[\frac{1}{\alpha}(|x|^{-\alpha}*\rho)u_i^2+\frac{1}{2}\mu_i u_i^2\big]dS
-\frac{1}{2}\int_{B_R}\Big[3\mu_i u_i^2
+\frac{6}{\alpha}(|x|^{-\alpha}*\rho)u_i^2+\frac{2}{\alpha}(x\cdot \nabla (|x|^{-\alpha}*\rho)) u_i^2\Big]dx.
\end{aligned}
\end{equation*}

Next, we claim that there exists a consequence $\{R_n\}$ satisfying $R_n\to \infty$ as $n\to \infty$, such that
\begin{equation*}
\lim_{n\to \infty}R_n\int_{\partial B_{R_n}}\Big[\frac{2}{\alpha}(|x|^{-\alpha}*\rho)u_i^2+\mu_iu_i^2\Big]dS=0,
\end{equation*}
and
\begin{equation}\label{B5}
\lim_{n\to \infty} \int_{\partial D_{R_n,\delta}^2}[\frac{1}{R_n}|(z,\nabla w)|^2-\frac{R_n}{2}|\nabla w|^2]d\sigma =0,\quad \forall \delta>0.
\end{equation}
 Set $G_i:=\frac{2}{\alpha}(|x|^{-\alpha}*\rho)u_i^2+\mu_iu_i^2
$, it suffices to prove that
\begin{equation}\label{eq-B6}
\lim_{n\to \infty}R_n\int_{\partial B_{R_n}}|G_i|dS=\lim_{n\to \infty}R_n \int_{\partial D_{R_n,\delta}^2}|\nabla w|^2d\sigma=0,
\end{equation}
since
\begin{equation*}
\int_{\partial B_{R_n}}|G_i|dS\ge \int_{\partial B_{R_n}}G_idS,
\end{equation*}
and
\begin{equation*}
\begin{aligned}
\int_{\partial D_{R_n,\delta}^2}[\frac{1}{R_n}|(z,\nabla w)|^2-\frac{R_n}{2}|\nabla w|^2]d\sigma \le& \int_{\partial D_{R_n,\delta}^2} (\frac{|z|^2|\nabla w|^2}{R_n})-\frac{R_n|\nabla w|^2}{2}d\sigma
= \frac{R_n}{2}\int_{\partial D_{R_n,\delta}^2}|\nabla w|^2d\sigma.
\end{aligned}
\end{equation*}
If \eqref{eq-B6} fails, we  may assume that 
\begin{equation*}
\liminf_{R\to \infty} R\int_{\partial B_R}|G_i|d\sigma =\tau>0.
\end{equation*}
There exists $R_1>0$ large enough, such that for all $R\ge R_1$,
\begin{equation*}
\int_{\partial B_R}|G_i|d\sigma \ge \frac{\tau}{R},
\end{equation*}
which implies that
\begin{equation*}
\int_{\mathbb{R}^3}|G_i|dx\ge \int_{R_1}^{\infty}\frac{\tau}{R}dR\to \infty,\quad as\ R\to\infty.
\end{equation*}
This contradicts the facts that  $\sqrt{\rho}, u_i\in H^{\frac{1}{2}}(\mathbb{R}^3)$. Similarly,  we can prove that $\lim\limits_{n\to \infty}R_n \int_{\partial D_{R_n,\delta}^2}|\nabla w|^2d\sigma=0$ because $w\in \dot{H}^{1}(\mathbb{R}_+^{N+1})$. Claim \eqref{B5} is proved.

By (\ref{B2})-(\ref{B5}), we deduce that
\begin{equation*}
\int_{\mathbb{R}^3}|(-\Delta)^{\frac{1}{4}}u_i|^2dx=\frac{3}{2}\int_{\mathbb{R}^3}\Big[\frac{2}{\alpha}(|x|^{-\alpha}*\rho)u_i^2+\mu_i u_i^2\Big]dx+ \frac{1}{\alpha}\int_{\mathbb{R}^3}(x\cdot \nabla(|x|^{-\alpha}*\rho)u_i^2dx.
\end{equation*}
On the other hand, it follows from \cite{gz1} that 
\begin{equation}\label{B6}
\begin{aligned}
x\cdot \nabla (|x|^{-\alpha}*\rho)=x\cdot |x|^{-\alpha}*\nabla \rho
=|x|^{-\alpha}*(x\cdot \nabla\rho)+(3-\alpha)|x|^{-\alpha}*\rho,
\end{aligned}
\end{equation}
then we obtain \eqref{eq-Poho1}.
Moreover, by $\rho=\sum_{i=1}^N\beta_i|u_i|^2$, we know
\begin{equation*}
\text{Tr}(\sqrt{-\Delta}\gamma)=\frac{6-\alpha}{\alpha}\int_{\mathbb{R}^3}(|x|^{-\alpha}*\rho)\rho dx+\frac{3}{2}\sum_{i=1}^N\int_{\mathbb{R}^3}\mu_i\beta_i u_i^2dx +\frac{1}{\alpha}\int_{\mathbb{R}^3}|x|^{-\alpha}*(x\cdot \nabla \rho)\rho dx.
\end{equation*}
Since
\begin{equation*}
\begin{aligned}
\int_{\mathbb{R}^3}|x|^{-\alpha}*(x\cdot \nabla \rho)\rho dx
=&\int_{\mathbb{R}^3}(|x|^{-\alpha}*\rho)x\cdot \nabla \rho dx
=\int_{\mathbb{R}^3}(|x|^{-\alpha}*\rho)\rho dx-\int_{\mathbb{R}^3}\Big((|x|^{-\alpha}*\nabla \rho)\cdot x\Big) \rho dx\\
=&-\frac{6-\alpha}{2}\int_{\mathbb{R}^3}(|x|^{-\alpha}*\rho)\rho dx,
\end{aligned}
\end{equation*}
where we used (\ref{B6}) in  the last equality. The above two estimates indicate \eqref{eq-Poho2}. Furthermore, (\ref{A.1}) implies that
\begin{equation*}
\text{Tr}(\sqrt{-\Delta}\gamma)=\frac{2}{\alpha}\int_{\mathbb{R}^3}(|x|^{-\alpha}*\rho)\rho dx+\sum_{i=1}^N\int_{\mathbb{R}^3}\mu_i\beta_i u_i^2dx
\ \text{ and }\ 
\text{Tr}(\sqrt{-\Delta}\gamma)=\int_{\mathbb{R}^3}(\rho*|x|^{-\alpha})\rho dx.
\end{equation*}
\end{proof}

\end{document}